\title{Amortized Locally Decodable Codes for Insertions and Deletions}
\author{Jeremiah Blocki \and Justin Zhang}
\date{}
\begin{document}
% Commenting
\newcommand{\todo}[1]{\textcolor{red}{\underline{Todo}: #1}}
\newcommand{\jadd}[1]{\textcolor{PineGreen} {#1}}
\newcommand{\justin}[1]{\textcolor{blue} {\underline{Justin says:} #1}}
\newcommand{\jrem}[1]{\textcolor{purple} {\underline{Justin removed:} #1}}

\newcommand{\new}[1]{\textcolor{MidnightBlue}{#1}}
\newcommand{\jeremiah}[1]{\textcolor{red}{#1}}

\newcommand{\msf}[1]{\ensuremath{\mathsf{#1}}}
\newcommand{\mc}{\mathcal}

% calligraphic letters
\newcommand{\calA}{\mathcal{A}}
\newcommand{\calB}{\mathcal{B}}
\newcommand{\calC}{\mathcal{C}}
\newcommand{\calD}{\mathcal{D}}
\newcommand{\calE}{\mathcal{E}}
\newcommand{\calF}{\mathcal{F}}
\newcommand{\calG}{\mathcal{G}}
\newcommand{\calH}{\mathcal{H}}
\newcommand{\calI}{\mathcal{I}}
\newcommand{\calJ}{\mathcal{J}}
\newcommand{\calK}{\mathcal{K}}
\newcommand{\calL}{\mathcal{L}}
\newcommand{\calM}{\mathcal{M}}
\newcommand{\calN}{\mathcal{N}}
\newcommand{\calO}{\mathcal{O}}
\newcommand{\calP}{\mathcal{P}}
\newcommand{\calQ}{\mathcal{Q}}
\newcommand{\calR}{\mathcal{R}}
\newcommand{\calS}{\mathcal{S}}
\newcommand{\calT}{\mathcal{T}}
\newcommand{\calU}{\mathcal{U}}
\newcommand{\calV}{\mathcal{V}}
\newcommand{\calW}{\mathcal{W}}
\newcommand{\calX}{\mathcal{X}}
\newcommand{\calY}{\mathcal{Y}}
\newcommand{\calZ}{\mathcal{Z}}

% number systems
\newcommand{\R}{\mathbbm R}
\newcommand{\C}{\mathbbm C}
\newcommand{\N}{\mathbbm N}
\newcommand{\Z}{\mathbbm Z}
\newcommand{\F}{\mathbbm F}
\newcommand{\Q}{\mathbbm Q}

% Probability
\newcommand{\E}{\mathbbm E}

% bold
\newcommand{\bone}{\boldsymbol{1}}
\newcommand{\bbeta}{\boldsymbol{\beta}}
\newcommand{\bdelta}{\boldsymbol{\delta}}
\newcommand{\bepsilon}{\boldsymbol{\epsilon}}
\newcommand{\blambda}{\boldsymbol{\lambda}}
\newcommand{\bomega}{\boldsymbol{\omega}}
\newcommand{\bpi}{\boldsymbol{\pi}}
\newcommand{\bphi}{\boldsymbol{\phi}}
\newcommand{\bvphi}{\boldsymbol{\varphi}}
\newcommand{\bpsi}{\boldsymbol{\psi}}
\newcommand{\bsigma}{\boldsymbol{\sigma}}
\newcommand{\btheta}{\boldsymbol{\theta}}
\newcommand{\btau}{\boldsymbol{\tau}}
\newcommand{\ba}{\boldsymbol{a}}
\newcommand{\bb}{\boldsymbol{b}}
\newcommand{\bc}{\boldsymbol{c}}
\newcommand{\bd}{\boldsymbol{d}}
\newcommand{\be}{\boldsymbol{e}}
\newcommand{\boldf}{\boldsymbol{f}}
\newcommand{\bg}{\boldsymbol{g}}
\newcommand{\bh}{\boldsymbol{h}}
\newcommand{\bi}{\boldsymbol{i}}
\newcommand{\bj}{\boldsymbol{j}}
\newcommand{\bk}{\boldsymbol{k}}
\newcommand{\bell}{\boldsymbol{\ell}}
\newcommand{\bm}{\boldsymbol{m}}
\newcommand{\bn}{\boldsymbol{n}}
\newcommand{\bo}{\boldsymbol{o}}
\newcommand{\bp}{\boldsymbol{p}}
\newcommand{\bq}{\boldsymbol{q}}
\newcommand{\br}{\boldsymbol{r}}
\newcommand{\bs}{\boldsymbol{s}}
\newcommand{\bt}{\boldsymbol{t}}
\newcommand{\bu}{\boldsymbol{u}}
\newcommand{\bv}{\boldsymbol{v}}
\newcommand{\bw}{\boldsymbol{w}}
\newcommand{\bx}{{\boldsymbol{x}}}
\newcommand{\by}{\boldsymbol{y}}
\newcommand{\bz}{\boldsymbol{z}}
\newcommand{\bA}{\boldsymbol{A}}
\newcommand{\bB}{\boldsymbol{B}}
\newcommand{\bC}{\boldsymbol{C}}
\newcommand{\bD}{\boldsymbol{D}}
\newcommand{\bE}{\boldsymbol{E}}
\newcommand{\bF}{\boldsymbol{F}}
\newcommand{\bG}{\boldsymbol{G}}
\newcommand{\bH}{\boldsymbol{H}}
\newcommand{\bI}{\boldsymbol{I}}
\newcommand{\bJ}{\boldsymbol{J}}
\newcommand{\bK}{\boldsymbol{K}}
\newcommand{\bL}{\boldsymbol{L}}
\newcommand{\bM}{\boldsymbol{M}}
\newcommand{\bN}{\boldsymbol{N}}
\newcommand{\bP}{\boldsymbol{P}}
\newcommand{\bQ}{\boldsymbol{Q}}
\newcommand{\bR}{\boldsymbol{R}}
\newcommand{\bS}{\boldsymbol{S}}
\newcommand{\bT}{\boldsymbol{T}}
\newcommand{\bU}{\boldsymbol{U}}
\newcommand{\bV}{\boldsymbol{V}}
\newcommand{\bW}{\boldsymbol{W}}
\newcommand{\bX}{\boldsymbol{X}}
\newcommand{\bY}{\boldsymbol{Y}}
\newcommand{\bZ}{\boldsymbol{Z}}

\newcommand{\unif}{\overset{{\scriptscriptstyle\$}}{\leftarrow}}

\newcommand{\poly}{\text{poly}}
\newcommand{\polylog}{\text{polylog}}
\newcommand{\negl}{\text{negl}}
% \DeclarePairedDelimiter{\ceil}{\lceil}{\rceil}

% crypto
\newcommand{\Gen}{\ensuremath{\msf{Gen}}}
\newcommand{\Enc}{\ensuremath{\msf{Enc}}}
\newcommand{\Dec}{\ensuremath{\msf{Dec}}}
\newcommand{\sk}{\ensuremath{\msf{sk}}}
\newcommand{\pk}{\ensuremath{\msf{pk}}}

\makeatletter
\newcommand{\subalign}[1]{%
  \vcenter{%
    \Let@ \restore@math@cr \default@tag
    \baselineskip\fontdimen10 \scriptfont\tw@
    \advance\baselineskip\fontdimen12 \scriptfont\tw@
    \lineskip\thr@@\fontdimen8 \scriptfont\thr@@
    \lineskiplimit\lineskip
    \ialign{\hfil$\m@th\scriptstyle##$&$\m@th\scriptstyle{}##$\hfil\crcr
      #1\crcr
    }%
  }%
}
\makeatother

\newenvironment{weirdFrame}[1]
  {\mdfsetup{
    frametitle={\colorbox{white}{\space#1\space}},
    innertopmargin=0pt,
    skipabove=\topskip, skipbelow=\topskip,
    nobreak=true,
    frametitleaboveskip=-\ht\strutbox,
    frametitlealignment=\center
    }
  \begin{mdframed}%
  }
  {\end{mdframed}}

\newenvironment{weirdoFrame}[1]
  {\mdfsetup{
    frametitle={\colorbox{white}{\space#1\space}},
    innertopmargin=0pt,
    skipabove=\topskip, skipbelow=\topskip,
    nobreak=true,
    frametitleaboveskip=-\ht\strutbox,
    frametitlealignment=\center
    }
  \begin{mdframed}%
  }
  {\end{mdframed}}
\newcommand{\eps}{\varepsilon}
\newcommand{\inprod}[1]{\left\langle #1 \right\rangle}
\newcommand{\m}[1]{\begin{bmatrix}#1\end{bmatrix}}

% Paper Specific
\newcommand{\LDC}{\ensuremath{\msf{LDC}}}
\newcommand{\aLDC}{\ensuremath{\msf{aLDC}}}
\newcommand{\pLDC}{\ensuremath{\msf{pLDC}}}
\newcommand{\LCC}{\ensuremath{\msf{LCC}}}
\newcommand{\ham}{\msf{Ham}}
\newcommand{\Ham}{\ham}

\newcommand{\ed}{\msf{ED}}
\newcommand{\ED}{\ed}

\newcommand{\aldc}{\msf{aLDC}}
\newcommand{\paldc}{\msf{paLDC}}
\newcommand{\raldc}{\msf{raLDC}}
\newcommand{\paLDC}{\paldc}
\newcommand{\paldcgame}{\texttt{paLDC-Sec-Game}}
\newcommand{\raldcgame}{\texttt{raLDC-Game}}

\newcommand{\out}[1]{#1_{\text{out}}}
\newcommand{\inn}[1]{#1_{\text{in}}}
\newcommand{\insdel}[1]{#1_{\text{insdel}}}
\newcommand{\h}[1]{#1_{\text{h}}}
\newcommand{\p}[1]{#1_{\msf{p}}}
\newcommand{\res}[1]{#1_{\msf{r}}}
\newcommand{\calCout}{\out \calC}
\newcommand{\calCin}{\inn \calC}
\newcommand{\alphaout}{\out \alpha}
\newcommand{\alphain}{\inn \alpha}
\newcommand{\kappaout}{\out \kappa}
\newcommand{\kappain}{\inn \kappa}
\newcommand{\deltain}{\inn \delta}
\newcommand{\deltaout}{\out \delta}
\newcommand{\epsout}{\out \eps}
\newcommand{\epsin}{\inn \eps}
\newcommand{\betain}{\inn \beta}

\newcommand{\sz}[1]{#1_{\text{sz}}}
\newcommand{\calCsz}{\sz \calC}
\newcommand{\Encsz}{\sz \Enc}
\newcommand{\Decsz}{\sz \Dec}

\newcommand{\buf}{\text{buf}}
\newcommand{\NBS}{\texttt{Noisy-Binary-Search}}
\newcommand{\IntDec}{\texttt{Interval-Decode}}
\newcommand{\BlockDec}{\texttt{Block-Decode}}
\newcommand{\bufs}{\buf s}

\newcommand{\BufFind}{\texttt{Buff-Find}}

\newcommand{\Compile}{\texttt{Compile}}
\newcommand{\EncCompile}{\msf{EncCompile}}
\newcommand{\Extract}{\msf{Extract}}
\newcommand{\DecCompile}{\texttt{DecCompile}}

\newcommand{\RecoverBlock}{\texttt{RecoverBlock}}
\newcommand{\FindBlock}{\texttt{RecoverBlock}}
\newcommand{\FindBlocks}{\texttt{RecoverBlocks}}
\newcommand{\RecoverBlocks}{\texttt{RecoverBlocks}}
\newcommand{\Search}{\texttt{Search}}
\newcommand{\MarkGood}{\texttt{MarkGood}}
\newcommand{\BuffFind}{\texttt{BuffFind}}

\newcommand{\Good}{\msf{Good}}

\newcommand{\uDec}{\ensuremath{\msf{uDec}}}
\newcommand{\LDCwUGame}{\texttt{LDCwU-Game}}
\newcommand{\LDCwU}{\msf{LDCwU}}

\newcommand{\pLDCwUGame}{\texttt{pLDC-w-Uncert-Game}}
\newcommand{\pLDCwU}{\msf{pLDCwU}}

\newcommand{\findGood}{\texttt{findGood}}

\newcommand{\BI}{\calB\calI}

\newcommand{\size}[1]{\left| #1 \right|}
\newcommand{\simu}[1]{#1_{\msf{sim}}}

\newcommand{\simuu}[1]{#1_{\msf{sim,u}}}

\newcommand{\uDecGame}{\texttt{UDec-Game}}

\newcommand{\Bad}{\texttt{Bad}}

\newcommand{\reduce}{\texttt{reduce}}

\newcommand{\psk}{\msf{p},\sk}

\newcommand{\puzzgen}{\msf{PuzzGen}}
\newcommand{\PuzzGen}{\msf{PuzzGen}}
\newcommand{\puzzsolve}{\msf{PuzzSolve}}
\newcommand{\PuzzSolve}{\msf{PuzzSolve}}
\newcommand{\Sol}{\msf{PuzzSolve}}
\newcommand{\Puzz}{\msf{Puzz}}

\newcommand{\LdcStar}{\ensuremath{\LDC^*}}
\maketitle
\newtheorem{theorem}{Theorem}
\newtheorem*{remark}{Remark}
\newtheorem{lemma}[theorem]{Lemma}
\newtheorem{corollary}[theorem]{Corollary}
\newtheorem{definition}[theorem]{Definition}
\newtheorem{proposition}[theorem]{Proposition}

\newtheorem*{informal}{Theorem}

\theoremstyle{definition}
\newtheorem{construction}[theorem]{Construction}

\begin{abstract}
Locally Decodable Codes (\LDC s) are error correcting codes which permit the recovery of any single message symbol with a low number of queries to the codeword (the locality).  Traditional \LDC \ tradeoffs between the rate, locality, and error tolerance are undesirable even in relaxed settings where the encoder/decoder share randomness or where the channel is resource-bounded. Recent work by Blocki and Zhang initiated the study of Hamming {\em amortized} Locally Decodable Codes (\aLDC s), which allow the local decoder to amortize their number of queries over the recovery of a small subset of message symbols. Surprisingly, Blocki and Zhang construct asymptotically ideal (constant rate, constant amortized locality, and constant error tolerance) {\em Hamming} \aLDC s in private-key and resource-bounded settings. While this result overcame previous barriers and impossibility results for Hamming \LDC s, it is not clear whether the techniques extend to Insdel \LDC s. Constructing Insdel \LDC s which are resilient to insertion and/or deletion errors is known to be even more challenging. For example, Gupta (STOC'24) proved that no Insdel \LDC\ with constant rate and error tolerance exists even in relaxed settings.  

Our first contribution is to provide a Hamming-to-Insdel compiler which transforms any amortized Hamming \LDC\ that satisfies a particular property (consecutive interval querying) to amortized Insdel \LDC\ while asymptotically preserving the rate, error tolerance and amortized locality. Prior Hamming-to-Insdel compilers of  Ostrovsky and Paskin-Cherniavsky (ICITS'15) and Block et al. (FSTTCS'20) worked for arbitrary Hamming \LDC s, but incurred an undesirable polylogarithmic  blow-up in the locality. Our second contribution is a construction of an ideal amortized Hamming \LDC\ which satisfies our special property  (consecutive interval querying) in the relaxed settings where the sender/receiver share randomness or where the channel is resource bounded. Taken together, we obtain ideal Insdel \aLDC s in private-key and resource-bounded settings with constant amortized locality, constant rate and constant error tolerance. This result is surprising in light of Gupta's (STOC'24) impossibility result which demonstrates a strong separation between locality and amortized locality for Insdel \LDC s. 
\end{abstract}
\section{Introduction}\label{sect:introduction}
Locally Decodable Codes (\LDC s) are a variant of error correcting codes which provide single-symbol recovery with highly efficient query complexity over a (possibly corrupted) codeword.
Specifically, a \LDC \ over alphabet $\Sigma$ is defined as a tuple of algorithms the encoding algorithm $\Enc: \Sigma^k \rightarrow \Sigma^n$ and the local decoder $\Dec^{\tilde{by}} : [k] \rightarrow \Sigma$. Here, $k$ denotes the message length,  $n$ denotes the codeword length and $\tilde{\by} \in \Sigma^n$ is a (possibly corrupted) codeword --- for any message $\bx \in \Sigma^k,$ its corresponding codeword is $\by = \Enc(\bx)$ and $\tilde{\by}$ denotes a (possibly corrupted) copy of $\by$. The local decoder $\Dec^{\tilde{\by}}(i)$ is a randomized oracle algorithm that is given an index $i \in [k] := \{1,2,\dots,k\}$ as input and has oracle access to a (possibly corrupted) codeword $\tilde{\by}$. $\Dec^{\tilde{\by}}(i)$
 attempts to output the $i$'th symbol of the message $\bx$ after making {\em at most } $\ell$ queries to the symbols of $\tilde{\by}$. We require that for any index $i\in [k]$ and {\em any} $\tilde{\by}$ that is not too far from the uncorrupted codeword $\by$ (i.e $\msf{Dist}(\by,\tilde{\by}) \leq \delta |\by|$ for a chosen distance metric $\msf{Dist}$) that $\Dec^{\tilde{\by}}(i)$ outputs the correct symbol $\bx[i]$ with probability at least $1-\eps$. Broadly, codes which satisfy the above parameters are called $(\ell,\delta,\eps)$-\LDC s.
The main parameters of interest are the {\em query locality} $\ell,$ {\em error tolerance} $\delta$ and {\em rate} $k/n.$ Ideally, we would like an \LDC \ with constant locality, constant error tolerance, and constant rate simultaneously. 

%, with $\tilde{y}$ denoting the possibly corrupted codeword $\by,$ outputs $\bx[i]$, the $i$'th symbol of the message $\bx$, except with some desired probability $\eps$ (e.g. negligible or constant).  Further, the decoder makes at most $\ell$ queries to the symbols of $\tilde{\by}$ where the corrupted codeword $\tilde{\by}$ is not too far from the uncorrupted codeword $\by$ i.e $\msf{Dist}(\by,\tilde{\by}) \leq \delta |\by|$, where $\msf{Dist}$ is a chosen distance metric.

Within the classical setting of worst-case Hamming errors, i.e when the distance metric is chosen to be the Hamming distance metric and error patterns are introduced in a worst-case fashion, the trade-offs between the locality, error tolerance, and rate has been extensively studied.
Even so, achievable parameters in this setting remain undesirably sub-optimal. 
Katz and Trevisan show that any \LDC \ with constant locality $\ell \ge 2$ and constant error tolerance $\delta > 0$ must have non-constant rate \cite{STOC:KatTre00}, immediately ruling out the existence of an ideal \LDC \ for worst case Hamming errors.
Moreover, the best known constructions with constant locality and error tolerance have super-polynomial rate \cite{LDCSURVEY,STOC:Efremenko09}. 
Several relaxations have been made, such as allowing the local decoder to reject corrupted codewords \cite{ITCS:GurRamRot18,STOC:BGHSV04,SODA:ChiGurShi20,STOC:KumMon24,CCC:CohYan24}, the assumption that the encoder and decoder share a cryptographic key \cite{EPRINT:OstPanSah07,C:HemOst08,HemOstStraWoo11} that is unknown to a probabilistic polynomial time channel, or the assumption that the channel is resource-bounded in other ways \cite{SCN:AmeBloBlo22,ITC:BloKulZho20}.
Yet, even under these assumptions, we do not have an \LDC \ with constant locality, constant error tolerance, and constant rate. 
{The literature on \LDC s and relaxed variants is vast. We provide an expanded discussion of the work related to \LDC s in Appendix \ref{app:relatedWork}.}

Blocki and Zhang \cite{bloZha25} introduced the notion of an {\em amortized} locally decodable code where the total number of queries to the codeword can be amortized by the total number of message bits recovered (see Definition \ref{def:aLDC}). The local decoder for an amortized \LDC \ takes as input a range $[L,R]$ (we will assume $R - L + 1 \geq \kappa$ for a suitable parameter $\kappa$) instead of a single index $i$ and attempts to output the entire substring $\bx[L,R]$ instead of the single message symbol $x_i$. The amortized locality is $\ell/(R-L+1)$ i.e.,  the total number of queries $\ell$ divided by the number of message symbols recovered. Surprisingly, they showed how to construct ideal amortized locally decodable codes in relaxed settings where the channel is resource-bounded or where the sender/receiver share a secret key. In particular, given any (sufficiently large) interval $[L,R] \subseteq [k]$ the local decoder can recover all of the corresponding message symbols $x[L],\ldots, x[R]$ while making at most $O(R - L)$ queries to the corrupted codeword. They also observed that the Hadamard code can have amortized locality approaching $1$. 
By contrast, Katz and Trevisan show that codes with locality $\ell < 2$ do not exist unless the  message length is a constant \cite{STOC:KatTre00}.

Given the surprising success in construction ideal amortized \LDC s for Hamming channels, even in relaxed settings, it is natural to wonder whether or not it is possible to obtain similar results for insertions and deletions ({\em Insdel}) channels.

{\em Is it possible to obtain constant amortized locality, constant error tolerant, and constant rate Insdel \LDC s when the channel is oblivious, resource-bounded, or when the sender/receiver share a cryptographic key?}

A construction of an ideal amortized \LDC \ for Insdel channels would be especially surprising since
when compared to known lowerbounds of Hamming \LDC s, the current state of affairs is even worse for Insdel \LDC s.
Blocki et al. \cite{FOCS:BCGLZZ21} proved the first separation result between Insdel and Hamming \LDC s by proving that linear Insdel \LDC s with locality $\ell = 2$ do not exist.
This stands in contrast to Hamming \LDC s, where constructions do exist, albeit with exponential rate (e.g. the Hadamard code).
They additionally provide a exponential rate lowerbound for  any constant locality Insdel \LDC. Their lower bound holds in both traditional and relaxed (private-key and oblivious channel) settings, making progress towards a conjecture raised by Block et al. \cite{FOCS:BCGLZZ21} that constant locality Insdel \LDC s do not exist.
Recently, Gupta resolved this conjecture by showing that constant query deletion codes do not exist by providing a randomized, oblivious adversarial deletion strategy \cite{STOC:Gupta24}.
The results of Gupta extend even to relaxed settings where the sender/receiver share random coins or where the channel is resource bounded i.e., even in relaxed settings it is impossible to construct constant query Insdel \LDC s. Thus, it would be especially surprising to construct an ideal amortized \LDC\ for Insdel channels even in relaxed settings with shared randomness or resource bounded channels. % To summarize, ideal Insdel \LDC s do not exist as we can already rule out the possibility of any construction with constant locality and error tolerance regardless of the rate.

On the side of constructions, Ostrovsky and Paskin-Cherniavsky provide the first InsDel \LDC \ via a compilation of a Hamming \LDC s into an Insdel \LDC s \cite{ICITS:OstPas15} with a polylogarithmic blowup to the query complexity and only a constant blowup to the error-tolerance and rate. 
Their construction was revisited by Block et al. \cite{bbgkz20} who reproved and extended their results to construct Locally Correctable Codes (\LCC s), a variant of \LDC s which considers the local decodability of any codeword symbol.
At a high level, their compiler takes a Hamming \LDC \ encoding of the message, partitions it into blocks, and applies an Insdel code to each block.
The local decoder then simulate the underlying Hamming decoder's queries by retrieving and decoding the corresponding Insdel-encoded block to recover the queried Hamming code symbol, where each answered Hamming query incurs a polylogarithmic query blowup.

Unfortunately, a black box approach of instantiating their Insdel \LDC \ constructions with an amortizable Hamming \LDC \ does not preserve the amortized locality. 
Even if all symbols recovered by the decoder are used, the amortized locality will still suffer a polylogarithmic blowup from the Hamming decoder simulation.
Thus, it is unclear whether the Insdel \LDC \ constructions of Ostrovsky and Paskin-Cherniavsky or Block et al. are amortization friendly.
\subsection{Our Contributions}
We provide the first framework for constructing amortized Insdel \LDC s by modifying the compiler of Block et al. \cite{bbgkz20} to be amenable to amortization. 
Our resulting compiler takes in an amortized Hamming \LDC, whose decoder satisfies a properties we define as {\em consecutive interval querying}, and outputs an Insdel amortized \LDC \ with asymptotically equivalent parameters (amortized locality, rate, and error tolerance). 
More specifically, given an amortized Hamming \LDC \ with constant rate, constant error tolerance, and constant amortized locality, whose decoder queries contiguous blocks of polylogarithmic size instead of individual message symbols, our compiler will yield an amortized Insdel \LDC \ with constant rate, constant error tolerance, and constant amortized locality. While prior Insdel \LDC \ compilers suffer from a polylogarithmic blow-up in query complexity of the underlying Hamming \LDC, our amortized compiler preserves amortized locality as well as rate and error tolerance (up to constant factors). Thus, given any  ideal Hamming amortized \LDC\ which satisfies our consecutive interval querying property, we obtain an ideal Insdel amortized \LDC\ achieving constant rate, error tolerance, and amortized locality.
{
\begin{informal}[Informal, see Corollary \ref{corr:compilerNoAssumptions}]
   If there exists ideal (constant rate, constant error tolerance, and constant amortized locality) Hamming amortized \LDC s whose decoder only queries consecutive blocks of size $\Omega(\polylog\ n)$, then there exists ideal Insdel amortized \LDC s (with asymptotically equivalent parameters).
\end{informal}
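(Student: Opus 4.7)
The plan is to adapt the compiler of Block et al.\ \cite{bbgkz20} so that amortization is preserved. At the encoding side, I would first encode the message with the given ideal Hamming $\aLDC$, partition the resulting Hamming codeword into contiguous blocks of size $B = \Theta(\polylog n)$ (matching the consecutive-interval querying granularity), insert a synchronization \emph{buffer} (a distinguished pattern that is robust to a constant fraction of insdel errors) between consecutive blocks, and separately encode each block plus its buffer using a short constant-rate Insdel code with constant error tolerance. Since both the outer Hamming $\aLDC$ and the inner Insdel codes have constant rate and the buffer overhead is a constant fraction, the overall rate stays constant.

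For the decoder on input interval $[L,R]$, I would first run the outer amortized Hamming decoder to learn the set of consecutive Hamming blocks it wishes to read; by the consecutive interval querying property, this is a single contiguous block range $[a,b]$. The Insdel decoder then (i) invokes the noisy binary search procedure $\NBS$ of \cite{bbgkz20} on the buffers to locate block $a$ in the corrupted word, (ii) scans forward through the buffers, using them as delimiters, to identify the boundaries of blocks $a+1,\ldots,b$ one by one, and (iii) runs the inner Insdel decoder on each recovered block to supply a Hamming symbol to the outer decoder. Step (i) has cost $\polylog n$, step (ii) costs $O(B)$ per block (a constant blowup over the block length because the inner code is constant rate and has constant error tolerance), and step (iii) has constant overhead per block.

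For the amortized locality, the outer $\aLDC$ makes $O(R-L+1)$ Hamming queries, hence touches at most $O((R-L+1)/B)$ blocks; answering all such queries costs $O(R-L+1)$ Insdel queries, plus the one-time $\polylog n$ cost of $\NBS$. Since amortized $\LDC$s only require correctness when $R-L+1 \ge \kappa = \Omega(\polylog n)$, the $\polylog n$ setup cost is dominated and the amortized locality remains constant. For error tolerance, a standard counting argument shows that if the Insdel adversary introduces a $\delta'$ fraction of edits, then only an $O(\delta')$ fraction of blocks receive more than a constant fraction of internal edits; the remaining blocks decode correctly via the inner Insdel code, and corrupted blocks are passed to the outer Hamming decoder as arbitrary (adversarial) Hamming errors. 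As long as $\delta'$ is a small enough constant relative to the Hamming error tolerance $\delta$ of the outer $\aLDC$, correctness follows from the outer decoder's guarantee.

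The main obstacle is making step (ii), the \emph{scan} phase, robust enough to preserve amortization: naively relocating each of the $(b-a+1)$ blocks by $\NBS$ would cost $\polylog n$ per block and destroy amortized locality. The crucial observation I would exploit is that once the first buffer is found via $\NBS$, subsequent buffers can be located by a local left-to-right search around the expected position using the buffer's robustness to insdel; this uses only $O(B)$ queries per block on average, even against an adversarial insdel channel, because the total displacement of the $t$-th buffer is bounded by the total edits up to that point, which is a constant fraction of the codeword length. Carefully charging the extra scan cost against the edits that caused it, and combining this with the consecutive interval querying property of the outer decoder, is what makes the amortized locality of the compiled Insdel $\aLDC$ asymptotically match that of the underlying Hamming $\aLDC$.
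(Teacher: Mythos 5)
Your overall plan matches the paper's in spirit — reuse the BBGKZ buffer/block encoding, invoke noisy binary search once per contiguous query interval so the $\polylog n$ cost is amortized, and rely on the consecutive-interval-querying property of the outer Hamming $\aLDC$ to guarantee those intervals are large enough — but the decoder structure you propose is genuinely different, and the difference matters for the correctness proof. The paper does \emph{not} run $\NBS$ down to a single block and then scan forward. It modifies the binary-search stopping condition itself to be proportional to the target block interval: $\FindBlocks(1,\tilde n+1,a,b)$ halts when the search radius is $O((b-a+1)\tau)$, reads the whole remaining window, and runs the BBGKZ interval-decode on that window to recover every block $q\in[a,b]$ independently. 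Your version runs a single-block $\NBS$ targeted at block $a$ and then walks forward through buffers.

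There are two places where your scheme would need substantial new argument that the paper's version avoids. First, targeting block $a$ specifically is fragile: if block $a$ is not local-good, $\NBS$ can land in the wrong place, and your entire forward scan is then anchored at the wrong starting point, so \emph{all} of $[a,b]$ may be misrecovered. The paper sidesteps this by cutting the search interval based on whether $j_{\mathrm{med}}$ falls on the correct side of $a$ while only requiring that \emph{some} block $c\in[a,b]$ be $(\theta,\gamma)$-local-good (Lemma~\ref{lem:iterateFindBlocks} and Theorem~\ref{thm:findBlocks}); good blocks in the interval are recovered regardless of whether block $a$ itself is good, because each block is located and decoded independently from the final window via $\SimFindBlock$. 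Second, your ``charge the scan cost to the edits that caused it'' argument for step (ii) is not worked out and has a cascading-failure problem: a heavily corrupted block in the middle of $[a,b]$ can cause your delimiter-driven scan to lose synchronization (or to be fooled by an adversarially inserted buffer-like pattern), after which every later block boundary is wrong. The paper's interval decode avoids this because it operates over the whole fixed window at once and uses the index $j$ embedded in each SZ-encoded block to identify blocks without relying on counting buffers from a known anchor; bad blocks are simply charged to the Hamming decoder's error budget via the local-good counting bound in Theorem~\ref{thm:FindBlocksSimulates}. Your high-level query accounting and rate/error arguments are right, and the approach could likely be made rigorous, but as stated the scan phase would need a ``resynchronization'' argument that the paper's interval-stopping-condition design makes unnecessary.
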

}

Furthermore, in the private-key setting, we construct the first consecutive interval querying Hamming \LDC\ with constant rate, error tolerance, and constant {\em amortized } locality. We leave it as an open question whether one can construct a consecutive inverval querying Hamming \LDC\ with constant {\em amortized } locality in the information theoretic setting. However, using our updated Hamming-to-Insdel compiler we obtain an ideal amortized Insdel \LDC \ construction (constant rate, constant error tolerance and constant amortized locality) in the private-key setting. Specifically, the local decoder $\Dec^{\tilde{\bY}}(a,b)$ makes at most $O(b-a + \polylog{(n)})$ queries to the corrupted codeword\footnote{$\tilde{\bY}$ is the corrupted codeword output by the PPT-bounded channel who does not have the secret key under the constraint that the edit distance fraction between $\bY$ and $\tilde{\bY}$ is at most $2\delta$. } $\tilde{\bY}$ and (whp) outputs all of the correct symbols $x[a],x[a+1],\ldots, x[b]$ in the interval $[a,b]$. As long as $b-a \geq \polylog(n)$ the amortized locality is $O(1)$ per symbol recovered.
{
\begin{informal}[Informal, see Theorem \ref{thm:b-block-dec-OTpaldc} and Corollary \ref{corr:finalINSDELpaldc}]
   For any $t,n \in \N$ such that $t \mid n$, there exists ideal private-key Hamming amortized \LDC s with codeword length $n$ whose decoder only queries consecutive blocks of size $t$ and has constant amortized locality whenever decoding any consecutive interval of size at least $\omega(t\log n)$.
   This implies that there exists ideal private-key Insdel amortized \LDC s.
\end{informal}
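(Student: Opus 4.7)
The plan decomposes into two stages: (1) construct the promised Hamming amortized LDC with consecutive interval querying in the private-key setting, and (2) invoke the compiler from the preceding informal theorem to obtain the Insdel analogue. Stage (2) is immediate once stage (1) is in hand (and the parameters set so that $t = \Theta(\polylog n)$), so the core work is the Hamming construction.

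For stage (1), I propose a three-layer keyed concatenation. Starting from a message $\bx$, first encode with a constant-rate Hamming amortized LDC (such as that of Blocki--Zhang~\cite{bloZha25}) to produce an outer codeword $\by$. Next, partition $\by$ into consecutive blocks of size $\Theta(t)$ and inner-encode each block separately using a constant-rate error-correcting code that tolerates a constant fraction of symbol errors per block. Finally, use the private key $\sk$ to sample a pseudorandom permutation $\pi_{\sk}$ on the block indices and emit the inner-encoded blocks in the permuted order, producing a final codeword of length $n$ with $t \mid n$ by construction. The decoder on input $[L,R]$ uses $\sk$ to invert $\pi_{\sk}$, identifies the $m = O((R-L)/t)$ outer blocks needed to answer the underlying outer amortized decoder's queries covering $[L,R]$, retrieves each as a single $t$-symbol consecutive query, inner-decodes each block (flagging an erasure when inner decoding fails), and forwards the results to the outer amortized decoder.

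The analysis hinges on two claims. First (correctness): because the PPT channel does not see $\sk$, PRP security lets us replace $\pi_{\sk}$ by a uniformly random permutation up to negligible statistical distance; by a Chernoff--Hoeffding bound over this random permutation, as long as the decoder inspects $m = \omega(\log n)$ blocks, with overwhelming probability only a small constant fraction of them exceed the inner decoder's error threshold. The condition $R-L+1 = \omega(t \log n)$ guarantees $m = \omega(\log n)$, so the outer amortized decoder sees enough recovered blocks to output $\bx[L,R]$ correctly. Second (amortized locality): by construction each codeword query is a consecutive block of size $t$, and the total query cost is $m \cdot t = O(R-L + t)$, giving amortized locality $O(1)$ whenever $R-L = \Omega(t)$. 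Rate is constant because each of the three layers is constant rate; error tolerance is constant because the inner code is chosen to tolerate a constant fraction of errors strictly larger than the typical per-block corruption $\delta \cdot t$ predicted by the Chernoff bound.

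The main obstacle I foresee is cleanly combining PRP security (a computational assumption about a PPT adversary) with the Chernoff-type concentration (an information-theoretic bound over the randomness of the permutation). Specifically, one must argue via a hybrid that even an adaptive channel's corruption pattern, after bucketing into blocks under $\pi_{\sk}$, is statistically close to a uniform random bucketing, and then take a union bound over the polynomially many possible decoded intervals so that the concentration holds simultaneously across all queries. Everything else---existence of suitable inner/outer codes, choosing the constants to make the inner threshold strictly larger than the typical block corruption level, and the compiler invocation for the Insdel corollary---is standard.
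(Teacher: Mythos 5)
Your plan takes a genuinely different route from the paper's, and the difference is exactly where the gap lives. The paper does not wrap an existing amortized LDC in an extra inner-code-plus-block-permutation layer; it modifies the single-layer Ostrovsky et al./Blocki--Zhang construction by changing only the \emph{granularity of the secret permutation}. In Blocki--Zhang, the message is partitioned into blocks $\be_1,\ldots,\be_B$, each block is encoded to $\be'_j$ by a constant-rate Hamming code, and a \emph{bit-level} permutation $\pi$ together with a one-time pad $\br$ are applied. The paper's sole change (Construction \ref{constr:paLDCv1BlockDecoding}) is to further split each $\be'_j$ into $t$-bit sub-blocks $\bepsilon_r$ and permute those sub-blocks; every decoder query is then a consecutive $t$-bit chunk, so the code is $t$-consecutive interval querying with constant amortized locality, and a Hypergeometric tail bound plays the role of your Chernoff argument.

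Your two-layer scheme fails at the locality step. You route the Blocki--Zhang decoder's queries into $\by$ through a layer of $\Theta(t)$-sized inner-encoded blocks and claim that only $O((R-L)/t)$ inner blocks need to be decoded. But Blocki--Zhang permutes $\by$ at the bit level, so after undoing its $\pi$ the bit-positions of a target block $\be'_j$ are scattered roughly uniformly over $[m]$; the $\Theta(R-L)$ single-bit queries the outer decoder issues therefore land in $\Theta(R-L)$ \emph{distinct} inner blocks (for $R-L \ll m/t$), not $O((R-L)/t)$. Each inner-block retrieval costs $\Theta(t)$ reads, so your final amortized locality is $\Theta(t)$ rather than $O(1)$: you preserve consecutive-interval querying but lose precisely the constant locality the statement requires. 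There is no way to batch the outer decoder's scattered queries into few inner blocks without first making the outer decoder's own queries consecutive --- which is the very property you are trying to construct. The fix is not another layer of concatenation but to align the permutation granularity \emph{inside} the Blocki--Zhang construction with $t$, which is what the paper does.
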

}
We show further that our private-key construction implies an ideal amortized Insdel \LDC\ in resource-bounded settings i.e. when the channel has a bounded amount of some resource (e.g. space, circuit-depth, cumulative memory, etc.).
{
\begin{informal}[Informal, see Corollary \ref{corr:finalINSDELraldc}]
   If there exists ideal private-key Insdel amortized \LDC s, there exists ideal resource-bounded Insdel amortized \LDC s (with asymptotically equivalent parameters).
\end{informal}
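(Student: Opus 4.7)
The plan is to compile the hypothesised private-key Insdel amortized \LDC\ into a resource-bounded one via a puzzle-based transformation that is standard in the resource-bounded coding literature \cite{SCN:AmeBloBlo22,ITC:BloKulZho20}. Fix a resource bound (space, circuit-depth, cumulative memory, etc.) and instantiate a matching puzzle family $(\PuzzGen,\PuzzSolve)$: $\PuzzGen(1^\lambda)$ outputs $(p,\sk)$, a legitimate (unbounded) decoder computes $\sk=\PuzzSolve(p)$, and no channel respecting the resource bound can distinguish $(p,\sk)$ from $(p,\sk')$ for an independent uniform $\sk'$. Such puzzle families underlie the other constructions in this line of work.

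Let $(\Enc^{\sk},\Dec^{\sk})$ be the ideal private-key scheme given by the hypothesis. The resource-bounded encoder $\Enc^{\msf{rb}}(\bx)$ samples $(p,\sk)\gets\PuzzGen(1^\lambda)$, computes $\bY\gets\Enc^{\sk}(\bx)$, and outputs $\bY$ augmented with many small, constant-rate Insdel-ECC-encoded copies of $p$ planted at known, well-separated positions. The decoder $\Dec^{\msf{rb},\tilde\bY'}(a,b)$ recovers one surviving copy of $p$ from $\tilde\bY'$, computes $\sk=\PuzzSolve(p)$, and then invokes $\Dec^{\sk,\tilde\bY}(a,b)$ on the remainder (with positions shifted to account for the planted copies). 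Tuning the size and number of copies keeps the puzzle overhead a constant fraction of the codeword, so the rate remains constant; the edit budget of $2\delta n$ splits between the puzzle copies and the private-key body, preserving a constant fraction of error tolerance for each part; and since recovering and solving the puzzle costs $\poly(\lambda)$ queries independent of $b-a$, the amortized locality stays $O(1)$ whenever $b-a=\omega(\poly(\lambda))$, a window size already required by the private-key decoder.

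Security is a two-step hybrid argument. Replace the puzzle-derived $\sk$ with an independent uniform $\sk^\star$ that is hidden from the channel; by resource-bounded pseudorandomness of the puzzle, no admissible channel distinguishes this hybrid from the real experiment except with negligible probability, and in particular its Insdel corruption pattern becomes (computationally) independent of $\sk^\star$. In this hybrid we are exactly in the private-key \aLDC\ security game against a channel that never sees $\sk^\star$, so the decoder succeeds with the required probability by the assumed private-key guarantee, and the real-experiment success probability differs by only a negligible amount. The main technical obstacle I anticipate is reliably recovering the puzzle despite an adversary who could concentrate the entire $\delta n$ edit budget on any small region and annihilate a single short copy of $p$; the redundant planting described above (with copies of size $\Theta(|p|)$ spread evenly across the codeword and located via synchronization buffers as in \cite{bbgkz20}) ensures that a constant fraction of copies survive any admissible corruption, which is all that is needed to bootstrap the private-key decoder and inherit its ideal amortized parameters.
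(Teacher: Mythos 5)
Your high-level idea (a puzzle-based private-to-resource-bounded transformation) is the same family of technique the paper uses, but you apply it at the wrong level of the pipeline, and the route you chose hides a real technical obstacle that the paper avoids entirely. The paper does \emph{not} directly compile an Insdel \paLDC\ into an Insdel \raldc. Instead, it keeps the puzzle machinery at the \emph{Hamming} level: Theorem~\ref{thm:paLdctoResourceTblock} converts a $t$-consecutive-interval-querying \emph{Hamming} \paLDC\ into a $t$-consecutive-interval-querying \emph{Hamming} \raldc\ by concatenating an $\LDC^*$-encoded puzzle onto the Hamming codeword (where alignment is trivial, since there are no insertions or deletions to shift the boundary between the message part and the puzzle part), and the paper then applies the Hamming-to-Insdel compiler $\EncCompile$ to the resulting \emph{composite} Hamming code (Construction~\ref{constr:finalINSDELraldc}, Theorem~\ref{thm:finalINSDELraldc}). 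Thus the compiler's noisy binary search and \FindBlocks\ machinery handles all synchronization for the puzzle and the message at once.

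Your proposal plants $\insdel\calC$-encoded copies of $p$ inside the already-Insdel codeword and then says the decoder ``recovers one surviving copy of $p$ \dots and then invokes $\Dec^{\sk,\tilde\bY}(a,b)$ on the remainder (with positions shifted to account for the planted copies).'' This is where the gap is. After adversarial insertions and deletions, the planted copies are no longer at known positions, and the boundaries between puzzle regions and message regions are destroyed; you cannot simply ``shift positions.'' Locating a surviving copy already requires a noisy-search procedure of exactly the type $\FindBlocks$ provides, and --- more seriously --- even after you have $\sk$, you must give the inner private-key decoder $\Dec^{\sk}$ oracle access to some string that is $\delta$-edit-close to $\Enc^{\sk}(\bx)$ \emph{alone}. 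With interleaved puzzle copies this requires stripping them out of $\tilde\bY'$ without knowing block boundaries, which is precisely the problem the Hamming-to-Insdel compiler is designed to solve; your one-line appeal to ``synchronization buffers as in \cite{bbgkz20}'' is essentially re-inventing that machinery without carrying it out. You would also need to argue that the queries spent locating and stripping the puzzle copies do not blow up the amortized locality. Finally, you omit the closure-class issue: the paper's reduction (Theorem~\ref{thm:finalINSDELraldc}) requires the Hamming \raldc\ to be secure against the closure class $\overline{\R}$, because the reduction itself must run $\EncCompile$ and the $\RecoverBlocks$ simulation; an adversary $\calA\in\R$ only gives you a Hamming adversary $\calB\in\overline{\R}$. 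Any direct Insdel-level argument has an analogous obligation that you do not address.
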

}
Our results demonstrate definitively strong separations between locality and amortized locality for Insdel \LDC s. In particular, Gupta \cite{STOC:Gupta24} rules out the existence of {\em any} Insdel LDC with constant locality and error tolerance, even in relaxed settings and even with exponential rate.   
\subsection{Technical Overview}
\paragraph*{Recalling the BBGKZ Compiler}
Block et al. give a construction of a compiler taking any Hamming \LDC\ and compiling it into an Insdel \LDC \cite{bbgkz20}. 
We refer to their construction as the BBGKZ compiler.
In the compiled encoding procedure, the message $\bx$ is first encoded under the Hamming \LDC \ as codeword $\by,$ where $\by$ is then partitioned into size $\tau$ {\em Hamming} blocks $\by = \bw_1 \circ \dots \bw_{B}$. 
Each Hamming block $\bw_j$ along with its index $j$ is encoded using a constant rate Insdel code with the special property stating that the relative Hamming weight of every logarithmic-sized interval is at most $2/5$ (The Schulman-Zuckerman-code \cite{szcode} satisfies this property).
Lastly, each block is pre/postpended with a zero buffer $0^{\phi \tau}$ of length $\phi \tau$, where $\phi \in (0,1),$ resulting in the {\em Insdel} block $\bw'_j$ and the Insdel codeword $\bY = \bw'_1 \circ \dots \circ \bw'_B.$ Intuitively, the zero buffers $0^{\phi \tau}$ help to identify the beginning/end of each block after insertions/deletions.

Then, the compiled Insdel local decoding procedure, when given oracle access to the (possibly corrupted) Insdel codeword $\tilde{\bY}$, attempts to simulate the Hamming decoder with access to a corrupted codeword $\tilde{\by}$ that is close to the Hamming Codeword $\by$ in Hamming Distance. Specifically, when the Hamming decoder queries symbol $i$ of its expected oracle access to (possibly corrupted) codeword $\tilde{\by}$, the Insdel decoding procedure will identify the block $\bw_j$ which contains $\by[i]$, (attempt to) locate the corresponding padded Insdel block $\bw'_j$, undo the padding, run the Insdel decoder to recover the pre-compiled block $\bw_j,$ and return the corresponding symbol $\tilde{\by}[i]$. 
This query simulation process is facilitated by a subroutine we refer to as the \FindBlock \ procedure which (attempts to) locate a particular Insdel block $\bw'_j$ in the presence of corruptions using a noisy binary search procedure.
\paragraph*{Challenge 1: \FindBlock \ is not Amortizable} 
Our first challenge in constructing amortizable Insdel \LDC s is the observation that \FindBlock \ performs asymptotically strictly more queries than the symbols it recovers.
Since the codeword $\bY$ may contain insertions and deletions, in order for \FindBlock \ to locate the Hamming decoder's queries, it performs a {\em noisy binary search} procedure and makes $O(\tau + \tau\polylog (n))$ queries to recover the block corresponding to each query made by the Hamming decoder. The Hamming decoder only uses one of these symbols, but even if the decoder utilized all $\tau$ symbols from the recovered symbols, the amortized locality is still {\em at least} $O(\polylog(n))$ per symbol recovered. Thus, we cannot achieve constant amortized locality by using the BBGKZ compiler as a black box.

Our insight to address this challenge is that \FindBlock\ can be modified to recover a {\em consecutive interval} of blocks rather than just a single block.
We refer to this modified procedure as \FindBlocks, where on input block interval $[a,b] := \{a,\dots,b\}$, \FindBlocks \ will recover pre-compiled blocks $\bw_a,\dots,\bw_b$ and make at most $O((b - a + 1)\tau + \tau \polylog(n))$ queries.
Hence, as long as $b - a + 1 \geq \polylog(n)$, the ratio of recovered symbols to queried symbols will be constant.

\paragraph*{Challenge 2: Proving Correctness of \FindBlocks} 
We show that the modified procedure \FindBlocks \  preserves correctness. 
That is, a majority of the (corrupted) Insdel blocks $\tilde{\bw}'_j$ satisfy the property running the search $\FindBlocks$ with any interval $[a,b]$ that contains $j \in [a,b]$ will yield the original Hamming block $\bw_j$ except with negligible probability. Block et al. \cite{bbgkz20} proved that the procedure $\FindBlock$ will correctly find any {\em local-good block} and that most blocks will be {\em local-good} as long as the number of corruptions is bounded. The main technical challenge results from an inherent mismatch for \FindBlocks \ because an interval $[a,b]$ may contain a mixture of blocks that are (resp. {\em are not}) locally-good. We show that \FindBlocks \ successfully recovers any {\em local-good blocks} in its interval except with negligible probability. 

We prove that correctness in a similar manner to the analysis of Block et al. \cite{bbgkz20}. 
We also utilize the notion of {\em local-good blocks}, where if block $\bw_j'$ is $(\theta,\gamma)$-local-good, it will have at most $\gamma \tau$ corruption with the additional desirable property that any block interval $[a,b]$ surrounding the block, i.e. $j \in [a,b]$, has at least $(1 - \theta)\times|b-a+1|$ blocks that also have at most $\gamma \tau$ corruption. Utilizing analysis from \cite{bbgkz20} we can pick our parameters to ensure that at least $(1-\delta_h)$ fraction of our blocks are $(\theta,\gamma)$-local good. What we prove is that if block $\bw_j'$ is $(\theta,\gamma)$-local good and if $j \in [a,b]$ then calling \FindBlocks \ for the interval $[a,b]$ will recover the original (uncorrupted) $\bw_j$ with high probability along with any other $(\theta,\gamma)$-good block in the interval $[a,b]$. Thus, for any partition $[a_1,b_1],\ldots, [a_z,b_z]$ of $[B]$ calling \FindBlocks \  with each interval $[a_i,b_i]$ would recover the corrupt (uncorrupted) $\bw_j$ for at least $(1-\delta_h)B$ blocks. We will use this observation to reason about the correctness of our decoding procedure. Intuitively, we can view the compiled Insdel decoder as simulating the Hamming decoder with a corrupted codeword whose fractional hamming distance is at most $\delta_h$ from the original codeword.

%Along with showing that any good block $\bw'_j$ is correctly recovered by \FindBlocks \ when called on any surrounding interval $j \in [a,b]$, we can conclude that calling \FindBlocks \ over any fixed partition of $[B]$ will recover at least $(1 - \delta)$ pre-compiled Hamming blocks $\bw_j.$

\paragraph*{Challenge 3: Consecutive Hamming Queries Needed For Amortized Decoding}
The next challenge is utilizing the \FindBlocks \ procedure to obtain amortized locality in the Insdel local decoding procedure. 
Similar to the BBGKZ compiler, our Insdel decoder with oracle access to (possibly corrupted) codeword $\tilde{\bY}$ simulates the compiled Hamming decoder's oracle access to (possibly corrupted) codeword $\tilde{\by}.$ Our hope is that if the underlying Hamming \LDC\ is amortizable then the compiled Insdel \LDC\ will also be amortizable. In particular, we hope to group the Hamming decoder's queries $q_1,\dots,q_\ell$ together into corresponding (disjoint) index intervals $[a_1,b_1],\dots,[a_p,b_p]$ such that (1) $\sum_{i} |b_i+1-a_i| = \Theta(\ell)$, (2) $\{q_1,\ldots, q_\ell\} \subseteq \bigcup_i [a_i,b_i]$, and (3) the average size $\frac{1}{p} \sum_i |b_i+1-a_i|$ of each interval is large enough we can amortize over the calls to $\FindBlocks$ e.g., we need to ensure that the average interval size is at least $\frac{1}{p} \sum_i |b_i+1-a_i| = \Omega(\polylog n)$. Unfortunately, in general, we cannot place any such structural guarantees on the queries made by an (amortizable) Hamming \LDC.

We address this problem by introducing the notion of a {\em $t$-consecutive interval querying} Hamming \LDC decoder which is amortization-friendly under our Insdel compiler. Intuitively, the local decoder for a $t$-consecutive interval querying Hamming \LDC \ accesses the (possibly corrupted) codeword $\tilde{\by}$ by querying for $t$-sized intervals instead of individual message symbols i.e., the decoder may submit the query $[i,i+t-1]$ and the output will be $\tilde{\by}[i,i+t-1]$. In more detail the Hamming decoder takes as input an interval $[L,R]$ of message symbols we would like to recover, makes queries for $t$-sized intervals of the (possibly corrupted) codeword $\tilde{\by}$ and then attempts to output $\bx[L,R]$. Note that if the Hamming decoder makes $\ell/t$ interval queries to $\tilde{\by}$ this still corresponds to locality $\ell$ as the total number of codeword symbols accessed is $t\times (\ell/t) = \ell$.

Correspondingly, when the Hamming decoder is $t$-consecutive interval querying, our Insdel decoder will make at most $2\lceil{\ell / t}\rceil$ calls to \FindBlocks \ on intervals of size $t.$
The resulting query complexity is roughly $\ell \times \left(\frac{\tau \log^3 n}{t} + O(1)\right)$, and so, in other words, the query complexity blow-up is by a factor of $O\left(\frac{\tau \log^3 n }{t}\right)$.
This implies that if there exists a $t$-consecutive interval querying Hamming decoder for $t = \Omega(\tau \log^3 n)$ with constant amortized locality, our compiled Insdel construction will have constant amortized locality. Thus, the primary question is whether or not it is possible to construct $t$-consecutive interval querying Hamming \LDC s with constant amortized locality. We leave this as an open question for worst-case errors, but show that it is possible in settings where the channel is computationally/resource restricted.

\paragraph*{Challenge 4: A $t$-consecutive interval querying \aLDC \ in Private-Key Settings}
We present a $t$-consecutive interval querying private-key amortized Hamming \LDC \ (\paLDC) construction with constant rate, constant amortized locality, and constant error tolerance i.e. ideal parameters.
As a starting point we use the one-time private-key Hamming \LDC \ constructed by Ostrovsky et al. \cite{ICALP:OstPanSah07}.
Blocki and Zhang show that their construction is amortizable and that their construction may be extended to multi-round secret-key reuse with cryptographic building blocks \cite{bloZha25}. 
For simplicity, we focus on the single round construction, which is information-theoretically secure as long as the channel is computationally bounded.

The single round construction first generates the secret key as a randomly drawn permutation $\pi$ and a random string $\br.$ To encode the message $\bx$, it is partitioned into blocks $\bx = \be_1 \circ \dots \be_B$ and each block $\be_j$ is encoded as $\be'_j = \Enc(\be)$ by a constant rate, constant error tolerance Hamming code $\Enc$.
Lastly, the bits of the encoded blocks are permuted by $\pi$ and the random string $\br$ is xor'd i.e. the resulting codeword is $\by = \pi(\be_1' \circ \dots \circ \be'_B) \oplus \br$.
The local decoder can recover any block $\be_j$ by finding the indices of its corresponding encoded block $\be'_j$, reversing the permutation and random masking, and running the Hamming code decoder $\Dec$ to recover $\be_j.$
Unfortunately, the local decoder is not $t$-consecutive interval querying since its queries are not in a contiguous interval. 
The main issue is that the application of the permutation $\pi$ removes any pre-existing block structure in the codeword $\by$ i.e., if the local decoder wants to query for consecutive (pre-permutation) code symbols $(\be_1' \circ \dots \circ \be'_B)[s + 1,s + \ell]$, it would have to query $\by[\pi(s + 1)],\ldots, \by[\pi(s + \ell)]$, which are no longer consecutive with high probability.  

A first attempt to remedy this issue is to permute at the block level instead of the bit level i.e. we draw permutation $\pi$ over $[B]$ and set the codeword instead as $\by = (\be'_{\pi(1)} \circ \dots \circ \be'_{\pi(B)}) \oplus \br.$
While this admits contiguous access when recovering any block $\be_j$ it defeats the original purpose of the permutation. The permutation $\pi$ (and the one-time pad $\br$) ensure that (whp)  an adversarial channel cannot produce too many corruptions in any individual block. If we permute at the block level then it is trivial for an attacker to corrupt entire blocks. In particular, the adversary could focus their entire error budget to corrupt symbols at the start of the codeword. This simple strategy will always corrupt a constant fraction of blocks $\be'_j.$ Thus, the proposed code will not satisfy correct decoding. 

Our primary insight is that we can apply the permutation at an intermediate ``sub-block'' granularity to ensure correctness and simultaneously support consecutive interval queries. In particular, each encoded block $\be'_j,$ for $j \in [B],$ is further partitioned into $\beta$ sub-blocks $\be'_j = (\bepsilon_{(j - 1)\beta + 1} \circ \dots \circ \bepsilon_{j\beta})$ and the permutation is applied over sub-blocks $\bepsilon_r$, for $r \in [B\beta]$, rather than the Hamming blocks $\be_j'$ themselves i.e. the codeword is \[\by = \left( \bepsilon_{\pi(1)} \circ \cdots \circ \bepsilon_{\pi(\beta)} \circ \cdots \cdots \circ \bepsilon_{\pi(B\beta)}\right) \oplus \br.\]  
\begin{figure}
    \centering
    \[
\begin{array}{ccccc}
    \bx = & \be_1 & \circ \dots \circ & \be_B &  \\[6pt] 
        &  \ \ \  \downarrow_{\Enc}     &  \ \ \  \downarrow_{\Enc} &  \ \ \  \downarrow_{\Enc} & \\[6pt] 
        & \be'_1 & \circ \dots \circ & \be'_B & \\[6pt]  
        &   \shortparallel    & \shortparallel &   \shortparallel    &  \\[6pt]  
         &(\bepsilon_1 \circ \dots \circ \bepsilon_\beta) & \circ \dots \circ & (\bepsilon_{(B - 1)\beta + 1} \circ \dots \circ  \bepsilon_{B\beta}) & \\[6pt]
         &   \ \ \ \ \ \ \ \ \downarrow_{\pi(\cdot) \oplus \br}    & \ \ \ \ \ \ \ \ \downarrow_{\pi(\cdot) \oplus \br} &   \ \ \ \ \ \ \ \downarrow_{\pi(\cdot) \oplus \br}    &  \\[6pt]
    \by=  & \resizebox{8pt}{!}{(}\bepsilon_{\pi(1)} \circ \dots \circ \bepsilon_{\pi(\beta)} & \circ \dots \circ &  \bepsilon_{\pi\left((B - 1)\beta + 1\right)} \circ \dots  \circ \bepsilon_{\pi(B\beta)}\resizebox{8pt}{!}{)} & \oplus \ \br
\end{array}
\]
    \caption{An overview of our private-key Hamming \aLDC\ decoder, satisfying correct decoding (of any $\be_j$ Hamming block) and consecutive interval querying (of any $\bepsilon_{\pi(r)}$ sub-block) simultaneously.}
    \label{fig:enc-diagram}
\end{figure}
Intuitively, while a constant fraction of the sub-blocks $\bepsilon_{r}$ can still be corrupted with non-negligible probability, as long as the parameter $\beta$ is suitably large (e.g., $\beta = O(\polylog n)$), we can ensure that, except with negligible probability, the overall error in each encoded block $\be'_j$ will still be at most a $\delta$-fraction. Thus, except with negligible probability we can recover $\be_j$ (the uncorrupted content in block $j$) by making $\beta$ consecutive interval queries to obtain $\bepsilon_{r}$ for each $r \in [(j - 1)\beta +1,j\beta]$.  This is exactly what we needed to achieve amortization with our Ideal Insdel compiler. 

The above construction assumes that the sender and receiver share a secret key and only allows the sender to transmit one encoded message. However, if the channel is probabilistic polynomial time (PPT) then the construction can be extended allow for the sender to transmit multiple messages using standard cryptographic tools e.g., pseudo-random functions\cite{ICALP:OstPanSah07,bloZha25}. Furthermore, if we assume stronger resource bounds on the channel (space, sequential depth etc...) then one can use cryptographic puzzles to completely eliminate the requirement for a secret key using ideas from prior works e.g., \cite{SCN:AmeBloBlo22, ITC:BloKulZho20,ISIT:BloBlo21}. 

\paragraph*{Putting it All Together: Ideal Insdel \aLDC s in Relaxed Settings}
With the building blocks established prior, the $\aLDC$ Hamming-to-Insdel compiler and the consecutive interval querying, ideal Hamming $\paLDC$, we construct the first ideal Insdel \aLDC \ within relaxed settings (private-key and resource-bounded).
Naturally, we first construct a private-key, ideal Insdel \aLDC \ by directly using our compiler on the $t$-consecutive interval querying, ideal Hamming $\paLDC$. 
For compiler block size $\tau$, our construction achieves constant amortized locality by setting $t = \Omega(\tau \log^3 n)$ as discussed previously. 

The remaining challenge lies in showing that the compiler's correctness is preserved in private-key settings. 
Intuitively, we argue via a reduction to the security of the underlying Hamming \paLDC. 
Suppose there exists a probabilistic polynomial time adversary $\calA$ who introduces an Insdel corruption pattern into compiled codeword $\bY$ of their chosen message $\bx$ that causes a decoding error with non-negligible probability. 
Then, we can construct a probabilistic polynomial time adversary $\calB$ who introduces a Hamming error pattern into (pre-compiled) codeword $\by$ by (1) using our compiler procedure to transform Hamming codeword $\by$ into the compiled Insdel codeword $\bY$, (2) Giving compiled Insdel codeword $\bY$ to adversary $\calA$ and receiving corrupted Insdel codeword $\tilde{\bY}$, and finally, (3) constructing the pre-compiled codeword $\tilde{\by}$ by the same simulation process done by the compiled decoder via the \FindBlocks \ procedure. 
To streamline the reduction argument, we draw inspiration from Block and Blocki \cite{ISIT:BloBlo21} and repackage our compiler into a convenient form for our reduction. 
All together, we have an Ideal Insdel \paLDC. 

Next, we generalize the \paLDC\ construction to construct an \aLDC\ for resource-bounded settings (\raldc).
Similarly, our first step is constructing a $t$-consecutive interval querying Hamming \raldc. 
To do this, we make use of the Hamming private-key-to-resource-bounded \LDC \ compiler by Ameri et al. \cite{SCN:AmeBloBlo22}, which was observed by Blocki and Zhang to be amortizable with only a constant blow-up to the amortized locality \cite{bloZha25}. 
At a high level, their compiler makes use of a cryptographic primitive known as a {\em cryptographic puzzle} to embed the secret key within the codeword. 

More specifically, cryptographic puzzles consists of two algorithms $\PuzzGen$ and $\PuzzSolve.$
On input seed $\bc$, $\PuzzGen$ outputs a puzzle $\bZ$ with solution is $\bc$ i.e. $\PuzzSolve(\bZ) = \bc.$ 
The security requirement states that adversary $\calA$ in a defined algorithm class $\R,$ cannot solve the puzzle $\bZ$ and cannot even distinguish between tuples $(\bZ_0,\bc_0,\bc_1)$ and $(\bZ_1,\bc_0,\bc_1)$ for random $\bc_i$ and $\bZ_i = \PuzzGen(\bc_i).$

Then, the Hamming $\raldc$ compiler takes a $\paLDC$ $(\p \Gen, \p \Enc, \p \Dec)$, and on on message $\bx$ will (1) generate the secret key $\sk \leftarrow \p \Gen(1^\lambda; \bc)$ with some random coins $\bc$ (2) Compute the \paLDC \ encoding of the message $\p \by \leftarrow \Enc_{\psk}(\bx)$ (3) Compute an encoding $\by_*$ 
of the puzzle $\bZ$ using a code with low locality that recovers the entire message (known as an $\LDC^*$) and (4) output $\by = \p \by \circ \by_*$.
Intuitively, a resource-bounded adversary will not be able to solve the puzzle, while the $\raldc$ decoder $\Dec$ will have sufficient resources to solve the puzzle, recover the secret key $\sk$, and run the amortized local decoder $\Dec_{\psk}$.

We make an additional observation that if the compiler is instantiated with a $t$-consecutive interval querying $\paLDC,$ then the resulting $\raldc$ will also be $t$-consecutive interval querying. 
Thus, when applied to our amortized Hamming-to-Insdel compiler with $t = \Omega(\tau \log^3 n),$ we acheive constant amortized locality. 
Lastly, the same reduction argument as in the ideal Insdel \paLDC \ construction may be used, with an additional subtle detail of allowing sufficient resources for the reduction.
That is, for the class of resource-bounded channels/adversaries $\R$ that our Insdel \aLDC \ is secure against, the underlying Hamming \aLDC \ must be secure against a wider class of adversaries $\overline{\R}$, where $\overline{\R}$ include adversaries $\calA \in \R$ whom can additionally compute the simulation process in the private-key reduction.

\subsection{Preliminaries}
\paragraph*{Notation} Let $x \leftarrow y$ represent a variable assignment of $x$ as $y$ and $x \unif X$ denote a uniform random assignment of $x$ from space $X$.
Let $\circ$ denote the concatenation function.
For $l \leq r,$ $[l,r] := \{l,l+1,\dots,r\}$ denotes an interval from $l$ to $r$ inclusive of its ends points. 
Similarly, $[l,r) := \{l,l+1,\dots, r -1\}$ denotes an interval from $l$ to $r$ exclusive of the right endpoint.
Over an alphabet $\Sigma$, bolded notations $\bx$ denote vectors/words while non-bolded $x \in \Sigma$ denote single symbols.
For a word $\bx$, $x[i]$ denotes the symbol of $\bx$ at index i, while $\bx[l,r] := (x[l],x[l + 1],\dots,x[r])$ denotes the subword of $\bx$ projected to indices in interval $[l,r].$
It will also be useful to define a short-hand for retrieving symbols from ``blocks" of a word $\bx$: $\bx\langle i\rangle_b := \bx[(i - 1)b + 1, ib]$ for any block size $b$ and index $i.$

For any words $\bx,\by \in \Sigma^n$, $\ham(\bx,\by)$ denotes the hamming distance between $\bx$ and $\by$ i.e. $|\{i \in [n]  : x_i \neq y_i \}|$.
Further, for $\bz \in \Sigma^*, \ed(\bx,\bz)$ denotes the edit distance between $\bx$ and $\bz$ i.e. the number of symbol insertions and deletions to transform $\bx$ into $\bz.$
We say $\bx$ and $\by$ (resp. $\bx$ and $\bz$) are $\delta$-hamming-close (resp. $\delta$-edit-close) when $\ham(\bx,\by) \leq \delta |x|$ (resp. $\ed(\bx,\bz) \leq 2\delta|\bx|$).

We recall the formal definitions of an \LDC \ and an amortized \LDC.
\begin{definition}[\LDC]
    A $(n,k)$-code $\calC = (\Enc,\Dec)$ (over $\Sigma$) is a $(\ell,\delta,\eps)$-locally decodable code $(\LDC)$  for hamming errors (resp. insDel errors) if for every $\bx \in \Sigma^k,$ $\tilde \by \in \Sigma^*$ such that $\tilde{\by}$ is $\delta$-hamming-close (resp. $\delta$-edit-close) to $\Enc(\bx),$ and every index $i \in [k],$ we have 
    \begin{align*}
    \Pr\lbrack\msf \Dec^{\tilde \by}(i) &= x[i] \rbrack \geq 1 - \eps,
    \end{align*}
and $\Dec^{\tilde \by}$ makes at most $\ell$ queries to $\tilde \by$.  
\end{definition}
% TODO: say while there is a more general form, we focus, when we use aLDC, we mean interval aLDCs .. departing from the notation of prior work. (We are not achieving constructions under the more general notion)
\begin{definition}[\aLDC \ \cite{bloZha25}]\label{def:aLDC}
    A $(n,k)$-code $\calC = (\Enc,\Dec)$ (over $\Sigma$) is a $(\alpha, \kappa, \delta, \epsilon)$-amortizeable LDC (\aLDC)  for hamming errors (resp. insDel errors) if for every $\bx \in \Sigma^k,$ $\tilde \by \in \Sigma^*$ such that $\tilde{\by}$ is $\delta$-hamming-close (resp. $\delta$-edit-close) to $\Enc(\bx),$ and every interval   $[L,R] \subseteq [k]$ or size $R-L+1 \geq \kappa$  we have 
    \begin{align*}
    \Pr\lbrack\msf \Dec^{\tilde \by}(L,R) &= (x_i : i \in [L,R]) \rbrack \geq 1 - \eps,
    \end{align*}
and $\Dec^{\tilde \by}$ makes at most $\alpha \times (R - L + 1) $ queries to $\tilde \by$. 
\end{definition}

We remark that Blocki and Zhang $\cite{bloZha25}$ also suggested a more general version of Definition \ref{def:aLDC} where the interval $[L,R]$ can be replaced with an arbitrary subset $Q \subseteq [k]$ of of size $|Q| \geq \kappa$. However, they only constructed ideal Hamming \aLDC s under Definition \ref{def:aLDC} i.e., with intervals instead of arbitrary subsets. Our compiler transforms any consecutive interval querying Hamming \aLDC\ to a Insdel \aLDC. While the compiler itself would still work with respect to the more general definition, it seems less likely that one could construct consecutive interval querying Hamming \aLDC s with ideal rate/locality/error tolerance even in relaxed settings when amortizing over arbitrary subsets $Q \subseteq [k]$. We leave this as an open question for future research.

Throughout the paper, it will be assumed codes are over the binary alphabet i.e. ${\Sigma = \{0,1\}}$ unless specified otherwise.
The primary metrics of interest for Hamming/Insdel \aLDC s are the rate $k/n,$ amortized locality $\alpha$, and the error/edit tolerance $\delta.$

\subsection{Organization}
We start with reintroducing and modifying the encoding and decoding procedures of the BBGKZ compiler in Section \ref{sect:bbgkz-modified}. 
The modification will take various steps organized into the following subsections:
in Subsection \ref{subsec:goodblocks}, the notion of good blocks and intervals are introduced to analyze our new \FindBlocks \ procedure in Subsection \ref{subsect:findblocks}. 
We then present the overall construction of our Hamming-to-Insdel \aLDC \ compiler in Subsection \ref{subsect:aldcDecoder}.

In Section \ref{sect:comp-bounded}, we present our construction of an ideal Insdel \aLDC s in settings where the encoder/decoder share a secret-key (\paLDC).
More specifically, this construction first relies on a modified Hamming \paLDC \ construction in Subsection $\ref{subsect:privateIdealHamming}$, where the local decoder performs consecutive interval queries. 
The ideal Hamming \paLDC \ is then compiled into our ideal Insdel \paLDC \ in Subsection \ref{subsect:privateIdealInsdel}.
Lastly, in Section \ref{sect:resource-bounded}, we present  our construction of an ideal Insdel \aLDC s in settings where the channel is resource-bounded (\raldc).
Section \ref{sect:resource-bounded} follows a symmetric structure to the previous section: we present a consecutive interval querying Hamming \raldc \ construction in Subsection \ref{subsect:resourceIdealHamming}, which is then compiled into an ideal Insdel \raldc \ construction in Subsection \ref{subsect:resourceIdealInsdel}.

\section{The Hamming \aLDC \ to Insdel \aLDC \ Compiler}\label{sect:bbgkz-modified}
Our results crucially rely on a modified BBGKZ compiler suited for amortized local decoding. 
We start with the encoding procedure.
Interestingly, we will not need to make any changes to the encoding procedure of the BBGKZ compiler, which takes as parameters the block size $\tau \in \N$ and padding rate $\phi \in (0,1)$: the message $\bx$ will first be encoded into a codeword $\by$ using a Hamming \LDC.
Next, the codeword $\by$ is broken up into blocks of equal size $\tau$ as $\by = \bw_1 \circ \dots \bw_{B}$, where each block and its index $(i \circ \bw_i)$ is 1) encoded with a specifically chosen Insdel code 2) pre-and-post-pended with $\phi \tau$ many $0$s.
The result is the overall codeword $\bY.$
Formally, we describe the encoder $\Enc$ as first computing the Hamming codeword $\by$, then applying a Hamming-to-Insdel compiler $\EncCompile$ to transform it into an Insdel codeword $\bY.$
The encoder $\Enc$ and compiler \EncCompile \ are formally described below.

Let $\h \calC = (\h \Enc, \h \Dec)$ be a Hamming \LDC.
Let $\insdel \calC = (\insdel \Enc, \insdel \Dec)$ be an Insdel binary code.
\begin{weirdFrame}{$\EncCompile_{\phi}(\by)$}
{\setlength{\parindent}{0cm}\textbf{Input}: Hamming codeword $\by \in \{0,1\}^m,$ padding rate $\phi \in (0,1).$}

{\setlength{\parindent}{0cm}\textbf{Output}: Insdel codeword $\bY \in \{0,1\}^n.$}
    \begin{enumerate}
        \item Parse $\by = \bw_1 \circ \dots \circ \bw_{|\by | / \tau}$ where $\bw_j \in \{0,1\}^{\tau}$ for all $j \in [ \ |\by| / \tau \ ]$.
        \item Compute $\bw'_j \leftarrow 0^{\phi\tau} \circ \insdel \Enc(j \circ \bw_j) \circ 0^{\phi\tau}$ for all $j \in [ \ |\by| / \tau \ ]$.
        \item Output $\bY =  \bw'_1  \circ \dots \circ  \bw'_{|\by| / \tau}.$
    \end{enumerate}
\end{weirdFrame}
\begin{weirdoFrame}{$\Enc(\bx)$}
\textbf{Parameters}: Block size $\tau \in \N$, Padding rate $\phi \in (0,1).$

{\setlength{\parindent}{0cm}\textbf{Input}: message $\bx \in \{0,1\}^k.$}

{\setlength{\parindent}{0cm}\textbf{Output}: Insdel codeword $\bY \in \{0,1\}^n.$}
    \begin{enumerate}
        \item Compute $\by \leftarrow \h \Enc (\bx).$
        \item Output $\bY \leftarrow \EncCompile(\by).$
    \end{enumerate}
\end{weirdoFrame}
Our main modifications to the BBGKZ compiler will be in the construction of the decoder $\Dec.$
Recall that the original BBGKZ decoder simulated oracle access to the Hamming \LDC \ decoder $\h \Dec$ by recovering the Hamming blocks corresponding to the requested queries of $\h \Dec.$
Since insertions and deletions in the corrupted codeword $\tilde{\bY}$ modify the structure of the blocks, this simulation is underpinned by a {\em noisy binary search} procedure we denote as the \FindBlock \ procedure. 
More specifically, the \FindBlock \ procedure on input block index $j$ will iteratively cut an initial search radius $[1,\tilde{n}]$ by a fractional amount, until the search radius is a constant size larger than the block size $\tau.$
To decide how to cut the search radius on each iteration, the procedure samples $N = \theta(\log^2 n)$ blocks indices by decoding $N$ noisy $\bw'_{j_1},\dots \bw'_{j_b}$ blocks within the middle of the search radius to $(j_b \circ \bw_{j_b})$. 
On the block indices $j_1,\dots,j_b,$
their median $j_{\msf{med}}$ is compared to the desired block index $j:$ depending if $j < j_{\msf{med}}$, \FindBlock \ will cut off a fraction of the front or the back of the search radius.

Since the search radius decreases by a constant fraction each iteration, there are $O(\log n)$ iterations, where on each iteration, $\theta(log^2 n)$ blocks are read of $O(\tau)$ size.
Thus, the query complexity to answer a single query of $\h \Dec$ is $O(\tau \log^3 n)$.

Unfortunately, amortization is not feasible with the current Hamming decoder simulation method. 
Intuitively, since the \FindBlock \ procedure must query a multiplicative $\polylog \ n$ factor to recovers $\tau$ symbols, the amortized locality will be at least $\Omega(\polylog \ n).$

Our solution will  recover multiple blocks at once using the observation that the \FindBlocks \  procedure can be extended from recovering a single block to an interval of contiguous blocks.
That is, we will modify the noisy binary search procedure to take in a block interval $a \leq b \in [B]$, and we change the stopping condition to be linear to the interval size $b - a + 1.$
Further, the consideration on how to shrink the search radius will take into account the respective endpoints $a$ and $b$ of the interval rather than a single block $j.$

While this change is relatively straight forward on paper, the analysis for the noisy binary search procedure BBGKZ compiler does not immediately follow for our modified procedure \FindBlocks.
Thus, to prove correctness for the modified \FindBlocks \ procedure, we reintroduce the good block analysis from Block et al. \cite{bbgkz20}, and modify their proof structure to take into account the recovery of an interval of blocks instead of a single block.

\subsection{Good Blocks and Intervals}\label{subsec:goodblocks}
In this subsection, we introduce good block notation to analyze the modifications we will make to the BBGKZ decoder and its sub-routine \FindBlock.

\paragraph*{Additional Notation}First, we fix notation consistent with the encoder definition above for the rest of the paper.
We will refer to the Hamming encoding $\by$ in step $1$ as the {\em Hamming encoding} and the final codeword $\bY$ as the {\em Insdel encoding} or simply as the {\em encoding.}
Let $\bx \in \{0,1\}^k$ be the message of size $k$ and let $\by = \h \Enc(\bx) \in \{0,1\}^m$ be the Hamming encoding with size $m$.
Let $B = m / \tau$ be the number of blocks.
Define $\beta$ such that the rate of the overall encoding is $1 / \beta$.
Similarly, define $\insdel \beta$ such that the rate of the Insdel code $\insdel \calC$ is $1 / \insdel \beta$.
Then, the rate of the Hamming encoding is $(\insdel \beta \log B + 2 \phi) / \beta.$ 
Let $\h \delta$ and $\insdel \delta$ be the error tolerances of the Hamming encoding and the Insdel code respectively.
Let each $\bw_j = \by\langle j \rangle_\tau \in \{0,1\}^\tau$ denote {\em Hamming block} $j$.
Similarly, each $\bw'_j \in \{0,1\}^{\beta \tau}$ denotes the {\em Insdel block/ block} $j$.
Then, the overall codeword is $\bY \in \{0,1\}^n$ has length $n = \beta m.$
We consider any corrupted codeword $\tilde{\bY} \in \{0,1\}^{\tilde{n}}$ such that $\ed(\bY,\tilde{\bY}) \leq 2\delta n.$

We define the {\em block decomposition} as a mapping from codeword symbols to their blocks post corruption.
\begin{definition}\label{def:blockDecomposition}
    A block decomposition of $\tilde{\bY}$ is a non-decreasing mapping $\phi: [\tilde{n}] \rightarrow [B].$
\end{definition}
By definition, the pre-image of each block decomposition defines a partition of $[\tilde{n}]$ into $B$ intervals $\{\phi^{-1}(j): j \in [B]\}.$
Each of these intervals define the block structure, where the j'th interval defines the j'th block in the corrupted codeword.
In other words, for each $j \in [B]$, $\phi^{-1}(j)$ are all the indices  corresponding to block $j.$

As in the BBGKZ compiler, our modified decoder will rely on a noisy recovery process, where the decoder will need to locate blocks without knowing the block boundaries that have been modified by insdel errors.
For any search interval $[l,r) \subseteq [\tilde{n}],$ it will be helpful to consider its corresponding block interval i.e. the smallest interval $[L,R - 1] \subseteq [B]$ such that $[l,r) \subseteq [\tau L, \tau (R- 1)].$
Block intervals are defined formally below.
\begin{definition}[Block Interval]\label{def:blockInterval}
    The block interval of an interval $I = [l,r)  \subseteq [\tilde{n}]$ is defined as $\bigcup_{i = l}^{r-1} \phi^{-1}(\phi(i)) \subseteq [\tilde{n}]$. 
    An interval $I$ is a block interval if the block interval of $I$ is itself. 
    Equivalently, every block interval has the form $\BI[a,b] := \bigcup_{i = a}^b \phi^{-1}(j)$ for some $a,b \in [B].$
\end{definition}

We say a block is {\em good} if it does not have too many edit corruptions.
Intuitively, good blocks will correspond to Insdel blocks that are correctly decodable, except with negligible probability.
\begin{definition}[$\gamma$-Good Block]\label{def:goodBlock}
    For $\gamma \in (0,1)$ and $j \in [B]$, a block $j$ is $\gamma$-good (with respect to $\Tilde{\bY}$) if 
    \[\ed(\Tilde{\bY}[\phi^{-1}(j)],\bw_j') \leq \gamma \tau.\]
    Otherwise, block $j$ is $\gamma$-bad.
\end{definition}
{\em Good blocks} may be naturally extended to {\em good intervals}, where the summed edit corruptions of each block within the interval is not too much. 
Additionally, we constrain the number of bad blocks in any good interval.
\begin{definition}[$(\theta,\gamma)$-Good Interval]\label{def:goodInterval}
    A block interval $\BI[a,b]$ is $(\theta,\gamma)$-good (with respect to $\Tilde{\bY}$) if the following hold:
    \begin{enumerate}
        \item $\sum_{j = a}^b \ed\left(\Tilde{\bY}[\phi^{-1}(j)],\bw'_j \right) \leq \gamma \tau(b - a + 1).$
        \item The number of $\gamma$-bad blocks in the block interval $\BI[a,b]$ is at most $\theta \times (b - a +1)$.
    \end{enumerate}
    Otherwise, the interval is $(\theta,\gamma)$-bad.
\end{definition}
Lastly, we define the notion of a locally good block, which captures the idea that any interval including such a block must also be good.
\begin{definition}[$(\theta,\gamma)$-Local Good Block]\label{def:localGoodBlock}
    For $\theta,\gamma \in (0,1),$ block $j \in [B]$ is $(\theta,\gamma)$-local good (with respect to $\Tilde{\bY}$) if for every $a,b \in [B]$ such that $a \leq j \leq b,$ the interval $\BI[a,b]$ is $(\theta,\gamma)$-good.
    Otherwise, block $j$ is $(\theta,\gamma)$-locally bad.
\end{definition}
Block et al. derive useful properties about the size changes and total number of good and bad blocks respectively \cite{bbgkz20}. 
Their observations are summarized in Lemma \ref{lem:goodBlockProperties} below.
\begin{lemma}[\cite{bbgkz20}]\label{lem:goodBlockProperties}
The following all hold:
    \begin{enumerate}
        \item For all $\gamma$-good blocks $j,$ $\size{\phi^{-1}(j)} \in [(\beta - \phi \gamma) \tau,(\beta + \phi \gamma) \tau]$.
        \item For all $(\theta,\gamma)$-good block intervals $\BI[a,b],$ $\size{\BI[a,b]} \in [(b-a + 1)(\beta - \phi\gamma), (b-a + 1)(\beta + \phi\gamma)]$.
        \item The total fraction of $\gamma$-bad blocks is at most $\frac{2\delta\beta}{\gamma\phi}$.
        \item The total fraction of $(\theta,\gamma)$-local-bad blocks is at most $(4/ \gamma\phi) (1 + 1/\theta)\delta \beta.$
    \end{enumerate}
\end{lemma}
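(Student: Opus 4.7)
My plan is to deduce these from the elementary fact that edit distance upper bounds the absolute difference in string lengths, since each insertion (resp.\ deletion) changes the length by exactly $+1$ (resp.\ $-1$). Applied to $\tilde{\bY}[\phi^{-1}(j)]$ versus $\bw'_j$ (of length $\beta\tau$), the definition of a $\gamma$-good block immediately yields $\bigl||\phi^{-1}(j)| - \beta\tau\bigr| \leq \gamma\tau$, and a slightly sharper accounting---observing that the outer $\phi\tau$-zero buffers constrain where edits can push a block boundary---recovers the stated interval with the tighter $\phi\gamma\tau$ slack. Part 2 then follows by summing: $|\BI[a,b]| = \sum_{j=a}^{b}|\phi^{-1}(j)|$, and the total deviation from $(b-a+1)\beta\tau$ is bounded by $\sum_{j=a}^{b}\ed(\tilde{\bY}[\phi^{-1}(j)],\bw'_j)\leq \gamma\tau(b-a+1)$ via the first clause of the good-interval definition.

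\textbf{Part 3.} I would argue by a direct counting argument against the total edit budget. Every edit in transforming $\bY$ into $\tilde{\bY}$ can be charged to a single block, so $\sum_{j=1}^{B}\ed(\tilde{\bY}[\phi^{-1}(j)],\bw'_j) \leq \ed(\bY,\tilde{\bY}) \leq 2\delta n$. A $\gamma$-bad block contributes more than $\gamma\tau$ to the left-hand side, so the number of $\gamma$-bad blocks is at most $2\delta n/(\gamma\tau) = 2\delta\beta B/\gamma$; dividing by $B$ and invoking the same refined accounting as in Part 1 (which localizes edits to the non-padding portion of a block) delivers the claimed $2\delta\beta/(\gamma\phi)$ fraction.

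\textbf{Part 4.} This is the main step. Define $E_i := \ed(\tilde{\bY}[\phi^{-1}(i)],\bw'_i)$ and let $b_i \in \{0,1\}$ indicate whether block $i$ is $\gamma$-bad. A block $j$ is $(\theta,\gamma)$-local-bad iff there exists an interval $[a,b]\ni j$ violating one of the two good-interval clauses, i.e.\ either $\sum_{i=a}^{b} E_i > \gamma\tau(b-a+1)$ or $\sum_{i=a}^{b} b_i > \theta(b-a+1)$. My plan is to bound each failure set separately via the one-dimensional weak-type maximal inequality on $[B]$: for any nonnegative sequence $(f_i)$, the set of indices $j$ for which some interval containing $j$ has average exceeding $t$ has size at most $(2/t)\sum_i f_i$, which is an elementary Vitali-style covering argument on the integers. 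Applied to $f_i=E_i$ with threshold $\gamma\tau$, together with the refined $\phi$-scaled bound from Part 1, this gives at most $\approx 4\delta\beta B/(\gamma\phi)$ offenders of clause (a); applied to $f_i = b_i$ with threshold $\theta$ and using Part 3, it gives at most $(2/\theta)\cdot 2\delta\beta B/(\gamma\phi) = 4\delta\beta B/(\gamma\phi\theta)$ offenders of clause (b). A union bound and division by $B$ yield the claimed $(4/(\gamma\phi))(1+1/\theta)\delta\beta$ fraction. The main obstacle I anticipate is keeping the $\phi$ factor from Part 1 consistent through both applications of the maximal inequality, and verifying that the one-sided maximal function (supremum over all intervals containing $j$, not just those centered at $j$) satisfies the weak-$L^1$ estimate with the constant $2$ needed to match the stated bound; this is bookkeeping rather than a new idea, but it is the most error-prone step.
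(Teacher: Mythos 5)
This lemma is cited from Block et al.\ without a self-contained proof in this paper, so there is no internal proof to compare against. Your overall approach --- length-change from edit distance for Parts 1--2, a charging argument against the total edit budget for Part 3, and a one-dimensional weak-$L^1$/Vitali maximal inequality for Part 4 --- is the natural route and is essentially the same machinery that Block et al.\ use. The Vitali bookkeeping for Part 4 is correct: the uncentered maximal operator on $\mathbb{Z}$ does satisfy a weak-type $(1,1)$ bound with constant $2$ (any finite collection of intervals on the line admits a subcollection of multiplicity at most $2$ covering the same union), and summing the two offender sets for the two clauses of the good-interval definition gives the claimed $(1+1/\theta)$ structure.

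Two things need attention, though. First, your handling of the $\phi$ factor is backwards. With this paper's definitions ($\gamma$-good means $\ed(\tilde{\bY}[\phi^{-1}(j)],\bw'_j)\le\gamma\tau$ and $|\bw'_j|=\beta\tau$), the direct length-change bound gives $|\phi^{-1}(j)|\in[(\beta-\gamma)\tau,(\beta+\gamma)\tau]$, and there is no mechanism by which ``the zero buffers constrain edits'' to tighten this to slack $\phi\gamma\tau<\gamma\tau$: edit distance is a hard lower bound on the length change, and the buffers do not improve it. In fact the rest of the paper (the definition of $\rho$ in \FindBlocks\ and Lemma~\ref{lem:goodBlockToGoodinterval}) consistently uses $\beta\pm\gamma$, not $\beta\pm\phi\gamma$, so the $\phi$'s in the statement of Parts 1 and 2 look like a transcription artifact from Block et al.'s conventions (where the good-block threshold is plausibly $\gamma\phi\tau$). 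For Parts 3 and 4 the $\phi$ appears in the denominator, so your $\phi$-free bounds $2\delta\beta/\gamma$ and $(4\delta\beta/\gamma)(1+1/\theta)$ are \emph{stronger} than stated and imply the lemma trivially --- no ``refined accounting'' is delivering the claimed bound; the claimed bound is simply weaker. Second, your Part 3 silently uses $\sum_j\ed(\tilde{\bY}[\phi^{-1}(j)],\bw'_j)\le\ed(\bY,\tilde{\bY})$. This is false for an arbitrary non-decreasing $\phi$; it holds only when the block decomposition is chosen consistently with an optimal alignment between $\bY$ and $\tilde{\bY}$ (so that each edit in the optimal edit sequence is attributable to exactly one block). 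That choice of $\phi$ is implicit in the statement but your proof should fix it explicitly, since everything in Parts 3 and 4 hangs on it.
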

\subsection{Extending \FindBlock \ to \FindBlocks} \label{subsect:findblocks}
We extend the \FindBlock \ procedure to recover multiple Hamming blocks in an amortization-friendly way.

We start with recalling specific details of the \FindBlock \ procedure.
The \FindBlock \ procedure for recovering a single block takes as input the desired block index $j \in [B]$ and the initial search radius $[1,\tilde{n}].$ 
The procedure will iteratively split the search radius into three contiguous, linear-sized segments, a beginning, a middle, and an end segment, where it will choose to either cut the beginning or the end segment.
To decide which, it randomly samples $N = \theta(\log^2 n)$ indices $i_1,\ldots, i_N$ within the middle segment and recover their respective block indices $j_1,\dots,j_N$.
More specifically, for each index $i_b,$ a subroutine \BlockDec \ is called which queries a $O(\tau)$ interval around $i_b$ to decode the noisy block containing $(j \circ \bw_j)$.
The median $j_{\msf{med}}$ of these retrieved indices is then calculated, and, depending if $j < j_{\msf{med}}$ or $j \geq j_{\msf{med}}$, 
cuts either the end or beginning segment from the search interval respectively.
The binary search continues until the search radius is linear in the size of the block for a sufficiently small constant.

Lastly, an interval decoding procedure is then run over the symbols in the final search interval, where it is guaranteed to recover any local-good block located within the interval.
The subroutines used, \BlockDec \ and the final interval decoding, are based on local and global variations of the SZ-code decoding algorithm, finding the pre/post-pended $0's$ on the Insdel blocks to locate the blocks in between.

\newcommand{\SimFindBlock}{\ensuremath{\texttt{Sim-RecoverBlock}}}
Intuitively, we may modify \FindBlock \ to recover an interval $[a,b] \subseteq [B]$ of contiguous Hamming blocks instead of a single block by stopping the search once its search radius is linear to the {\em size of the interval} versus the size of a single block.
{Lastly, we run a procedure we call $\SimFindBlock$ to recover each block $j \in [a,b]$ within the final search radius $[l,r]$. If we naively call  $\FindBlock^{\tilde{\bY}}(l,r,q)$ repeatedly for every block $q \in [a,b]$ then each call to  $\FindBlock$ will execute its own noisy binary search procedure in the interval $[l,r]$ resulting in redundant queries which would blow up our amortized locality i.e., the total number of queries would be too large i.e. $\Omega((l-r) \polylog (l-r))$. Instead, we simply make $l-r+1$ queries to $\tilde{Y}$ to recover the entire substring $\tilde{Y}[l,r]$ and then run $ \SimFindBlock(\tilde{\bY}[l,r], l,r,q)$ for each $q \in [a,b]$ where  $\SimFindBlock(\tilde{\bY}[l,r], l,r,q)$ uses the hardcoded string $\tilde{Y}[l,r]$ to simulate the execution of $\FindBlock^{\tilde{\bY}}(l,r,q)$. 
% The $\SimFindBlock$ procedure is the same as the $\FindBlock$ procedure (same input and output) except with the modification that each symbol in interval $[l,r]$ is queried at most once. 
%In particular, if an index $i \in [l,r]$ of the (corrupted) codeword $\tilde{Y}$ would have been queried by independent calls to $\FindBlock$ (i.e., $\FindBlock^{\tilde{Y}}(l,r,q)$ and $\FindBlock^{\tilde{Y}}(l,r,q')$ both query $\tilde{Y}_i$ for $q\neq q' \in [a,b]$) then we remember $\tilde{Y}_i$   so that we only need to query it once. 
}
We call this overall extended procedure \FindBlocks, and present the construction below (modifications to the \FindBlock \ procedure are in \textcolor{MidnightBlue}{blue}).
\begin{weirdFrame}{$\FindBlocks^{\tilde{\bY}}(l,r,a,b)$}
    {\setlength{\parindent}{0cm}\textbf{Parameters}: Block size $\tau \in \N, \gamma \in (0,1)$}
    
    {\setlength{\parindent}{0cm}\textbf{Input}: Search range $l < r \in [\tilde{n} + 1],$ \new{block range $a \leq b \in [B]$}}

    {\setlength{\parindent}{0cm}\textbf{Output}: pre-compiled blocks $\bs \in \{0,1\}^{\new{(b - a + 1)} \tau}$}

    {\setlength{\parindent}{0cm}Let $c = 36 (\beta - \gamma)$.}   

    {\setlength{\parindent}{0cm}Let $\rho = \min \left \{ \frac{1}{4} \times \frac{\beta - \gamma}{\beta + \gamma}, 1 - \frac{3}{4} \times \frac{\beta + \gamma}{\beta - \gamma}\right\}$.}  
    
    \begin{enumerate}
        \item While $r - l > c\new{(b -a +1)}\tau$:
        \begin{enumerate}
            \item Let $m_1 \leftarrow (1 - \rho) l + \rho r,$ and $m_2 \leftarrow \rho l + (1 - \rho) r.$
            \item For $p = 1,\dots,N = \theta (\log^2n)$, 
            \begin{enumerate}
                \item randomly sample $i_p \unif [m_1,m_2)$ 
                \item set $S \leftarrow \BlockDec^{\tilde{\bY}}(i_p)$
                \item If $S = \bot,$ ignore and continue 
                \item Else, parse $(j_p,\bw_{j_p}) \leftarrow S$
            \end{enumerate}
            \item Let $j_{med} \leftarrow $ median$(j_1,\dots,j_N)$ (ignore $\bot$) 
            \item \new{If $a > j_{med}$, set $l = m_1.$}
            \new{Otherwise, set $r = m_2$.}
        \end{enumerate}
        \item \new{ Make $(l-r+1)$ queries to recover $\tilde{\bY}[l,r]$}.
        \item \new{Output $\left [ \SimFindBlock(\tilde{\bY}[l,r], l,r,q) : q \in [a,b] \right ]$}.
        
        % \jrem{Output $\IntDec^{\tilde{\bY}}(l,r,a,b)$}
    \end{enumerate}
\end{weirdFrame}

% \jrem{A subtle modification from the \FindBlock \ procedure is that we call \IntDec \ to recover an interval of Hamming blocks rather than a Hamming single block i.e. \IntDec \ outputs a word of size $(b - a + 1) \tau$ for an input interval $[a,b]$.
% While we still have the guarantee that \IntDec \ recovers any local-good block, there may be bad blocks in the input interval $[a,b]$. 
% For such blocks, we will have \IntDec \ return $0^\tau$ arbitrarily. 
% We will assume these Insdel blocks are entirely corrupted and argue that on any invocation of \FindBlocks \ with any input interval $[a,b],$ the number of incorrectly decoded blocks returned by \IntDec \ will be sufficiently small, except with negligible probability.}

The first step is to prove that \FindBlocks \ on input $(1,\tilde{n} + 1,a,b)$ for any block interval $[a,b] \subseteq [B]$ will return all Hamming blocks $\bw_j$ within that interval, given that the blocks are good.
We will then argue that by a careful setting of parameters, the number of correctly recovered blocks is sufficient for an overall correct decoding process.

To start, we recall theorems proved by Block et al. on the correctness of the key procedures \BlockDec \ and \FindBlock\ when applied over good/local-good blocks.
 {We remark that in the original BBGKZ compiler, \BlockDec\ on codeword index $i$ outputs the entire corresponding Hamming block encoding of $j \circ \bw_j$. 
In \FindBlocks, we have \BlockDec\ just return the block index $j$.}
% Give an english explanation of lemma 9 -- intuitively, running block decode any gamma good block i, it will return j with high probability
\begin{lemma}[\cite{bbgkz20}, Theorem 20.1]\label{thm:20.1}
     {There exists $c_1 < 1$ and $c_2 \geq 1$ such that} for $\delta = c_1\insdel \delta$, $\tau = c_2\log n$ and $\gamma$-good block $j$,
    \[\Pr_{i \in \phi^{-1}(j)}\left[\BlockDec^{\Tilde{\bY}}(i) \neq j\right] \leq \gamma,\]
    % with randomness over $\BlockDec.$
    Additionally, $\BlockDec$ makes $O(\tau)$ queries to $\Tilde{\bY}.$
\end{lemma}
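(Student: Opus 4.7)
The plan is to show that for an index $i \in \phi^{-1}(j)$ sampled uniformly at random from a $\gamma$-good block, the BlockDec procedure correctly locates the Insdel block $\bw_j'$ and successfully decodes $(j \circ \bw_j)$ via the SZ decoder, except for a $\gamma$-fraction of indices. The BlockDec procedure queries a window of $O(\tau)$ symbols centered around $i$, then scans the window for the zero buffers $0^{\phi\tau}$ that delineate the pre/post-compiled Insdel block $\bw_j'$. Once the buffers are located, the buffers are stripped away, and the SZ decoder is applied to the middle portion to recover $(j \circ \bw_j)$.

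First, I would fix parameters. Setting $\delta = c_1 \insdel\delta$ for a sufficiently small $c_1 < 1$ ensures that the number of edits contained entirely within any $\gamma$-good block is bounded by $\gamma\tau \ll \insdel\delta\cdot|\bw_j'|$, which is well within the edit-tolerance of the Insdel code $\insdel\calC$ (so the SZ decoder succeeds on any correctly located block). Setting $\tau = c_2 \log n$ for $c_2$ large enough ensures that a window of size $O(\tau)$ around $i$ contains the entirety of block $j$'s corrupted content $\tilde{\bY}[\phi^{-1}(j)]$ together with enough zero buffer padding on both sides to identify the block boundaries uniquely (via Lemma \ref{lem:goodBlockProperties} item 1, since $|\phi^{-1}(j)| \in [(\beta - \phi\gamma)\tau, (\beta+\phi\gamma)\tau]$ for any $\gamma$-good block).

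Next, I would argue that for the vast majority of indices $i \in \phi^{-1}(j),$ the query window $[i - O(\tau), i + O(\tau)]$ captures both surrounding buffers with enough uncorrupted $0$s to pin down the block boundaries correctly. Intuitively, if block $j$ is $\gamma$-good, at most $\gamma\tau$ edits are distributed across it, so the zero buffer of length $\phi\tau$ on each side is left mostly intact. Only indices $i$ positioned within $O(\gamma\tau)$ of a heavily corrupted buffer stretch could cause the buffer detection procedure to misidentify the block endpoints; such indices constitute at most a $\gamma$-fraction of $\phi^{-1}(j)$ for an appropriate choice of constants. On the remaining $(1-\gamma)$-fraction of indices, the buffer-scan correctly isolates $\tilde{\bY}[\phi^{-1}(j)]$, and since this differs from $\bw_j'$ in edit distance at most $\gamma\tau \le \insdel\delta |\bw_j'|$, the SZ decoder correctly outputs $(j \circ \bw_j)$, from which BlockDec returns the index $j$.

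Finally, the query complexity follows immediately since BlockDec reads at most a window of $O(\tau)$ consecutive symbols from $\tilde{\bY}$, plus a constant amount of post-processing to locate the buffers and run the SZ decoder. The main obstacle is the careful calibration of the constants $c_1, c_2, \phi, \gamma$ so that (i) the buffer-detection error is absorbed into a $\gamma$-fraction of starting indices rather than contributing independently, and (ii) the residual edit budget inside the isolated window is small enough for the SZ decoder to recover exactly. This parameter-juggling is standard but delicate; once the ranges are fixed consistently with Lemma \ref{lem:goodBlockProperties}, the bound $\Pr_i[\BlockDec^{\tilde{\bY}}(i) \neq j] \leq \gamma$ follows.
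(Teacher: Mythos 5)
This lemma is imported verbatim from Block et al.~\cite{bbgkz20} (their Theorem~20.1), and the present paper does not reprove it, so your proposal has to be judged against what that external proof must contain rather than against anything in this manuscript.

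Your sketch gets the overall shape right: a $\gamma$-good block has at most $\gamma\tau$ edits, Lemma~\ref{lem:goodBlockProperties} item~1 gives that $|\phi^{-1}(j)|$ is within $O(\tau)$ of $\beta\tau$, so an $O(\tau)$ window around $i$ covers the whole corrupted block, and the intended mechanism is buffer detection followed by running the insdel decoder $\insdel\Dec$ on the stripped content. The query bound is also immediate, as you say.

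The substantive gap is the central step you wave at: \emph{``Only indices $i$ positioned within $O(\gamma\tau)$ of a heavily corrupted buffer stretch could cause the buffer detection procedure to misidentify the block endpoints.''} This is false in general, and the reason it becomes true in the BBGKZ argument is the special structural property of the Schulman--Zuckerman code that the paper explicitly singles out: the relative Hamming weight of every logarithmic-sized interval of the SZ codeword is bounded by a fixed constant. That density guarantee is precisely what prevents the adversary from spending a small part of its $\gamma\tau$ budget to \emph{manufacture} a zero-buffer-looking run in the middle of $\insdel\Enc(j\circ\bw_j)$; absent such a guarantee, a few well-placed flips could create a spurious $0^{\phi\tau}$ stretch inside the payload, and then \BlockDec{} would misparse the block for \emph{every} starting index $i\in\phi^{-1}(j)$, not merely a $\gamma$-fraction. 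Your proof never invokes this property, so as written the ``only a $\gamma$-fraction of starting indices is bad'' conclusion does not follow. Relatedly, you assert that the bad set is exactly a $\gamma$-fraction ``for an appropriate choice of constants,'' but the quantity you would naturally obtain from the argument you outline is something like a $\gamma\tau$-sized set of positions out of a block of size $\approx\beta\tau$, i.e.\ a $\gamma/\beta$ fraction; closing this to match the stated $\gamma$ bound requires the explicit accounting (via the SZ weight property and the buffer length $\phi\tau$) that the BBGKZ proof carries out and that is missing here.
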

\begin{lemma}[\cite{bbgkz20}, Theorem 19]\label{thm:20.2}
    Let $[l,r) \subseteq [\tilde n]$ be any interval.
    % Let $\bs_q \leftarrow \left [ \FindBlock^{\tilde{\bY}}(l,r,q - a + 1) : q \in [a,b] \right ]$ for any $q \in [b - a + 1]$. 
    Let $S_j$ be the random variable corresponding to $ \SimFindBlock(\tilde{\bY}[l,r], l,r,q)$.
    % $\bs\langle j\rangle_\tau$ for any $j \in [a,b].$
    For any $j \in [B]$ such that $\phi^{-1}(j) \subset [l,r]$ and block $j$ is $(\theta,\gamma)$-local-good, then \[\Pr[S_j = \bw_j] \geq 1 - \negl(\lambda).\]
    Additionally, the computation of $\bs$ makes at most $O(c \times (b - a + 1))$ queries to $\tilde{\bY}$. 
\end{lemma}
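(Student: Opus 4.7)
The plan is to observe that $\SimFindBlock$ on hardcoded input $\tilde{\bY}[l,r]$ is a deterministic simulation of the ``final interval-decoding'' phase of the BBGKZ block-recovery routine, operating on the substring in memory rather than via oracle queries. Consequently \SimFindBlock itself makes no calls to $\tilde{\bY}$, and the query bound is just the $r - l + 1$ queries incurred by step 2 of \FindBlocks to materialize the substring, which is $O(c(b-a+1)\tau)$ by the termination condition of the outer loop. The only nontrivial content is correctness, and my plan is to port the argument of Theorem 19 in \cite{bbgkz20} almost verbatim, adapting it to the present notation and to the fact that the target index $q$ may lie anywhere in $[l,r]$ rather than being the initial input.

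The correctness argument proceeds in three steps. First, I leverage the $(\theta,\gamma)$-local-good hypothesis on $j$: since $\phi^{-1}(j) \subset [l,r]$, take the smallest block interval $\BI[a',b']$ containing $[l,r]$ with $j \in [a',b']$; by local-goodness the interval $\BI[a',b']$ is $(\theta,\gamma)$-good, so Lemma \ref{lem:goodBlockProperties} bounds both its length and the number of $\gamma$-bad blocks inside it by $\theta(b'-a'+1)$. Second, I analyze the sliding-window $0$-density test that \SimFindBlock uses to locate block boundaries within $\tilde{\bY}[l,r]$. Each pristine $\phi\tau$-buffer is pure $0$s, while each SZ-encoded region has $0$-density at most $2/5$ in every logarithmic-sized window by the Schulman--Zuckerman property; corruption shifts each density by at most $\gamma$ per $\gamma$-good block, and is unbounded only on at most $\theta(b'-a'+1)$ bad blocks. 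By choosing $\phi,\gamma,\theta$ so that the clean densities are separated by a constant gap strictly exceeding $\gamma$, the test correctly identifies the buffers flanking $\phi^{-1}(j)$, and hence the boundaries of block $j$, whenever block $j$ and both its immediate neighbors are $\gamma$-good. Third, once these boundaries are correctly located, the edit distance on $\bw'_j$ is at most $\gamma\tau \leq \insdel\delta \cdot |\bw'_j|$ (by setting $\gamma \leq c_1 \insdel\delta$ as in Lemma \ref{thm:20.1}), so the SZ decoder correctly recovers $(j,\bw_j)$, and the second coordinate is returned.

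The main obstacle I expect is the buffer-detection step: one must handle the worst-case spread of the total edit budget $\gamma\tau(b'-a'+1)$ across the local-good interval and argue that the two specific buffers adjacent to $\phi^{-1}(j)$ remain dominated by their pristine $0$-content, even if an adversary concentrates corruption near block $j$. This is a careful bookkeeping argument where $\phi$ must be chosen large enough relative to $\gamma$ (while still preserving the overall rate) so that a constant density gap survives even after pessimistic redistribution of errors into the buffers. The $\negl(\lambda)$ probability bound is inherited from a Chernoff tail over the $\tau = \Theta(\log n)$ independent symbols inside the SZ decoder's window, together with a union bound over the polynomially many window positions examined by the sliding test. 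No new randomness beyond that already present in the BBGKZ decoder is introduced, so the resulting bound matches \cite{bbgkz20} up to constants in the exponent.
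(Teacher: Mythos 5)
The paper does not actually prove this lemma --- it is stated as a direct citation of Theorem 19 in \cite{bbgkz20}, and the only original content is the remark that immediately follows it, which observes that the BBGKZ argument for $\FindBlock^{\tilde{\bY}}(1,\tilde{n},j)$ over the full codeword carries over verbatim once the search is restricted to a sub-interval $[l,r)$ with $\phi^{-1}(j)\subset[l,r]$, and that $\SimFindBlock$ simply replays that computation on the materialized substring $\tilde{\bY}[l,r]$. Your proposal instead attempts to reconstruct the \emph{internal} argument of Theorem 19 (sliding-window $0$-density tests over SZ-encoded blocks, buffer detection, and then SZ decoding), which is a substantially more ambitious and self-contained route than the paper takes, and is not something that can be verified from this paper alone since those mechanics are never developed here. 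The reconstruction is plausible as a sketch and its structure matches the paper's informal description of the compiler, but two details need repair. First, you justify the final SZ decode by ``setting $\gamma \leq c_1\insdel\delta$ as in Lemma~\ref{thm:20.1},'' but that lemma's constraint is $\delta = c_1\insdel\delta$; the constraint you actually need is $\gamma\tau \leq \insdel\delta\,|\bw'_j| = \insdel\delta\,\beta\tau$, i.e.\ $\gamma \leq \insdel\delta\,\beta$, which is a different condition and is what Theorem~\ref{thm:constraints} is implicitly guaranteeing. Second, your buffer-detection step presumes ``block $j$ and both its immediate neighbors are $\gamma$-good''; $(\theta,\gamma)$-local-goodness of $j$ only yields this when $\theta < 1/3$ (by applying Definition~\ref{def:goodInterval} to the three-block interval around $j$), and that threshold should be stated explicitly, since local-goodness by itself bounds a \emph{fraction} of bad neighbors, not their absolute number. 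On query complexity your accounting is actually cleaner than the lemma's statement: $\SimFindBlock$ makes no oracle queries since it runs on the hardcoded substring, so the queries attributed to ``the computation of $\bs$'' are the $r-l+1 \le c(b-a+1)\tau$ queries from step~2 of $\FindBlocks$, and the lemma statement appears to have dropped the $\tau$ factor.
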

{We remark that the original statement of Lemma \ref{thm:20.2} pertains to the output of $\FindBlock^{\tilde{\bY}}(1,\tilde n,j)$, i.e., the output of $\FindBlock$ when ran over the entire codeword interval $[1,\tilde n]$. 
However, one can observe that the proof extends to any general search interval $[l,r) \subseteq [\tilde n]$ as long as $\phi^{-1}(j) \subset [l,r].$
Hence, we may state the equivalent result for \SimFindBlock \ in Lemma \ref{thm:20.2}.}

We derive the query complexity of \FindBlocks \ based on the query complexities of \BlockDec\ and \SimFindBlock\ stated in the introduced Lemmas \ref{thm:20.1} and \ref{thm:20.2}.

\begin{lemma}\label{lem:findBlocksQueryComplexity}
    For any word $\tilde{\bY} \in \{0,1\}^{\tilde{n}}$, interval $[a,b] \subseteq [B]$, and constants $\beta,\gamma > 0,$ on input $(1,\tilde{n} + 1,a,b),$ $\FindBlocks$ has query complexity $O((b - a)\tau + \tau\log^3n)$.
\end{lemma}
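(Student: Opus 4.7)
The plan is to bound the query complexity by splitting the cost of \FindBlocks\ into three pieces: the queries made inside the while loop, the bulk read of $\tilde{\bY}[l,r]$ after the loop, and the post-processing by \SimFindBlock. Since \SimFindBlock\ operates on the hardcoded substring $\tilde{\bY}[l,r]$, it contributes no new queries to $\tilde{\bY}$, so only the first two pieces matter.

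First I would show that the search interval contracts by a constant multiplicative factor on each iteration of the while loop. By direct computation, if the update rule sets $l \gets m_1 = (1-\rho)l + \rho r$ or $r \gets m_2 = \rho l + (1-\rho)r$, then in either case the new interval has length exactly $(1-\rho)(r-l)$. Starting from the initial interval of length $\tilde n = O(n)$ and iterating until the length drops below $c(b-a+1)\tau$, the number of iterations is at most
\[
O\!\left(\log \frac{\tilde n}{c(b-a+1)\tau}\right) = O(\log n).
\]

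Next I would bound the per-iteration query cost. Each iteration calls $\BlockDec^{\tilde{\bY}}(i_p)$ a total of $N = \Theta(\log^2 n)$ times, and by Lemma~\ref{thm:20.1} each call makes $O(\tau)$ queries to $\tilde{\bY}$. Hence each iteration costs $O(\tau \log^2 n)$ queries, and over all $O(\log n)$ iterations the while loop contributes $O(\tau \log^3 n)$ queries in total.

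Finally, once the loop terminates, the residual interval $[l,r]$ satisfies $r - l \leq c(b-a+1)\tau$ by the loop condition, so Step~2 uses at most $O((b-a+1)\tau)$ queries to read $\tilde{\bY}[l,r]$, and Step~3 uses none. Summing the three contributions yields the claimed bound $O((b-a)\tau + \tau \log^3 n)$. I don't expect any serious obstacle here — the only point that requires care is verifying that the contraction factor $(1-\rho)$ is indeed a fixed constant in $(0,1)$ (which follows from $\beta > \gamma$ in the parameter setup) so that the geometric-series argument gives a genuine $O(\log n)$ iteration count rather than something parameter-dependent.
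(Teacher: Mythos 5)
Your proof is correct and follows essentially the same decomposition as the paper's: constant-factor contraction of the search interval gives $O(\log n)$ iterations, each costing $N \cdot O(\tau) = O(\tau \log^2 n)$ queries via \BlockDec, plus $O((b-a+1)\tau)$ queries for the final read. If anything you are slightly more careful than the paper in two spots — explicitly noting that $\SimFindBlock$ contributes zero new oracle queries since it operates on the hardcoded substring, and flagging that $\rho$ must be bounded away from $0$ (which the parameter constraints guarantee, though you should cite the full condition $\frac{\beta+\gamma}{\beta-\gamma} < 4/3$ rather than just $\beta > \gamma$) — but the argument is the same.
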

\begin{proof}
    The search radius at each iteration is cut by a $\rho \geq 1/4$ fraction at each iteration. 
    Thus, there are at most $O(\log n)$ iterations.
    On each iteration, $N = \theta(\log^2 n)$ calls to the $\BlockDec$ procedure are made, where each call makes $O(\tau)$ queries by Lemma \ref{thm:20.1}.
    Lastly, we run \SimFindBlock \ on an interval of length at most $c(b - a + 1) \tau$ resulting in at most $c(b - a + 1) \tau$ queries. By Lemma \ref{thm:20.2}, the overall query complexity is  $O((b - a)\tau + \tau\log^3n)$.
\end{proof}

The \FindBlocks \ procedure may be called to retrieve the Hamming blocks of any interval. We will guarantee that if the corresponding blocks are local-good, they are recovered except with negligible probability.
\begin{restatable}{theorem}{findblocks}\label{thm:findBlocks}
    There exists constants $\gamma,\theta \in (0,1)$  {$, c_1 < 1$ and $c_2 \geq 1$} such that for $\delta = c_1\insdel \delta$, ${\tau = c_2 \log n}$, and any interval $[a,b] \subseteq [B],$ if there is a $j \in [a,b]$ such that block $j$ is $(\theta,\gamma)$-local-good, $\FindBlocks^{\tilde{\bY}}(1,\tilde{n} + 1, a,b)\langle j - a + 1\rangle_\tau = \bw_j$ except with $\negl(n)$ probability.
\end{restatable}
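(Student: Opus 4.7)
The plan is to maintain the invariant $\BI[a,b] \subseteq [l,r)$—equivalently, $\phi^{-1}(q) \subseteq [l,r)$ for every $q \in [a,b]$, including the specific local-good $j$—throughout the \texttt{while} loop of \FindBlocks, and then invoke Lemma \ref{thm:20.2} on $\SimFindBlock(\tilde{\bY}[l,r], l, r, j)$ to recover $\bw_j$. At the outset, constants $\gamma, \theta \in (0,1)$, $c_1 < 1$, and $c_2 \geq 1$ are fixed so that (i) Lemma \ref{thm:20.1} applies and $\BlockDec$ errs with probability at most $\gamma$ on any $\gamma$-good block; (ii) Lemma \ref{lem:goodBlockProperties} controls the size of good blocks and block intervals and bounds the fractions of bad and local-bad blocks; and (iii) the numerical inequalities used below in the cut-safety analysis hold (e.g., $9(\beta-\gamma) \geq \beta + \phi\gamma$ and $O(\theta) + \gamma < 1/2$ for the constant hidden in the $O(\cdot)$). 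The base case of the induction is immediate because $\BI[a,b] \subseteq [\tilde n] \subseteq [1, \tilde n + 1)$.

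For the inductive step, note that the ``true median'' block of the middle segment, $j^* := \phi(\lfloor (l+r)/2 \rfloor)$, satisfies $j^* < a$ iff $u_a > (l+r)/2$ and $j^* \geq a$ iff $u_a \leq (l+r)/2$. I first argue that with probability $1 - \negl(n)$, the sampled median $j_{\msf{med}}$ lies on the same side of $a$ as $j^*$. A sample $j_p = \BlockDec^{\tilde{\bY}}(i_p)$ is erroneous if either $i_p$ lands inside a $\gamma$-bad block or if $\BlockDec$ fails on a $\gamma$-good block (the latter contributes at most $\gamma$ by Lemma \ref{thm:20.1}). To bound the former, I invoke local-goodness of $j$ on an \emph{extended} block interval joining $j$ to $\BI[m_1, m_2)$—legitimate because $\phi^{-1}(j) \subseteq [l,r)$ by the inductive hypothesis places $j$ within a constant-factor blow-up of the middle segment in block count (using the per-block and per-interval size bounds from Lemma \ref{lem:goodBlockProperties})—to conclude that the bad-block fraction inside $\BI[m_1, m_2)$ is $O(\theta)$. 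For small $\gamma, \theta$, each sample is thus erroneous with probability strictly less than $1/2$, so the median over $N = \Theta(\log^2 n)$ samples concentrates via a standard Chernoff bound with failure probability $\exp(-\Omega(\log^2 n)) = \negl(n)$. I then verify that both possible cuts preserve the invariant: in the cut-front case $(u_a > (l+r)/2)$, the new $l = m_1 = (1-\rho)l + \rho r < (l+r)/2 < u_a$, so $u_a \geq l$ is maintained while $r$ is unchanged; in the cut-back case $(u_a \leq (l+r)/2)$, I need $v_b \leq m_2 - 1$, which via $v_b \leq (l+r)/2 + (b-a+1)(\beta+\phi\gamma)\tau - 1$ (from Lemma \ref{lem:goodBlockProperties}) reduces to $(b-a+1)(\beta+\phi\gamma)\tau \leq (1/2 - \rho)(r - l)$; this follows from the loop condition $r - l > c(b-a+1)\tau = 36(\beta-\gamma)(b-a+1)\tau$ with $\rho = 1/4$ and the chosen parameter inequality $9(\beta-\gamma) \geq \beta + \phi\gamma$.

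Because each iteration shrinks $r - l$ by factor $1 - \rho$, the loop terminates in $O(\log n)$ iterations, and a union bound over them keeps the total invariant-failure probability at $O(\log n) \cdot \negl(n) = \negl(n)$. At termination $\phi^{-1}(j) \subseteq [l,r)$ and block $j$ remains $(\theta,\gamma)$-local-good, so Lemma \ref{thm:20.2} delivers $\SimFindBlock(\tilde{\bY}[l,r], l, r, j) = \bw_j$ except with negligible probability, which is precisely the output at position $j - a + 1$ of $\FindBlocks^{\tilde{\bY}}(1, \tilde n + 1, a, b)$. The main technical obstacle is the bound on the bad-block fraction inside $\BI[m_1, m_2)$ when $j$ lies outside the middle segment, since local-goodness of $j$ does not apply to $\BI[m_1, m_2)$ directly; the extended-interval trick described above resolves this, and is the primary place the inductive hypothesis $\BI[a,b] \subseteq [l,r)$ is invoked (to control the ratio between the extended interval and $\BI[m_1, m_2)$).
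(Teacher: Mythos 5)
Your overall framework—maintain the invariant $\bigcup_{q\in[a,b]} \phi^{-1}(q) \subseteq [l,r)$ across iterations, union-bound over the $O(\log n)$ iterations, and invoke Lemma~\ref{thm:20.2} at termination—is exactly the structure of the paper's proof, which packages the per-iteration step as Lemma~\ref{lem:iterateFindBlocks}. Your inlined ``extended interval'' argument for bounding the bad-block fraction in $\BI[M_1,M_2-1]$ by $O(\theta)$ is an acceptable paraphrase of Lemma~\ref{lem:goodBlockToGoodinterval}, and the appeal to Lemma~\ref{thm:20.1} for the $\BlockDec$ error rate on good blocks matches.

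The per-iteration concentration step, however, has a genuine gap. You claim that with probability $1-\negl(n)$ the sample median $j_{\msf{med}}$ falls on the same side of $a$ as $j^* = \phi(\lfloor(l+r)/2\rfloor)$, but this is false when $\min\phi^{-1}(a)$ is close to $(l+r)/2$: the middle segment $[m_1,m_2)$ is then split nearly evenly between blocks $<a$ and blocks $\geq a$, so even with every $\BlockDec$ call correct the sample median lands on either side of $a$ with constant probability. A Chernoff bound controls the \emph{number} of erroneous samples; it cannot push the median to a particular side of an arbitrary cutoff $a$ unless essentially all correct samples already lie on that side of $a$. Lemma~\ref{lem:iterateFindBlocks} avoids this with a different case split: (i) if $\bigcup_{q\in[a,b]}\phi^{-1}(q) \subseteq [m_1,m_2)$, \emph{both} cuts preserve the invariant and no concentration argument is needed; (ii) otherwise the loop condition forces $[a,b]$ to protrude past at most one of $m_1,m_2$, and, e.g., when $a\leq M_1$ every $i\in[m_1,m_2)$ satisfies $\phi(i)\geq M_1\geq a$, so the event $\{j_{\msf{med}}<a\}$ is dominated by $\{\#\{p:\BlockDec(i_p)\neq\phi(i_p)\}\geq N/2\}$, which Lemma~\ref{lem:probBlockDecOverGoodinterval} plus Chernoff bounds by $\negl(n)$. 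Your cut-safety calculation in fact has enough slack to show that in the boundary regime both cuts are safe—so the theorem is rescuable—but as written the proof conditions on $\min\phi^{-1}(a)$ versus $(l+r)/2$ while the algorithm branches on $j_{\msf{med}}$ versus $a$, and the concentration claim meant to tie these together does not hold.
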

The proof of Theorem \ref{thm:findBlocks} will rely on several auxiliary lemmas. 
We reintroduce the follow two lemmas from the prior work of Block et al. {Recall that the rate of our constructed code is $1 / \beta.$}
\begin{lemma}[\cite{bbgkz20}, Lemma 23]\label{lem:probBlockDecOverGoodinterval}
    Let $[l,r)$ be an interval with block interval $\BI[L,R]$.
    Suppose $\BI[L,R - 1]$ is a $(\theta,\gamma)$-good interval. Then, 
    \[\Pr_{i \in [l,r)}[\BlockDec^{\tilde{\bY}}(i) \neq \phi(i)] \leq \gamma + \theta + \frac{\gamma}{\beta}.\]
\end{lemma}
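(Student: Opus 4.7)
The plan is to decompose the failure event based on which block the sampled position $i \in [l,r)$ lies in. Let $s := |[l,r)|$, and write $G \subseteq [L, R-1]$ for the $\gamma$-good blocks and $B := [L, R-1] \setminus G$ for the $\gamma$-bad blocks. Since $\phi(i) \in \{L, L+1, \dots, R\}$, I split the failure event into three pessimistic cases: (i) $\phi(i) \in G$ with $\BlockDec$ returning the wrong index, (ii) $\phi(i) \in B$ (pessimistically counted as a failure since the block is bad), and (iii) $\phi(i) = R$ (pessimistically counted as a failure since block $R$ need not be $\gamma$-good). A union bound over these cases then yields
\[
\Pr_{i \in [l,r)}[\BlockDec^{\tilde{\bY}}(i) \neq \phi(i)] \;\leq\; \frac{1}{s}\!\left(\sum_{j \in G} \left|\{i' \in \phi^{-1}(j) \cap [l,r) : \BlockDec^{\tilde{\bY}}(i') \neq j\}\right| \;+\; \sum_{j \in B} |\phi^{-1}(j) \cap [l,r)| \;+\; |\phi^{-1}(R) \cap [l,r)|\right).
\]

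For the good-block term, I apply Lemma \ref{thm:20.1} to each $j \in G$ to get $|\{i' \in \phi^{-1}(j) : \BlockDec^{\tilde{\bY}}(i') \neq j\}| \leq \gamma |\phi^{-1}(j)|$; summing and using that restricting to $[l,r)$ only decreases the count gives a contribution of at most $\gamma |\BI[L, R-1]|/s$, which will be bounded by $\gamma$ modulo boundary corrections. For the bad-block term, the $(\theta, \gamma)$-good assumption on $\BI[L, R-1]$ bounds $|B| \leq \theta(R-L)$, and combining this with the size bounds in Lemma \ref{lem:goodBlockProperties}—namely, individual good blocks have size in $[(\beta - \phi\gamma)\tau, (\beta + \phi\gamma)\tau]$ and $|\BI[L, R-1]| \in [(R-L)(\beta - \phi\gamma)\tau, (R-L)(\beta + \phi\gamma)\tau]$—lets me upper bound the total mass of bad blocks as essentially $\theta|\BI[L,R-1]|$, producing the $\theta$ term after normalizing by $s$. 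For the block-$R$ term, I exploit the fact that $[l, r)$ fully contains the middle blocks $\phi^{-1}(L+1), \dots, \phi^{-1}(R-1)$, so $s$ is comparable to $|\BI[L,R-1]|$, and the ratio $|\phi^{-1}(R)|/s$ then reduces to the typical per-block share of the interval size, which after accounting for the nominal rate $1/\beta$ yields the $\gamma/\beta$ contribution.

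The main technical obstacle is to cleanly absorb the $\phi\gamma$ slack terms from Lemma \ref{lem:goodBlockProperties} — including (a) the gap between $|\BI[L,R-1]|$ and its nominal value $(R-L)\beta\tau$, (b) the possible mismatch between $s$ and $|\BI[L,R-1]|$ from the partial inclusion of $\phi^{-1}(L)$, and (c) the slack between $\sum_{j \in B} |\phi^{-1}(j)|$ and $\theta|\BI[L,R-1]|$ — entirely into the $\gamma/\beta$ term of the final bound, rather than allowing them to leak out as a larger additive error. Once the three contributions are normalized by $s$ and the boundary slack is carefully tracked, combining them yields the claimed bound $\gamma + \theta + \gamma/\beta$.
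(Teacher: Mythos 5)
The paper does not reprove this lemma (it is imported verbatim from Block et al.), so I am judging your argument against the definitions in Section \ref{subsec:goodblocks} and the role the bound plays downstream. Your three-way decomposition (good blocks, bad blocks, the boundary block $R$) is the natural starting point, and your treatment of the good-block term via Lemma \ref{thm:20.1} is fine. But there are two genuine gaps, and they are linked. First, you charge the boundary block $R$ to the $\gamma/\beta$ term. Its contribution is $|\phi^{-1}(R)\cap[l,r)|/s$, which is on the order of $1/(R-L)$ when $[l,r)$ spans $R-L$ blocks (and can be close to $1$ when it spans few); nothing in the hypotheses forces $R-L\ge \beta/\gamma$, so this quantity is simply not bounded by $\gamma/\beta$. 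Declaring all of block $R$ a failure is too pessimistic to be absorbed into the stated bound; the boundary block has to be handled differently (or not contribute at all under the indexing convention).

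Second, your bound on the bad-block mass does not go through as written. Lemma \ref{lem:goodBlockProperties} bounds the sizes of \emph{good} blocks and good intervals only; a $\gamma$-bad block inside a $(\theta,\gamma)$-good interval has no individual size bound and can absorb essentially the entire edit budget of the interval, so $\sum_{j\in B}|\phi^{-1}(j)|$ is not ``essentially $\theta|\BI[L,R-1]|$'' by a counting argument alone. The control you need is condition 1 of Definition \ref{def:goodInterval}: since $|\phi^{-1}(j)|\le \beta\tau+\ed(\tilde{\bY}[\phi^{-1}(j)],\bw_j')$, the aggregate bound $\sum_j \ed(\tilde{\bY}[\phi^{-1}(j)],\bw_j')\le \gamma\tau(R-L)$ gives $\sum_{j\in B}|\phi^{-1}(j)|\le \theta(R-L)\beta\tau+\gamma\tau(R-L)$, and normalizing by $s\approx(R-L)\beta\tau$ yields $\theta+\gamma/\beta$. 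In other words, the $\gamma/\beta$ term in the lemma is exactly the oversize slack of the bad blocks coming from the interval's total edit-distance budget --- not a boundary-block effect --- and your proposal both omits the one use of condition 1 that produces it and tries to spend that same $\gamma/\beta$ budget twice (once on boundary slack, once on block $R$).
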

{Lemma \ref{lem:probBlockDecOverGoodinterval} states that applying \BlockDec \ over a random chosen block $\tilde{\bw}_j$ in a $(\theta,\gamma)$-good interval $\BI[L,R - 1]$ has an increased failure probability of at most $\theta + \frac{\gamma}{\beta}$ over the failure probability of applying \BlockDec \ over a single $\gamma$-good block (which is at most $\gamma$ by Lemma \ref{thm:20.1}).
For carefully chosen $\theta$ and $\gamma$, this difference will be at most a constant fraction smaller.}

% \jrem{Lemma \ref{lem:probBlockDecOverGoodinterval} captures the intuition that applying \BlockDec \ over a random chosen block in a good interval will not be too different than applying \BlockDec \ over a single good block.}
\begin{lemma}[\cite{bbgkz20}, Lemma 24]\label{lem:goodBlockToGoodinterval}
    For any interval $[a,b] \subseteq [B],$ let $l,r \in [\tilde{n}]$ be indices such that $r - l \geq 18 (\beta + \gamma) \tau \times (b - a + 1).$
    Let $\BI[L,R - 1]$ be the block interval for $[l,r).$
    Let $\rho = \min \left \{ \frac{1}{4} \times \frac{\beta - \gamma}{\beta + \gamma}, 1 - \frac{3}{4} \times \frac{\beta + \gamma}{\beta - \gamma}\right\}$.
    For $m_1  = (1 - \rho) l + \rho r$ and $m_2 = \rho l + (1 - \rho) r$, let $\BI[M_1,M_2 -1]$ be the block interval of $[m_1,m_2).$
    Suppose for some $L \leq c \leq M_1,$ block $c$ is $(\theta,\gamma)$-local good. 
    If $\frac{\beta + \gamma}{\beta - \gamma} < 4/3$, then
    \begin{enumerate}
        \item $M_1 \leq L + (R - L) / 3$, and $M_2 \geq L + 2(R - L) / 3.$
        \item $\BI[M_1,M_2 - 1]$ is a $(2\theta,2\gamma)$-good interval.
    \end{enumerate}
\end{lemma}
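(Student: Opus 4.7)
The strategy is to convert back and forth between symbol positions ($l, r, m_1, m_2$) and block positions ($L, R, M_1, M_2$) by repeatedly applying Lemma \ref{lem:goodBlockProperties}(2), which gives two-sided size bounds on any $(\theta,\gamma)$-good block interval. The pivotal observation is that since $L \leq c \leq M_1 \leq M_2 - 1 \leq R - 1$, each of the block intervals $\BI[L, M_1 - 1]$ (or $\BI[L, M_1]$ if $c = M_1$), $\BI[L, M_2 - 1]$, and $\BI[L, R-1]$ contains $c$ and is therefore $(\theta, \gamma)$-good by the local-goodness of $c$. As a byproduct, block $c$ itself is $\gamma$-good (the singleton interval $\BI[c,c]$ is $(\theta,\gamma)$-good), which controls the one boundary block that matters in the edge case $c = M_1$.

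For claim (1), to bound $M_1 - L$: since $\BI[L, M_1 - 1] \subseteq [a_L, m_1 - 1]$, we have $|\BI[L, M_1 - 1]| \leq m_1 - l + (l - a_L) = \rho(r-l) + O(\tau)$. Combining with the lower bound $|\BI[L, M_1 - 1]| \geq (M_1 - L)(\beta - \gamma)\tau$ from Lemma \ref{lem:goodBlockProperties}(2), and using $R - L \geq (r - l)/((\beta + \gamma)\tau)$ (which follows from $|\BI[L, R-1]| \leq (R-L)(\beta + \gamma)\tau$ and $|\BI[L, R-1]| \geq r - l$), the choice $\rho \leq \tfrac{1}{4} \cdot \tfrac{\beta-\gamma}{\beta+\gamma}$ yields the tighter bound $M_1 - L \leq (R-L)/4 + O(1) \leq (R-L)/3$, with the $O(1)$ term absorbed by the hypothesis $r - l \geq 18(\beta + \gamma)\tau(b-a+1)$ making $R-L$ large. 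For the symmetric bound, $\BI[L, M_2-1] \supseteq [l, m_2 - 1]$ yields $|\BI[L, M_2-1]| \geq m_2 - l - O(\tau) = (1-\rho)(r-l) - O(\tau)$; combining with $|\BI[L, M_2-1]| \leq (M_2-L)(\beta+\gamma)\tau$ and the upper bound $R - L \leq (r-l)/((\beta-\gamma)\tau) + O(1)$, the second term $\rho \leq 1 - \tfrac{3}{4}\cdot\tfrac{\beta+\gamma}{\beta-\gamma}$ in the definition of $\rho$ gives $M_2 - L \geq 3(R-L)/4 - O(1) \geq 2(R-L)/3$.

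For claim (2), the sharper bounds actually established in the proof of claim (1), namely $M_1 - L \leq (R-L)/4$ and $R - M_2 \leq (R-L)/4$, combine to $M_2 - M_1 \geq (R-L)/2$, equivalently $R - L \leq 2(M_2 - M_1)$. Since $\BI[M_1, M_2-1] \subseteq \BI[L, R-1]$, the total edit distance inside $\BI[M_1, M_2-1]$ is at most that in $\BI[L, R-1]$, which by $(\theta, \gamma)$-goodness is at most $\gamma \tau (R - L) \leq 2 \gamma \tau (M_2 - M_1)$. Similarly, since every $2\gamma$-bad block is also $\gamma$-bad, the number of $2\gamma$-bad blocks in $\BI[M_1, M_2-1]$ is at most the number of $\gamma$-bad blocks in $\BI[L, R-1]$, which is at most $\theta(R - L) \leq 2\theta (M_2 - M_1)$. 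Both defining conditions of $(2\theta, 2\gamma)$-goodness are therefore met.

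The main obstacle is managing the $O(\tau)$-scale boundary corrections needed to derive the upper bound $R - L \leq (r-l)/((\beta-\gamma)\tau) + O(1)$: the first and last blocks of $\BI[L, R-1]$ need not individually be $\gamma$-good, so their lengths are not directly bounded by $(\beta+\gamma)\tau$ and must be controlled via the aggregate edit-distance bound of $\BI[L,R-1]$. A careful accounting shows the overflow past the endpoints $l$ and $r$ contributes only an additive $O(\tau)$, and the hypothesis $r - l \geq 18(\beta+\gamma)\tau(b-a+1)$ absorbs this into the slack between the sharper $(R-L)/4$ bound actually established and the stated $(R-L)/3$ bound. The edge case $c = M_1$ is handled by substituting $\BI[L, M_1]$ for $\BI[L, M_1-1]$, which costs only one extra block and is again dominated by the large-interval hypothesis.
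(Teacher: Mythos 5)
This lemma is imported from Block et al.\ without proof in the present paper, so your reconstruction can only be judged on its own terms. The overall architecture is right (and is surely the one in the cited proof): local-goodness of block $c$ makes every block interval containing $c$ a $(\theta,\gamma)$-good interval, Lemma \ref{lem:goodBlockProperties} converts between symbol-lengths and block-counts, and the two terms in the $\min$ defining $\rho$ handle the two halves of claim (1).

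There are, however, two concrete gaps. First, the boundary correction in claim (1) is not $O(\tau)$. The overflow $l - \min\phi^{-1}(L)$ is bounded only by $|\phi^{-1}(L)|$, and block $L$ need not be $\gamma$-good; the only available control is the aggregate bound $\sum_{j=L}^{R-1}\ed(\tilde{\bY}[\phi^{-1}(j)],\bw_j') \leq \gamma\tau(R-L)$, so a single bad boundary block can have length as large as $\beta\tau + \gamma\tau(R-L)$. The correction is therefore a constant \emph{fraction} of $R-L$ (roughly $\tfrac{\gamma}{\beta-\gamma}(R-L)$ once divided by the per-block size lower bound), not an additive term absorbed by the hypothesis $r-l \geq 18(\beta+\gamma)\tau(b-a+1)$. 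The slack between the $1/4$ in $\rho$ and the $1/3$ in the conclusion, together with the condition $\tfrac{\beta+\gamma}{\beta-\gamma}<4/3$, is precisely what must be spent on this multiplicative correction; your accounting spends it on a phantom additive term. Second, and consequently, your claim (2) needs $M_1 - L \leq (R-L)/4$ and $R - M_2 \leq (R-L)/4$, which you assert are ``actually established'' in the proof of claim (1) --- but by the previous point that sharpening is exactly what the boundary corrections consume, and the stated claim (1) only yields $M_2-M_1 \geq (R-L)/3$, i.e.\ a $(3\theta,3\gamma)$-good interval by your comparison. The repair is to compare $\BI[M_1,M_2-1]$ against $\BI[c,M_2-1]$ rather than $\BI[L,R-1]$: this interval contains $c$ and is therefore $(\theta,\gamma)$-good, and $M_2 - c = (M_2-M_1) + (M_1-c) \leq (M_2-M_1) + (R-L)/3 \leq 2(M_2-M_1)$ using only the stated claim (1), which delivers the factor $2$ in both the edit-distance and bad-block budgets without any sharpening.
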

{At each iteration in the \FindBlocks\ procedure, we sample $N = \theta(\log^2 n)$ random codeword indices $i_1,\dots,i_N$ from  ``middle" interval $[m_1,m_2)$ within a surrounding search interval $[l,r) = [l,m_1) \cup [m_1, m_2) \cup [m_2,r)$ to decide how to cut the search interval.
We compute the corresponding block indices $j_1,\dots,j_N$ using the \BlockDec\ procedure to derive an estimate $j_{med}$ on the median block index.
Lemma \ref{lem:goodBlockToGoodinterval} states that (1) the block interval $\BI[M_1,M_2-1]$ corresponding to codeword interval $[m_1,m_2)$ retains at least a constant fraction of blocks present in the surrounding block interval $\BI[L,R-1]$ (corresponding to codeword interval $[l,r)$), and (2) within the ``left" block interval $\BI[L,M_1]$, if there is some $(\theta,\gamma)$-local good block $c \in [L,M_1],$ then the middle block interval $[M_1,M_2 - 1]$ will be a $(2\theta,2\gamma)$-good interval.}

{
We use Lemma \ref{lem:goodBlockToGoodinterval} in conjunction with Lemma \ref{lem:probBlockDecOverGoodinterval} to show that with except with negligible probability, \FindBlocks\ procedure correctly cuts the search interval $[l,r)$, either setting $l = m_1$ or $r = m_2$, based on the computed median block index $j_{med}$. 
Specifically, we argue that at least half block indices $j_1,\dots,j_N$ computed from sampled codeword indices $i_1,\dots,i_N$ will be computed correctly, except with negligible probability. 
Thus, the median $j_{med}$ is computed correctly except with negligible probability, and we cut the search interval correctly.
This is formally stated in the following Lemma \ref{lem:iterateFindBlocks}.
}

% \jrem{Lemma \ref{lem:goodBlockToGoodinterval} shows that the ``middle" interval will be a good-interval. 
% At a high level, when \FindBlocks \ attempts to sample its $j_i$ values from the ``middle" interval using \BlockDec, it will sample at least a constant fraction of good blocks, which in conjunction with Lemma \ref{lem:goodBlockToGoodinterval}, will succeed with at least constant probability. 
% Thus, throughout each iteration, \FindBlocks \ will correctly determine where to cut the interval, keeping the requested input interval within the search interval. }
\begin{lemma}\label{lem:iterateFindBlocks}
    Let $[a,b] \subseteq [B]$ be any interval such that there is an index $c \in [a,b]$ such that block $c$ is a $(\theta,\gamma)$-local good block. 
    Denote by $l^{(t)},r^{(t)}$ the values of $l,r$ at the beginning of the t-th iteration iteration of the while loop in $\FindBlocks^{\tilde{\bY}}(1,\tilde{n} + 1,a ,b)$.

    If $N \in \theta(\log^2 n), r^{(t)} - l^{(t)} \geq 36 (\beta + \gamma) \tau \times (b - a + 1),$ and $2 ((1 + 1/\beta)\gamma + \theta) < 1/3,$
    \[\Pr\left[\bigcup_{j \in [a,b]} \phi^{-1}(j) \subseteq \left[l^{(t + 1)},r^{(t + 1)}\right) \middle| \bigcup_{j \in [a,b]}\phi^{-1}(j) \subseteq \left[l^{(t)},r^{(t)}\right)\right] \geq 1 - \negl(n)\]
\end{lemma}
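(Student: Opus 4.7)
The plan is to analyze one iteration of the while loop conditional on the invariant $\bigcup_{j \in [a,b]} \phi^{-1}(j) \subseteq [l^{(t)}, r^{(t)})$. Let $[L, R-1]$ denote the block interval of $[l^{(t)}, r^{(t)})$; by the invariant, $[a, b] \subseteq [L, R-1]$. Since the $(\theta, \gamma)$-local-good block $c$ lies in $[a, b] \subseteq [L, R-1]$, Definition \ref{def:localGoodBlock} gives that $[L, R-1]$ is itself $(\theta, \gamma)$-good, and Lemma \ref{lem:goodBlockProperties}(2) converts $r^{(t)} - l^{(t)} \geq 36(\beta + \gamma)\tau(b - a + 1)$ into a block-count bound $R - L \geq C(b - a + 1)$ with $C$ of order $36$ (up to $\phi\gamma$ corrections). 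I can then invoke Lemma \ref{lem:goodBlockToGoodinterval} (applying it to the local-good block $c$, handling the case $c > M_1$ with a symmetric argument) to obtain that $[M_1, M_2 - 1]$, the block interval of $[m_1, m_2)$, satisfies $M_2 - M_1 \geq (R - L)/3 \geq 12(b - a + 1)$ and is $(2\theta, 2\gamma)$-good. Combining this with Lemma \ref{lem:probBlockDecOverGoodinterval} and the parameter hypothesis $2((1 + 1/\beta)\gamma + \theta) < 1/3$, a uniformly random $i \in [m_1, m_2)$ satisfies $\BlockDec^{\tilde{\bY}}(i) = \phi(i) \in [M_1, M_2 - 1]$ with probability strictly greater than $2/3$.

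With these structural facts in hand, I perform a case analysis on the position of $[a, b]$ relative to $[M_1, M_2 - 1]$. Since $M_2 - M_1 \geq 12(b - a + 1)$, $[a, b]$ cannot contain $[M_1, M_2 - 1]$, which rules out the only case in which neither possible cut would preserve $[a, b]$. In the easy cases ($b < M_1$, $a \geq M_2$, $[a, b] \subseteq [M_1, M_2 - 1]$, or $[a, b]$ straddling only the left boundary), the decoded block indices $j_p$ concentrate inside $[M_1, M_2 - 1]$; a Chernoff bound on the $N = \Theta(\log^2 n)$ i.i.d.\ samples gives that at least $N/2$ of the $j_p$'s equal $\phi(i_p) \in [M_1, M_2 - 1]$ except with $\negl(n)$ probability. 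From this the median $j_{med}$ sits on the correct side of $a$ (or both sides work when $[a,b] \subseteq [M_1, M_2 - 1]$), and the algorithm triggers a cut that preserves the invariant.

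The main obstacle is the remaining case where $[a, b]$ straddles the right boundary of the middle, i.e.\ $M_1 \leq a \leq M_2 - 1 < b$. Here, cutting left ($l \leftarrow m_1$) is the only option that preserves $[a, b]$, so the algorithm needs $a > j_{med}$; but both $a$ and $j_{med}$ can lie inside $[M_1, M_2 - 1]$, so the comparison is not automatic. The key observation is that $b \geq M_2$ together with the size bound $b - a + 1 \leq (M_2 - M_1)/12$ forces $a \geq M_2 - (M_2 - M_1)/12$, pushing $a$ into the rightmost $(1/12)$-fraction of $[M_1, M_2 - 1]$. Using Lemma \ref{lem:goodBlockProperties}(1) on the $(2\theta, 2\gamma)$-good middle to control block sizes, a uniformly random $i \in [m_1, m_2)$ satisfies $\phi(i) < a$ with probability at least $11/12 - O(\theta + \gamma\phi)$. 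Using $\Pr[j_p < a] \geq \Pr[\phi(i_p) < a] - \Pr[j_p \neq \phi(i_p)]$ and the decoding correctness $\Pr[j_p = \phi(i_p)] > 2/3$ yields $\Pr[j_p < a] > 7/12 - O(\theta + \gamma\phi) > 1/2 + \Omega(1)$ for small enough constants, and a Chernoff bound on the $N$ i.i.d.\ samples then yields $j_{med} < a$ except with $\negl(n)$ probability. The algorithm correctly cuts left, and $[a, b] \subseteq \BI[M_1, R-1] \subseteq [l^{(t+1)}, r^{(t+1)})$, completing the proof.
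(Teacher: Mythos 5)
Your proof is correct, but it is noticeably more careful than the paper's in a way worth flagging. The paper handles only the case $L \le a \le M_1$, $b < M_2$ in detail (there the reduction $X_p \le Y_p$ is immediate because $\phi(i_p) \ge M_1 \ge a$ deterministically) and dismisses the remaining non-degenerate case with ``a symmetric argument can be made'' (the paper's stated condition ``$a < L$'' for that case is a typo for something like $a > M_1$, since $a < L$ contradicts the conditioning). However, because the algorithm compares $j_{med}$ against $a$ in \emph{both} directions, the situation $M_1 < a \le M_2-1 < b$ is genuinely not symmetric: one cannot argue $\phi(i_p) < a$ holds deterministically for $i_p \in [m_1,m_2)$, since both $a$ and $\phi(i_p)$ live in $[M_1, M_2-1]$. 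You correctly identify this as the crux. Your resolution --- using $b - a + 1 \le (M_2-M_1)/12$ and $b \ge M_2$ to force $a$ into the rightmost $1/12$-fraction of $[M_1, M_2-1]$, then a density estimate $\Pr[\phi(i_p) < a] \ge 11/12 - O(\cdot)$ via the $(2\theta, 2\gamma)$-goodness of the middle and Lemma~\ref{lem:goodBlockProperties}, followed by $\Pr[j_p < a] \ge 11/12 - 1/3 - O(\cdot) > 1/2$ and Chernoff --- is the missing content that the paper's phrase ``symmetric argument'' glosses over, and it supplies a proof where the paper only supplies a sketch. Two minor points: the slack term you write as $O(\theta + \gamma\phi)$ should really be of the form $O(\theta + \gamma/\beta)$ (the $\phi\gamma$ quantity governs individual block size deviations via Lemma~\ref{lem:goodBlockProperties}(1), while the good-interval total edit budget scales as $\gamma\tau(M_2-M_1)$ against a middle of length roughly $\beta\tau(M_2-M_1)$); and, as you note, applying Lemma~\ref{lem:goodBlockToGoodinterval} when the local-good block $c$ satisfies $c > M_1$ requires the (straightforward) mirror image of that lemma rather than the lemma as literally stated. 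Neither affects the correctness of your argument, and the constants clearly leave enough room for both.
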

\begin{proof}
Let $\BI[L,R - 1]$ be the block interval for interval $[l^{(t)},r^{(t)})$.
As in $\FindBlocks,$ let 
\begin{align*}
    m_1 &= (1 - \rho) l^{(t)} + \rho r^{(t)}, \\
    m_2 &= \rho l^{(t)} + (1 - \rho) r^{(t)},
\end{align*}
for $\rho = \min \left \{ \frac{1}{4} \times \frac{\beta - \gamma}{\beta + \gamma}, 1 - \frac{3}{4} \times \frac{\beta + \gamma}{\beta - \gamma}\right\}$.
Let $\BI[M_1, M_2 - 1]$ be the block interval of $[m_1,m_2).$
Observe that when $\bigcup_{j \in [a,b]} \phi^{-1}(j) \subseteq [m_1,m_2),$ we have that $\bigcup_{j \in [a,b]} \phi^{-1}(j) \subseteq [l^{(t +1)},r^{(t + 1)})$ by construction. 
Thus, assume that $\bigcup_{j \in [a,b]} \phi^{-1}(j) \not\subseteq [m_1,m_2)$, implying that either $L \leq a \leq M_1$ or $M_2 - 1 \leq b \leq R.$

Suppose first that $L \leq a \leq M_1$ and $b < M_2.$
We will show that the Lemma holds for this case (and its symmetric case, when $M_2 - 1 \leq b \leq R$ and $a < L)$, and we will finish by arguing that the case when $L \leq a \leq M_1$ and $M_2 - 1 \leq b \leq R$ also holds.

By the assumption that $r^{(t)} - l^{(t)} \geq 36 (\beta + \gamma) \tau \times (b - a +1)$, we have that $m_2 - m_1 \geq (r^{(t)} - l^{(t)}) / 2 \geq 18(\beta + \gamma)\tau \times (b - a +1)$.
Thus, we can apply Lemma \ref{lem:goodBlockToGoodinterval} to have that 
\begin{itemize}
    \item $M_2 - M_1 \geq (R - L) / 3 \geq 12$, and 
    \item $\BI[M_1,M_2 - 1]$ is a $(2\theta,2\gamma)$-good interval. 
\end{itemize}
Since $\BI[M_1,M_2 - 1]$ is the block interval of $[m_1,m_2),$ by Lemma \ref{lem:probBlockDecOverGoodinterval}, we have
\[\Pr_{i \in [m_1,m_2)}[\BlockDec^{\tilde{Y}}(i) \neq \phi(i)] \leq 2\gamma + 2 \theta + \frac{2 \gamma}{\beta} < \frac{1}{3}\]
Let $i_1,\dots,i_N$ be the sampled indices in \FindBlocks\ on iteration $t$.
Let $X_p$ be the indicator random variable for event $\{\BlockDec(i_p) = \bot\} \cup \{\BlockDec(i_p) < a\}$.
Then, $\bigcup_{j \in [a,b]} \phi^{-1}(j) \not\subseteq [l^{(t + 1)},r^{(t + 1)})$ occurs if and only if $\sum_{p = 1}^N X_p \geq N / 2$ i.e. the median of the sampled block indices is less than $a.$

We upperbound this by introducing another indicator random variable.
Let $Y_b$ be the indicator random variable for event ${\{\BlockDec(i_b) \neq \phi(i_b)\}}$. 
Observe that any sampled index ${i_b \in [m_1,m_2) \subseteq \BI[M_1,M_2 -1 ]}$ implies that ${\phi(i_b) \geq M_1 \geq a.}$
Thus, we have that event $\{\BlockDec(i_b) = \phi(i_b)\}$ implies event $\{\BlockDec(i_b) \geq a\}$, so $Y_b \geq X_b$. 
Since $\E[Y_b] < \frac{1}{3}$ and $N = \theta(\log^2 n),$ we can apply a Chernoff bound to conclude
\begin{align*}
    \Pr\left[\sum_{j = 1}^N X_j \geq \frac{N}{2}\right] &\leq \Pr\left[\sum_{j = 1}^N Y_j \geq \frac{N}{2}\right]  \\
    &\leq \Pr\left[\sum_{j = 1}^N Y_j \geq \left(1 + \frac{1}{2}\right) \sum_{j = 1}^N\E[Y_j]\right] \leq \exp\left(-\frac{N}{36}\right) \leq \negl(n).
\end{align*}
In the case that $M_2 - 1 \leq b \leq R$ and $a < L$, a symmetric argument can be made. 
It is left to show that the lemma holds when $L \leq a \leq M_1$ and $M_2 - 1 \leq b \leq R$.
However, this implies $m_2 - m_1 \leq (b - a + 1)\tau$, and since we have $r - l \geq 36(\beta - \gamma) \tau\times (b - a + 1) \geq 36 (b - a + 1)\tau$ by the loop condition, this case cannot happen since $m_2 - m_1 \geq 18 (b - a + 1)\tau$.
\end{proof}
Prior to proving Theorem \ref{thm:findBlocks}, we need Lemma \ref{thm:20.1}, \ref{lem:probBlockDecOverGoodinterval}, and \ref{lem:iterateFindBlocks} to hold simultaneously. Each of these Lemmas induce overlapping constraints, so we need to show there exists a valid parameter setting.
Block et al. show that these parameters constraints are feasible. 
Note that while Lemma \ref{lem:iterateFindBlocks} is modified from their original analysis, the constraint remains the same.
\begin{theorem}[ \cite{bbgkz20}, Proposition 22]\label{thm:constraints}
    There exists constants $\gamma,\theta \in (0,1)$, $\delta \in \Omega(\insdel \delta)$, and $\tau \in \Omega(\log n)$ such that the preconditions of Lemma \ref{thm:20.1}, \ref{lem:probBlockDecOverGoodinterval}, and \ref{lem:iterateFindBlocks} hold simultaneously.
\end{theorem}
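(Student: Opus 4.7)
I would prove the theorem by directly exhibiting suitable constants and verifying each parameter-level precondition in turn. The quantities $\beta$ and $\phi$ are absolute constants fixed by the inner Insdel code $\insdel\calC$ and the padding rate in $\EncCompile$, so they may be treated as given. Cataloguing the parameter constraints that actually need to be satisfied: Lemma \ref{thm:20.1} only insists that $\delta = c_1 \insdel\delta$ and $\tau = c_2 \log n$ for some $c_1 < 1$, $c_2 \geq 1$ that the lemma itself supplies; Lemma \ref{lem:probBlockDecOverGoodinterval} imposes no parameter-level constraint beyond (transitively) those of Lemma \ref{thm:20.1}; and Lemma \ref{lem:iterateFindBlocks} adds the algebraic inequality $2\bigl((1 + 1/\beta)\gamma + \theta\bigr) < 1/3$, together with the hypothesis $\frac{\beta + \gamma}{\beta - \gamma} < 4/3$ (i.e.\ $\gamma < \beta/7$) inherited from the invocation of Lemma \ref{lem:goodBlockToGoodinterval} inside its proof. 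The sample count $N \in \Theta(\log^2 n)$ and the loop-guard lower bound $r^{(t)} - l^{(t)} \geq 36(\beta+\gamma)\tau(b-a+1)$ are algorithmic choices hard-coded into \FindBlocks, so they are met by construction rather than by parameter tuning.

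The plan is first to pick $\gamma$, then $\theta$, then derive $c_1, c_2$ from Lemma \ref{thm:20.1}. Choose $\gamma$ to be any positive absolute constant with
\[
\gamma \;<\; \min\!\left\{\frac{\beta}{7},\ \frac{1}{12(1 + 1/\beta)}\right\},
\]
which is a nonempty open interval since $\beta$ is a fixed positive constant; this immediately gives both $\gamma < \beta/7$ and $2(1 + 1/\beta)\gamma < 1/6$. Next choose $\theta \in (0, 1/12)$, so that $2\theta < 1/6$. Summing,
\[
2\bigl((1 + 1/\beta)\gamma + \theta\bigr) \;=\; 2(1 + 1/\beta)\gamma + 2\theta \;<\; \tfrac{1}{6} + \tfrac{1}{6} \;=\; \tfrac{1}{3},
\]
so the constraint in Lemma \ref{lem:iterateFindBlocks} is satisfied together with the constraint of Lemma \ref{lem:goodBlockToGoodinterval}. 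With $\gamma$ fixed, invoke Lemma \ref{thm:20.1} to obtain the corresponding constants $c_1 < 1$ and $c_2 \geq 1$, and set $\delta := c_1 \insdel\delta$, $\tau := c_2 \log n$. Both $\delta \in \Omega(\insdel\delta)$ and $\tau \in \Omega(\log n)$ follow, and every precondition of the three named lemmas is now met under a single coherent choice of $(\gamma, \theta, \delta, \tau)$.

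The main ``obstacle'' is really just bookkeeping. One must verify that the upper bound imposed on $\gamma$ is strictly positive (which it is, since $\beta > 0$ is a fixed constant), and that the choice of $\theta$ can be made independently of $\gamma$; both are immediate from the factored form of the inequality. The argument is essentially identical to \cite[Proposition 22]{bbgkz20}, and indeed the modifications made to Lemma \ref{lem:iterateFindBlocks} in this paper (extending correctness from a single-block search to an interval-of-blocks search) leave its governing algebraic constraint unchanged, so the original parameter choice transports verbatim.
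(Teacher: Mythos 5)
Your proposal is correct and matches the paper's treatment: the paper proves this statement simply by citing \cite[Proposition~22]{bbgkz20} and remarking that the modified Lemma~\ref{lem:iterateFindBlocks} imposes the same algebraic constraint as the original, which is exactly the observation in your closing paragraph. Your explicit choice of constants (e.g.\ $\gamma < \min\{\beta/7,\, 1/(12(1+1/\beta))\}$ and $\theta < 1/12$, then pulling $c_1, c_2$ from Lemma~\ref{thm:20.1}) is a correct, more detailed instantiation of what the paper delegates to the citation.
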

We are now ready to prove Theorem \ref{thm:findBlocks}.
\findblocks*
\begin{proof}
    Set $\gamma, \theta, \delta,$ and $\tau$ according to Theorem \ref{thm:constraints}.
    There are $T = O(\log n)$ iterations until $r - l \leq c\tau \times (b - a + 1)$ for $c = 36(\beta + \gamma)\tau.$
    As before, denote $l^{(t)},r^{(t)}$ as the values of $l,r$ at the start of the $t$-th iteration of the \FindBlocks\ loop. 
    
    Then, we have for any $j \in [a,b]$ such that block $j$ is a $(\theta,\gamma)$-local good block,
    \begin{align*}
        \Pr\left[\bs_j= \bw_j\right] &\geq \Pr\left[\phi^{-1}(j) \subseteq [l^{(T)},r^{(T)}]\right] \times \Pr\left[ \bs_j = \bw_j  \ \middle | \ \phi^{-1}(j) \subseteq [l^{(T)},r^{(T)}] \right].
    \end{align*}
    Starting with the first term,
    \begin{align*}
        \Pr\left[\phi^{-1}(j) \subseteq [l^{(T)},r^{(T)}]\right] &= \prod_{t = 1}^{T - 1} \Pr\left[\phi^{-1}(j) \subseteq [l^{(t + 1)},r^{(t + 1)}] \ \middle | \ \phi^{-1}(j) \subseteq [l^{(t)},r^{(t)}] \right] \\
        &\geq (1 - \negl(n))^T \tag{Lemma \ref{lem:iterateFindBlocks}} \\
        &\geq 1 - T\times \negl(n) = 1 - \negl(n).
    \end{align*}
    For the second term, by Lemma \ref{thm:20.2}, we have $\Pr\left[ \bs_j = \bw_j  \ \middle | \ \phi^{-1}(j) \subseteq [l^{(T)},r^{(T)}] \right] \geq 1 - \negl(n)$, so we can conclude
    \[ \Pr\left[\bs_j= \bw_j\right] \geq (1 - \negl(n))^2 = 1 - \negl(n).\]    
\end{proof}
By Theorem \ref{thm:findBlocks}, a call to \FindBlocks \ on a local-good interval will successfully return the Hamming block except with negligible probability. 
We extend Theorem \ref{thm:findBlocks} to show that over any interval partition $\calP = \{[a_i,b_i] : 1 \leq a_i \leq b_i < a_{i +1} \leq B \}$ of $[B]$, the total number of Hamming blocks that are incorrectly recovered is bounded by the Hamming error tolerance $\h \delta$ for a careful setting of the overal error tolerance $\delta.$
\begin{theorem}\label{thm:FindBlocksSimulates}
    Let $\calP = \{[a_i,b_i] : 1 \leq a_i \leq b_i < a_{i +1} \leq B \}$ be any partition of $[B].$ 
    Then, let $\bs^{[a_i,b_i]}_j$ be the random variable defined as the output of $\FindBlocks^{\tilde{\bY}}(1,\tilde{n} + 1, a_i, b_i)\langle j - a_i +1 \rangle_\tau$. 
    For parameters set by Theorem \ref{thm:constraints}, if $\delta = \frac{\h \delta \phi \gamma}{8 \beta (1 + 1 / \theta)}$, then 
    \[\Pr \left[\sum_{[a_i,b_i] \in \calP}\sum_{j = 1}^{b_i - a_i +1} \mathbb{1}\left(\bs^{[a_i,b_i]}_j \neq \bw_j\right) \geq \h \delta B\right] \leq \negl(n),\]
    where the probability is taken over the joint distribution $\left\{\bs^{[a_i,b_i]}_j : [a_i,b_i] \in \calP, j \in [b_i - a_i + 1]\right\}.$
\end{theorem}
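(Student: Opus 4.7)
The plan is to partition the blocks $[B]$ into those that are $(\theta,\gamma)$-local-good and those that are not, and then show that (i) the local-bad blocks are few enough to already fit within the error budget $\hat\delta B$ on their own, and (ii) with high probability over the internal randomness of the $\FindBlocks$ calls, every local-good block is recovered correctly, contributing zero additional errors.

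First I would invoke item (4) of Lemma \ref{lem:goodBlockProperties} to bound the number of $(\theta,\gamma)$-local-bad blocks by $(4/\gamma\phi)(1+1/\theta)\delta\beta B$. Substituting the stated choice $\delta = \hat\delta\phi\gamma/(8\beta(1+1/\theta))$ gives
\[
\frac{4}{\gamma\phi}\Bigl(1+\tfrac{1}{\theta}\Bigr)\delta\beta \;=\; \frac{\hat\delta}{2},
\]
so at most $\hat\delta B/2$ blocks are local-bad. Regardless of how $\FindBlocks$ performs on these blocks, they contribute at most $\hat\delta B/2$ to the sum.

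Next I would handle the local-good blocks. Because $\{[a_i,b_i]\}\in\mathcal{P}$ is a partition of $[B]$, every $j\in[B]$ belongs to exactly one interval $[a_i,b_i]$, so the random variables $\bs^{[a_i,b_i]}_j$ of interest are indexed by $j\in[B]$ (one per block). For every $(\theta,\gamma)$-local-good block $j$, Theorem \ref{thm:findBlocks} applied to the unique interval $[a_i,b_i]\ni j$ gives $\Pr[\bs^{[a_i,b_i]}_j\neq \bw_j]\le\negl(n)$. A union bound over the at most $B$ local-good blocks yields
\[
\Pr\!\left[\exists\,j\in[B]\text{ local-good with }\bs^{[a_i,b_i]}_j\neq\bw_j\right]\;\le\;B\cdot\negl(n)\;=\;\negl(n),
\]
since $B$ is polynomial in $n$. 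Conditioning on the complementary event, every local-good block is recovered correctly, and the total error count is at most the number of local-bad blocks, which is $\le\hat\delta B/2<\hat\delta B$.

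The main obstacle is essentially bookkeeping: I must be careful that the sum in the theorem statement is genuinely indexed by each $j\in[B]$ exactly once (this is where the partition hypothesis is used), so that the union bound has only $B$ summands rather than $\sum_i(b_i-a_i+1)$ counted with multiplicity across different calls to $\FindBlocks$. Apart from that, the argument is a clean split into a deterministic count of local-bad blocks plus a union bound on negligible failure events for local-good blocks, both controlled by the parameter choices guaranteed by Theorem \ref{thm:constraints}.
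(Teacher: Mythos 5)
Your proposal is correct and follows essentially the same route as the paper: bound the local-bad blocks by $\h\delta B/2$ via item (4) of Lemma~\ref{lem:goodBlockProperties} and the given choice of $\delta$, then apply Theorem~\ref{thm:findBlocks} plus a union bound over the (at most $B$) local-good blocks. Your explicit remark that the partition hypothesis ensures each block $j$ appears in exactly one interval — so the union bound has only $B$ terms — is a correct and worthwhile clarification that the paper leaves implicit.
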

\begin{proof}
    Fix a partition  $\calP = \{[a_i,b_i] : 1 \leq a_i \leq b_i < a_{i + 1} \leq B\}$ of $[B].$
    Let $\Good \subset [B]$ be the set of $(\theta,\gamma)$-local-good blocks. 
    Let $\Good^{[a,b]}$ be the subset of $(\theta,\gamma)$-local-good blocks in interval $[a,b].$
    Then, by Lemma $\ref{lem:goodBlockProperties}$,
    \[\overline{\Good} \leq \left(1 + \frac{1}{\theta}\right)\frac{\delta \beta B}{\phi \gamma} = \frac{\delta B}{2}.\]
    Thus,
    \begin{align*}
        \Pr \left[\sum_{[a_i,b_i] \in \calP}\sum_{j = 1}^{b_i - a_i +1} \mathbb{1}\left(\bs^{[a_i,b_i]}_j \neq \bw_j\right) \geq \h \delta B\right] &= \Pr\left[\sum_{\substack{[a_i,b_i] \in \calP \\ j \in \Good^{[a_i,b_i]}}} \mathbb{1}\left(\cdot\right) + \sum_{j \in \overline{\Good}} \mathbb{1}\left(\cdot \right) \geq \h \delta B\right] \\
        &\leq \Pr\left[\sum_{\substack{[a_i,b_i] \in \calP \\ j \in \Good^{[a_i,b_i]}}} \mathbb{1}\left(\bs^{[a_i,b_i]}_j \neq \bw_j\right) \geq \frac{\h \delta B}{2}\right] \\
        &= \Pr\left[\sum_{\substack{[a_i,b_i] \in \calP \\ j \in \Good^{[a_i,b_i]}}} \mathbb{1}\left(\bs^{[a_i,b_i]}_j \neq \bw_j\right) \geq \frac{\out \delta B}{2}\right] \leq \negl(n),
    \end{align*}
    where the last line holds by Theorem \ref{thm:findBlocks}.
\end{proof}

\subsection{The Amortized Insdel Decoder} \label{subsect:aldcDecoder}
We construct an amortized Insdel local decoder $\Dec$ utilizing \FindBlocks\ to simulate oracle access for the underlying Hamming decoder $\h \Dec$.  
Our Insdel decoder $\Dec$ follows a similar procedure to the BBGKZ compiler decoder. 
$\Dec$ calls the underlying Hamming decoder $\h \Dec$, and on its queries $i_1,\dots,i_q \in [m]$ to the (possibly corrupted) Hamming codeword $\tilde{\by}$, will respond by calling \FindBlocks \ to recover the corresponding Hamming blocks and queried symbols.
Correctness of our procedure will follow from Theorem $\ref{thm:FindBlocksSimulates}$, which roughly states that the symbols returned by \FindBlocks \ are equivalent to what $\h \Dec$ expects from true oracle access to the (possibly corrupted) Hamming codeword $\tilde{\by}$.

Our next step is to ensure that the queries made by our Insdel decoder amortize over the \FindBlocks \ calls by restricting our Hamming decoder to only make contiguous queries, where each contiguous interval is of a fixed size.
Suppose the Hamming decoder $\h \Dec$ is guaranteed to make queries in contiguous intervals of size $t$, say $[u_1 + 1,u_1 + t],\dots,[u_p + 1,u_p + t] \subseteq [m]$ for some $t.$ 
Intuitively, when the Insdel decoder $\Dec$ is simulating the Hamming decoder's queries with \FindBlocks, it can then make $O(p)$ calls to $\FindBlocks$ to recover codeword intervals of size $O(t)$.
For large enough $t$, the number of \FindBlocks\ calls $O(p)$ decreases and the Insdel decoder's queries will amortize.

We formally describe a Hamming decoder with this desirable property $t$-consecutive interval querying Hamming decoder. 
\begin{definition}[Consecutive Interval Querying]
    Consider any $(m,k)$-code $\calC = (\Enc, \Dec)$ that is a $(\alpha,\kappa,\delta,\eps)$-$\aLDC$.
    For any word $\tilde{\by}$, interval $[L,R] \subseteq [k]$ with $R - L + 1 \geq \kappa$, and random coins $r,$ let $\msf{Query}(\tilde{\by},[L,R],r) := \{i_1,\dots,i_q\} \subseteq [\tilde{n}]$ denote the codeword indices queries when $\Dec^{\tilde{\by}}(L,R)$ is ran with randomness $r$. 
    Code $\calC$ and decoder $\Dec$ are $t$-{\em consecutive interval querying} if the set $\msf{Query}(\tilde{\by},[L,R],r)$ can be partitioned into $q / t$ disjoint intervals $[u_1,v_1],\ldots, [u_{q/t},v_{q/t}]$ of size $t$ i.e.,
    \begin{enumerate}
        \item  $v_j - u_j+1 = t$ for all $j \leq q/t$,
        \item $\msf{Query}(y,[L,R],r) = \bigcup_{j \leq q/b} [u_j,v_j] $, and
        \item $[u_{j_1},v_{j_1}] \cap [u_{j_2},v_{j_2}] = \emptyset$ for all $j_1 \neq j_2$. 
    \end{enumerate} 
\end{definition}
We now show how to construct the amortized local decoder $\Dec$ using \FindBlocks\ and a $t$-consecutive interval querying Hamming $\aLDC$ decoder $\h \Dec$ for a value of $t$ to be specified in the analysis.
We choose the SZ-code as the Insdel code $\insdel \calC$, which has constant rate ${1/\insdel \beta = \Omega(1)}$ and constant error-tolerance $\insdel \delta = \theta(1)$.
\begin{construction}\label{constr:modifiedBBGKZ}
Let $\h \calC = (\h \Enc, \h \Dec)$ be a $t$-consecutive interval querying Hamming \LDC \ such that $\tau$ divides $t$.
    
    \begin{weirdoFrame}{$\Dec^{\tilde{\bY}}(L,R)$}
\textbf{Parameters}: Block size $\tau \in \N$, Padding rate $\phi \in (0,1).$

{\setlength{\parindent}{0cm}\textbf{Input}: $1 \leq L \leq R \leq \tilde{n}$}

{\setlength{\parindent}{0cm}\textbf{Output}: word $\tilde{\bs} \in \{0,1\}^{(R - L + 1)\tau}$}
    \begin{enumerate}
        \item Suppose $\h \Dec(L,R)$ queries disjoint intervals $[u_1,v_1],\dots,[u_{p},v_{p}] \subseteq [m]$ of size $t.$  
        \item For each $r \in [p],$
        \begin{enumerate}
            \item Compute $j \in \left[ \ \lfloor n / t \rfloor \ \right]$ such that $[u_r,v_r] \subset [(j-1)t + 1, (j + 1)t].$
            \item Let $t' = \lceil t/ \tau \rceil$ and compute 
            \begin{align*}
                \bs^{(j - 1)} &\leftarrow \RecoverBlocks^{\tilde{\bY}}(1,\tilde{n}  + 1, (j-1)t' + 1, jt') \\
                \bs^{(j)} &\leftarrow \RecoverBlocks^{\tilde{\bY}}(1,\tilde{n}  + 1, jt' + 1, (j + 1)t').
            \end{align*}
        \end{enumerate}
         \item From $\left\{\bs^{(j)}\right\}$, send $\h \Dec$ the bits corresponding to intervals $[u_1,v_1],\dots,[u_{p},v_{p}]$.
        \item Output the output of $\h \Dec. $
    \end{enumerate}
\end{weirdoFrame}
\end{construction}
Our Insdel decoder $\Dec$ processes the $t$-sized intervals $[u_1,v_1],\dots,[u_p,v_p] \subseteq [m]$ queried by the Hamming decoder $\h \Dec$ into a corresponding $t' = \lceil t / \tau\rceil$-sized block interval inputs for \FindBlocks \ (step 2.a).
Note that for ease of presentation, the decoder always computes \FindBlocks\ on two $t'$ sized block intervals (step 2.b). 
The amortized locality can be optimized by a constant factor by calling \FindBlocks \ on the exact block intervals queried by the Hamming decoder $\h \Dec.$
However, it will be convenient to assume the Insdel decoder \Dec \ calls \FindBlocks \ in a predictable manner for all inputs 
when proving correctness i.e. the input interval to \FindBlocks\ is always $((j - 1)t' +1,jt')$ for some $j$, and asymptotically, there is no change to the amortized locality.
\begin{theorem}\label{thm:compiler}
    Let $(m,k)$-code $\h \calC = (\h \Enc, \h \Dec)$ be a $t$-consecutive interval querying Hamming $(\h \alpha, \h \kappa, \h \delta, \h \eps)$-\aLDC.
    Then, for any block size $\tau \in \Omega(\log n)$ such that $\tau \mid n$, $\calC = (\Enc,\Dec)$ in Construction \ref{constr:modifiedBBGKZ} is a $\left(\alpha, \kappa,\delta, \eps\right)$-$\aLDC$ for $\alpha = O\left(\frac{\h \alpha \tau \log^3n}{t} \right), \kappa = \h \kappa, \delta = \Omega(\h \delta)$, and $\eps = \h \eps + \negl(n).$
\end{theorem}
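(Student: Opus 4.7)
I would prove the theorem in two parts: correctness via a simulation argument, and amortized locality via direct counting. The core idea is that $\Dec^{\tilde{\bY}}(L,R)$ can be viewed as $\h \Dec^{\tilde{\by}}(L,R)$ run on a simulated (possibly corrupted) Hamming codeword $\tilde{\by}$ whose blocks are the outputs of $\FindBlocks$ on the fixed partition $\calP = \{[(k-1)t'+1, \min(kt',B)] : k \in [\lceil B/t'\rceil]\}$ of $[B]$ into $t'$-sized block intervals, where $t' = \lceil t/\tau\rceil$. Concretely, for each $j \in [B]$, define $\tilde{\by}\langle j\rangle_\tau$ as the coordinate of $\FindBlocks^{\tilde{\bY}}(1,\tilde{n}+1, a, b)$ corresponding to $j$, where $[a,b] \in \calP$ is the unique partition interval containing $j$. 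The Insdel decoder $\Dec$ actually issues only a subset of these calls, but we assume outputs are cached so each interval of $\calP$ is queried at most once and the bits delivered to $\h \Dec$ agree with $\tilde{\by}$ restricted to $[u_r,v_r]$.

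\textbf{Correctness.} Choose $\tau = \Omega(\log n)$ and $\delta = \h \delta \phi\gamma / (8\beta(1+1/\theta))$, which is $\Omega(\h \delta)$ since $\phi,\gamma,\beta,\theta$ are absolute constants fixed by Theorem \ref{thm:constraints}. Applying Theorem \ref{thm:FindBlocksSimulates} to the partition $\calP$ gives that, except with negligible probability, the total number of blocks of $\tilde{\by}$ that differ from $\bw_j$ is at most $\h \delta B$, so $\ham(\tilde{\by},\by) \leq \h \delta m$. Conditioned on this event, correctness of $\h \calC$ yields $\h \Dec^{\tilde{\by}}(L,R) = \bx[L,R]$ with probability at least $1 - \h \eps$. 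Since the bits $\Dec$ supplies to $\h \Dec$ on each queried interval $[u_r,v_r]$ coincide with $\tilde{\by}[u_r,v_r]$, the simulated and real executions are identically distributed on the bits $\h \Dec$ actually inspects, and a union bound over the two failure events gives $\eps \leq \h \eps + \negl(n)$.

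\textbf{Amortized locality and main obstacle.} Because $\h \Dec$ is $t$-consecutive interval querying and makes at most $\h \alpha (R-L+1)$ total queries to $\tilde{\by}$, it queries $p \leq \h \alpha(R-L+1)/t$ intervals of size $t$. For each such interval, $\Dec$ issues at most two $\FindBlocks$ calls on $t'$-sized block intervals, and by Lemma \ref{lem:findBlocksQueryComplexity} each call costs $O(t'\tau + \tau\log^3 n) = O(t + \tau\log^3 n)$ queries to $\tilde{\bY}$. Summing gives total query count $O\bigl(p(t + \tau\log^3 n)\bigr) = O\bigl(\h \alpha(R-L+1)\cdot \tau\log^3 n / t\bigr)$, hence amortized locality $\alpha = O(\h \alpha \tau \log^3 n / t)$; the threshold $\kappa = \h \kappa$ is inherited verbatim since $\Dec$ places no size constraint on $[L,R]$ beyond what $\h \Dec$ requires. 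The main conceptual subtlety is aligning the single analyzed $\tilde{\by}$ with the (possibly strict) subset of $\FindBlocks$ calls that $\Dec$ actually makes across $\h \Dec$'s random coins; this is handled by fixing $\calP$ \emph{before} any randomness is drawn, so the joint guarantee in Theorem \ref{thm:FindBlocksSimulates} bounds the total error across all of $\calP$ at once, and any subset of calls is subsumed.
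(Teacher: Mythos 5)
Your proof is correct and follows essentially the same route as the paper's: defining the simulated Hamming word $\tilde{\by}$ via $\FindBlocks$ outputs on the fixed $t'$-sized partition $\calP$, invoking Theorem \ref{thm:FindBlocksSimulates} to bound $\ham(\tilde{\by},\by)$ by $\h\delta m$ with high probability, concluding correctness by a union bound, and counting queries via Lemma \ref{lem:findBlocksQueryComplexity}. Your explicit remark about caching/consistency of repeated $\FindBlocks$ calls on the same partition interval is a slightly more careful treatment of a point the paper handles only implicitly (by treating each $\bs^{(j)}$ as a single random variable), but the substance is identical.
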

\begin{proof}
Suppose $\h \Dec^{\tilde{\by}}(L,R)$ queries disjoint intervals ${[u_1,v_1],\dots,[u_{p},v_p] \subset [m]}$.
For each $r \in [p]$, we compute \FindBlocks \ on block intervals ${[(j - 1)t' + 1,jt']}$ and ${[jt' + 1,(j + 1)t']}$ such that
$[u_r,v_r] \subset [(j-1)t + 1, (j + 1)t]$.
Let \[\calP = \left\{[1,t'],[t'+1, 2t'],\dots,[B - t' + 1, B]\right \}\] be a partition of $[B]$ in equal $t'$-sized intervals.
   Let $\bs^{(j)}$ be the random variable defined as the output of $\FindBlocks^{\tilde{\bY}}(1,\tilde{n} + 1, (j - 1) \times t' + 1, j \times t')$. 
    Define $\tilde{\by}$ as the random string defined by $\tilde{\by}\langle j  \rangle_{t'} = \bs^{(j)}$ for each $j \in [ \lfloor n / t' \rfloor]$.
    Then, since $\Ham(\tilde{\by}_u,\by) \leq \h \delta m$ is implied by the event 
    \[\sum_{[(j-1)t' + 1,j t'] \in \calP}\sum_{r = 1}^{t} \mathbb{1}\left(\bs^{(j)}_r \neq \bw_j\right) \leq \h \delta B,\]
    by Theorem \ref{thm:FindBlocksSimulates}, $\Pr[\Ham(\tilde{\by},\by) \leq \h \delta m] \geq 1 - \negl(n)$.
    Thus, from the view of $\h \Dec,$ it is interacting with $\tilde{\by}$ over partition $\calP$, and so by definition, for any $R - L + 1 \geq \h \kappa$,
    \[\Pr\left[\h \Dec^{\tilde{\by}}(L,R) = (x_i : i \in [L,R]) \ \middle | \ \Ham(\tilde{\by},\by) \leq \h \delta m\right] \geq 1 - \eps_u.\]
    By construction of $\Dec,$ for any $R - L + 1 \geq \h \kappa,$
    \begin{align*}
        \Pr&\left[\Dec^{\tilde{\bY}}(L,R) = (x_i : i \in [L,R]  \right] \\
        &\geq \Pr[\Ham(\tilde{\by}_u,\by) \leq \h \delta m] \Pr\left[\h \Dec^{\tilde{\by}_u}(L,R) = (x_i : i \in [L,R])  \ \middle | \ \Ham(\tilde{\by}_u,\by) \leq \h \delta m \right] \\ 
        &\geq (1 - \negl(n)) \times \left(1 - \h \eps\right) \\
        &\geq 1 - \h \eps - \negl(n). 
    \end{align*}
    Since $\h \calC$ is $t$-consecutive interval querying which makes at most $\h \alpha (R - L + 1)$ queries, the decoder $\Dec$ calls $\FindBlocks$ at most $\frac{2\h \alpha (R - L + 1)}{t}$ times on block intervals of size $t' = \lceil t / \tau \rceil$. 
    By Lemma \ref{lem:findBlocksQueryComplexity}, the query complexity of $\Dec$ is 
    \[\alpha(R - L + 1) = \frac{2\h \alpha (R - L +1)}{t} \times O\left((\lceil t / \tau \rceil \times \tau) + \tau \log^3n\right) \text{ and so } \alpha = O\left(\frac{\h \alpha \tau \log^3n}{t} \right).\qedhere\]
\end{proof}
If a $t$-consecutive interval querying, ideal Hamming \aLDC\ does exists, we obtain the following corollary.
\begin{corollary}\label{corr:compilerNoAssumptions}
If there exists a $t$-consecutive interval querying, ideal Hamming \aLDC \ for ${t = \Omega(\tau \log^3n)}$, then there exists an ideal Insdel $\aLDC$.
\end{corollary}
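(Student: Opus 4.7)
The plan is to apply Theorem~\ref{thm:compiler} directly to the assumed ideal Hamming \aLDC\ and verify that each resulting parameter of the compiled Insdel \aLDC\ is asymptotically constant (or at worst polylogarithmic for $\kappa$). The compiler already produces the explicit relationships
\[
\alpha = O\!\left(\frac{\h\alpha\, \tau \log^3 n}{t}\right),\quad \kappa = \h\kappa,\quad \delta = \Omega(\h\delta),\quad \eps = \h\eps + \negl(n),
\]
so the corollary is essentially a matter of plugging in constants and checking the rate.

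First I would fix the block size at $\tau = \Theta(\log n)$, which is the regime permitted by Theorem~\ref{thm:constraints} (so that \FindBlocks\ has the required correctness properties) and which makes the constraint $\tau \mid n$ easy to accommodate. Then I would invoke the hypothesis to obtain a $t$-consecutive interval querying Hamming \aLDC\ for some $t = \Omega(\tau \log^3 n)$, so that the ratio $\tau \log^3 n / t$ is $O(1)$. Substituting into Theorem~\ref{thm:compiler} yields $\alpha = O(\h\alpha) = O(1)$ since $\h\alpha$ is constant by idealness; error tolerance $\delta = \Omega(\h\delta) = \Omega(1)$; and decoding failure probability $\eps = \h\eps + \negl(n)$, which remains a sufficiently small constant.

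Next I would verify the rate. The Hamming \aLDC\ has constant rate $k/m = \Theta(1)$ by idealness. In the compiler, each Hamming block $\bw_j$ of length $\tau$ is replaced by $0^{\phi\tau}\circ \insdel\Enc(j\circ \bw_j)\circ 0^{\phi\tau}$, whose length is at most $(\insdel\beta + 2\phi)\tau + \insdel\beta \log B$ since the SZ code $\insdel\calC$ has constant rate $1/\insdel\beta$. Because $\tau = \Omega(\log n) = \Omega(\log B)$, this is $\Theta(\tau)$, so the Insdel codeword has length $n = \Theta(B\tau) = \Theta(m) = \Theta(k)$. Hence the compiled code also has constant rate, completing the idealness check.

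The only possible obstacle is consistency of the many parameter constraints (from Theorem~\ref{thm:constraints}, Lemmas~\ref{lem:probBlockDecOverGoodinterval} and~\ref{lem:iterateFindBlocks}, and the relation $\delta = c_1\, \insdel\delta$ used in the compiler), but these have already been shown to admit a simultaneous feasible setting, so no new work is required. Consequently, the corollary follows by combining the idealness of the underlying Hamming \aLDC\ with the compiler's parameter translation, and all four target parameters (rate, amortized locality, error tolerance, and decoding success) come out constant.
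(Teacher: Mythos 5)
Your proposal is correct and takes exactly the approach the paper intends: the corollary is stated in the paper as an immediate consequence of Theorem~\ref{thm:compiler}, and you correctly plug in $t = \Omega(\tau\log^3 n)$ and $\tau = \Theta(\log n)$ to see that $\alpha$, $\delta$, and $\eps$ come out as required, and then verify the constant-rate claim (which the theorem statement doesn't spell out) by bounding the per-block length blowup of $\EncCompile$ using $\tau = \Omega(\log B)$ and the constant-rate SZ code. The paper provides no explicit proof beyond the theorem citation, so your added rate calculation and the note that $\kappa$ becomes polylogarithmic (rather than constant) are both appropriate filled-in details rather than a deviation.
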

We construct $t$-consecutive interval querying, ideal Hamming $\aLDC$s in private-key and resource-bounded settings and leave a construction for worst-case Hamming errors as an open question. 

\section{Ideal Insdel \aLDC s in Private-key Settings}\label{sect:comp-bounded}

Given the compiler and resulting Theorem \ref{thm:compiler} in the previous section converting an ideal Hamming \aLDC \ $\h \calC$ to ideal Insdel \aLDC\ $\calC$ whenever $\h \calC$ is $\Omega(\tau\log^3 n )$-consecutive interval querying, we show that such an \aLDC \ exist with private-key assumptions.
We start by recalling the definition of a  private-key \aLDC \ (\paLDC).

\begin{definition}[\paLDC \ \cite{bloZha25}]\label{def:paLDC}
    Let $\lambda$ be the security parameter. A triple of probabilistic polynomial time algorithms $(\Gen,\Enc,\Dec)$ over $\Sigma$ is a private $(\alpha, \kappa, \delta, \epsilon,q)$-amortizeable LDC (\paldc) for Hamming errors (resp. Insdel errors) if 
    \begin{itemize}
        \item for all keys $\sk \in \text{Range}(\Gen(1^\lambda))$ the pair $(\Enc_{\sk},\Dec_{\sk})$ is a $(n,k)$-code $\calC$ over $\Sigma$, and
        \item for all probabilistic polynomial time algorithms $\calA$ there is a negligible function $\mu$ such that
        \[\Pr[\paldcgame(\calA,\lambda,\alpha,\kappa,\delta,\alpha,q) = 1] \leq \mu(\lambda),\]
        where the probability is taken over the randomness of $\calA, \Gen,$ and $\paldcgame$. 
        The experiment \paldcgame\ is defined as follows:
    \end{itemize}
     \begin{weirdFrame}
        {\paldcgame$(\calA,\lambda,\alpha,\kappa,\delta,q)$}
        The challenger generates secret key $\msf{sk} \leftarrow \Gen(1^\lambda)$. 
        For $q$ rounds, on iteration $h$, the challenger and adversary $\calA$ interact as follows:
        \begin{enumerate}
            \item The adversary $\calA$ chooses a message $\bx^{(h)} \in \Sigma^k$ and sends it to the challenger.
            \item The challenger sends $\by^{(h)} \leftarrow \Enc_{\sk}(\bx^{(h)})$ to the adversary.
            \item The adversary outputs $\Tilde{\by}^{(h)} \in \Sigma^*$ that is $\delta$-Hamming-close (resp. $\delta$-edit-close) distance to $\by^{(h)}$.
            \item If there exists $L^{(h)}, R^{(h)} \in [k]$ such that $R^{(h)} - L^{(h)} + 1 \geq \kappa$ and
            \[\hspace{-\leftmargin}\Pr\left[\Dec^{\Tilde{\by}^{(h)}}_\sk(L^{(h)},R^{(h)}) \neq \bx[L^{(h)},R^{(h)}]\right] > \eps(\lambda)\] s.t. $\Dec^{\Tilde{\by}^{(h)}}_\sk(.)$ makes at most $(R - L + 1) \alpha$ queries to $\tilde \by$, then this experiment outputs $1$.
        \end{enumerate}
        If the experiment did not output $1$ on any iteration $h$, then output $0.$
    \end{weirdFrame}
\end{definition}
Note that for $\paLDC$s, we assume that the local decoder takes in a consecutive interval $[L,R]$ rather than an arbitrary subset $Q$ as input. 
Blocki and Zhang show that explicit, ideal Hamming \paLDC \ constructions exist for such decoders, and the existence of ideal \paLDC s whose local decoders take in arbitrary subsets $Q$ is left as an open problem.
\subsection{A consecutive interval querying, Ideal Hamming \paLDC}\label{subsect:privateIdealHamming}
We construct a $t$-consecutive interval querying, ideal Hamming \paLDC \ by modifying of the private-key \LDC \ construction by Ostrovsky et al. \cite{ICALP:OstPanSah07}.
Recall that the encoding procedure on input message $\bx$ first partitions it into equal-sized blocks $\bx = \be_1 \circ \dots \be_B$, for all $j \in [B]$. Each block is then individually encoded as $\be'_j = \Enc(\be_j)$ for each $j \in [B]$ by a constant rate, constant error tolerant, and constant alphabet size Hamming code encoding $\Enc$ (e.g. the Justesen code) to form the encoded message $\bx' = \be'_1 \circ \dots \circ \be'_B$.
Lastly, an additional secret-key permutation $\pi$ and random mask $\br$ are applied i.e.  the codeword is $\by = \pi(\bx') \oplus \br$, which effectively makes the codeword look random from the view of a probabilistic polynomial time channel.

Blocki and Zhang \cite{bloZha25} observe that the encoding procedure by Ostrovsky et al. is amenable to amortization when the recovered symbols are in a consecutive interval $[L,R].$
Their amortized local decoder, with access to the secret key, undoes the permutation $\pi$ and random mask $\br$, recovers the contiguous blocks $\be'_{s + 1},\dots,\be'_{s + \ell}$ corresponding to requested $[L,R],$ decodes each block, and returns the corresponding queried symbols.
Unfortunately, since we applied a permutation $\pi$ in the encoding procedure, the amortized local decoder does not make consecutive interval queries to the (corrupted) codeword $\tilde{\by}$.

To add $t$-consecutive interval querying to the current decoder, we modify the encoding scheme permutation step to permute $t$-sized {\em sub-blocks} of the blocks $\be'_j$ in encoded message $\bx' = \be'_1 \circ \dots \circ \be'_B$, rather than permuting individual bits.
We highlight in \new{blue} significant changes made from the original $\paLDC$ presented by Blocki and Zhang.
\begin{construction}\label{constr:paLDCv1BlockDecoding} 
Suppose that $\calC = (\Enc_{\calC},\Dec_{\calC})$ is a Hamming $(A,a)$-code over an alphabet $\Sigma$ with rate $R$.
Let $c = \log |\Sigma|$. 
Define $\p \calC = (\p \Gen, \p \Enc, \p \Dec)$ with message length $k$, codeword length $m = \frac{k}{R}$, block size $cA$, and sub-block size $t$ (dividing $cA$) as follows:
\begin{weirdFrame}{$\Gen(1^\lambda) $}
\begin{enumerate}
    \item Generate $\br \unif \{0,1\}^{m}$ and uniformly random permutation \new{$\pi : [m / t] \rightarrow [m / t]$.}
    \item Output $\sk \leftarrow (\br,\pi).$
\end{enumerate}
\end{weirdFrame}
\begin{weirdFrame}{$\Enc_{\sk}(\bx)$}

{\setlength{\parindent}{0cm}\textbf{Input}: Message $\bx \in \{0,1\}^k$}

{\setlength{\parindent}{0cm}\textbf{Output}: Codeword $\by \in \{0,1\}^{m}$}
\begin{enumerate}
    \item Parse $(\br,\pi) \leftarrow \sk$.
    \item Let $\bx \leftarrow \be_1 \circ \be_2 \circ \dots \circ \be_{B}$ where each $\be_s \in \Sigma^{a}$ (and $B = k / ca$). 
    \item For each $s \in [B]$:
    \begin{enumerate}
        \item set $\be_s' \leftarrow \Enc_{\calC} (\be_s)$, and 
        \item let \new{$\be_s' = \bepsilon_{(s - 1)\beta + 1} \circ \dots \circ \bepsilon_{s\beta}$ where each $\bepsilon_{s'} \in \{0,1\}^t$ (and $\beta = cA / t$).}
    \end{enumerate}
    \item Output $\by = \new{\left(\bepsilon_{\pi(1)} \circ \dots \circ \bepsilon_{\pi(B\beta)}\right)} \oplus \br.$
\end{enumerate}
\end{weirdFrame}

\begin{weirdFrame}{$\Dec_{\sk}^{\Tilde{\by}}(L,R)$}
{\setlength{\parindent}{0cm}\textbf{Input}: Interval $[L,R] \subseteq [k]$}

{\setlength{\parindent}{0cm}\textbf{Output}: word $\tilde{\bx}[L,R] \in \{0,1\}^{R - L + 1}.$}
\begin{enumerate}
    \item Parse $(\br,\pi) \leftarrow \sk$ and interpret \new{$\tilde{\by} = \tilde{\bepsilon}_{\pi(1)} \circ \dots \circ \tilde{\bepsilon}_{\pi(B\beta)}$ where each $\tilde{\bepsilon}_j \in \{0,1\}^t$}.
    \item Let the bits of $\bx[L,R]$ lie in blocks $\be_{s+1},\dots,\be_{s + \ell}.$ 
For each $j = s+1,\dots,s + \ell:$
\begin{enumerate}
    \item Compute $s'_1,\dots, s'_\beta$ such that \new{$\pi(s'_{r}) = (s + j - 1)\beta + r$ for all $r \in [\beta].$}
    \item {Query $\tilde{\by}$ to compute} $\tilde{\be}'_{j} \leftarrow \new{\tilde{\bepsilon}_{\pi(s'_1)} \circ \dots \circ \tilde{\bepsilon}_{\pi(s'_\beta)}}$.
    % \jadd{by querying $\tilde{\by}$ at the respective indices.}

    \item Compute $\tilde{\be}_j \leftarrow \Dec_{\calC}(\tilde{\be}'_{j})$.
\end{enumerate}
\end{enumerate}
From $\Tilde{\be}_{s + 1},\dots,\Tilde{\be}_{s + \ell}$ output bits corresponding to interval $[L,R]$. 
\end{weirdFrame}
\end{construction}
Any queries made by the decoder $\Dec$ are to the $t$-sized (corrupted) sub-blocks $\tilde{\bepsilon}_{\pi(r)}$, so Construction \ref{constr:paLDCv1BlockDecoding} is $t$-consecutive interval querying.
The main challenge is choosing the overall block size $cA$ such that decoding error remains negligible. 
We show that by increasing the original setting of block size $\tau$ in the work of Blocki and Zhang \cite{bloZha25} by a multiplicative $t$-factor, negligible decoding error follows.
\begin{theorem}\label{thm:b-block-dec-OTpaldc}
    Let $\lambda \in \N$. 
    In Construction \ref{constr:paLDCv1BlockDecoding}, suppose the $(A,a)$-code $\calC$ has (constant) rate $R$, (constant) error tolerance $\p \delta$, and (constant) alphabet $\Sigma$ with $c = \log |\Sigma|$. 
    Then, for any $t \in \N$ such that $t \mid cA$ and $k \in \poly(\lambda)$, $\p \calC = (\p \Gen, \p \Enc, \p \Dec)$ is a $t$-consecutive interval querying $(2 / R, O(a),\theta(\p \delta),\negl(\lambda),1)$-$\paLDC$ when $a = \omega(t\log \lambda)$.
\end{theorem}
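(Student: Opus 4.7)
}
The plan is to verify the four claims ($t$-consecutive interval querying, amortized locality $2/R$, correctness with error tolerance $\Theta(\p\delta)$, and negligible decoding error) separately, with essentially all the work concentrated on the correctness argument. The $t$-consecutive interval querying property follows directly from inspection of $\p\Dec$: every access to $\tilde\by$ is of the form $\tilde\bepsilon_{\pi(s_r')}$, i.e.\ a length-$t$ contiguous interval $[(\pi(s_r')-1)t+1,\pi(s_r')t]$, and distinct $s_r'$ values yield disjoint intervals because $\pi$ is a permutation. For amortized locality, I would note that any interval $[L,R]\subseteq[k]$ intersects at most $\lceil (R-L+1)/(ca)\rceil + 1$ message blocks; each block requires $\beta$ sub-block queries totaling $\beta t = cA$ bits. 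So the total number of queried bits is at most $((R-L+1)/(ca) + 2)\cdot cA = (R-L+1)/R + 2cA$, yielding amortized locality $2/R$ once $R-L+1 \geq \kappa = 2ca = O(a)$.

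The core of the proof is showing that the decoder errs on any fixed $[L,R]$ with at most negligible probability. The key observation is that the one-time pad $\br$ is uniform and independent of $(\pi,\bx,\calA\text{'s coins})$, so conditioned on $\bx$ (which $\calA$ chose), $\by$ is uniformly distributed on $\{0,1\}^m$ from $\calA$'s view. Consequently, $\calA$'s output $\tilde\by$, and in particular the error set $E\subseteq[m]$ with $|E|\leq \p\delta_{\text{out}}\cdot m$ (where I write $\p\delta_{\text{out}}$ for the \paLDC\ tolerance, reserving $\p\delta$ for the inner code's tolerance), is statistically independent of $\pi$. Now fix any message block index $j\in[B]$: the positions of its $\beta$ sub-blocks inside $\by$ form the random set $P_j = \pi^{-1}([(j-1)\beta+1,j\beta])$, which is a uniformly chosen size-$\beta$ subset of $[B\beta]$ that is independent of $E$. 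Let $e_r\in[0,t]$ denote the number of errors landing in the $r$-th codeword sub-block, and let $S_j = \sum_{r\in P_j} e_r$ denote the number of bit errors in the encoded block $\be'_j$. Then $\mathbb{E}[S_j] \leq |E|/B \leq \p\delta_{\text{out}}\cdot cA$.

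The next step is to concentrate $S_j$ around its expectation. Applying Hoeffding's inequality for sampling without replacement to the values $e_1,\dots,e_{B\beta}\in[0,t]$ yields
\[
\Pr\bigl[S_j \geq 2\p\delta_{\text{out}}\cdot cA\bigr] \;\leq\; \exp\bigl(-\Omega(\beta\,\p\delta_{\text{out}}^{\,2})\bigr).
\]
Since $\beta = cA/t = \Theta(a/t) = \omega(\log\lambda)$ by hypothesis and $\p\delta_{\text{out}}$ is constant, this probability is $\negl(\lambda)$. A union bound over $B \leq k \leq \poly(\lambda)$ blocks gives, with probability $1-\negl(\lambda)$, that every block $\be'_j$ has at most a $2\p\delta_{\text{out}}$ fraction of bit errors after undoing $\br$ and $\pi$. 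Choosing $\p\delta_{\text{out}} = \p\delta/4$ (so $2\p\delta_{\text{out}} < \p\delta$) ensures the inner decoder $\Dec_{\calC}$ successfully recovers each $\be_j$, and hence $\p\Dec$ outputs $\bx[L,R]$. Since there are polynomially many choices of $(L,R)$, a further union bound (or simply the observation that the good event is over $\sk$ and $\calA$'s coins and does not depend on $(L,R)$) establishes the $\paldcgame$ winning probability is $\negl(\lambda)$.

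The main obstacle, and the step that drives the requirement $a=\omega(t\log\lambda)$, is the Hoeffding concentration: the number of independent samples is $\beta = cA/t$, and we need $\beta$ to beat $\log\lambda$ after the union bound. The rest is bookkeeping. Note that the single-round assumption $q=1$ is essential because it makes $\br$ a true one-time pad; extending to $q>1$ would require a PRF-based argument as in Ostrovsky et al., but that is outside the scope of this theorem.
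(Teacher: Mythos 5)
Your proposal is correct and follows essentially the same strategy as the paper: show that after undoing the mask $\br$ and the sub-block permutation $\pi$, each encoded block $\be'_j$ sees at most a $\p\delta$-fraction of corrupted bits except with negligible probability, then union-bound over blocks. The main difference is in the concentration step. The paper references Lipton's indistinguishability argument to argue that corruptions are ``spread uniformly at random over $t$-bit intervals'' and then bounds the error count per block by treating it as a $\text{Hypergeometric}(m/t,\,\delta m/t,\,cA/t)$ random variable and applying a CDF tail bound. You instead observe directly that the one-time pad $\br$ makes $\by$ uniform from $\calA$'s view, hence the error set $E$ is statistically independent of $\pi$, and apply Hoeffding for sampling without replacement to the per-sub-block error counts $e_1,\dots,e_{B\beta}\in[0,t]$. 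Your version is arguably a bit cleaner: the paper's hypergeometric model implicitly treats each $t$-bit sub-block as wholly corrupted or not, whereas the Hoeffding formulation handles an arbitrary bit-level distribution of $\calA$'s error budget without needing to argue that the adversary's worst case is to corrupt whole sub-blocks. Both bounds yield $\exp(-\Omega(\beta))$ with $\beta = cA/t$, so the required hypothesis $a = \omega(t\log\lambda)$ (equivalently $\beta = \omega(\log\lambda)$) and the union bound over $B\leq\poly(\lambda)$ blocks go through identically. Your locality and $t$-consecutive-interval-querying arguments match the paper's (up to the harmless choice of $\kappa = 2ca$ versus the paper's $\kappa = ca$, both $O(a)$), and your closing remark that the good event is independent of the challenged $(L,R)$ is exactly the observation that lets one avoid a separate union bound over intervals.
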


\begin{proof}
We show that the probability of an incorrect decoding 
$\Pr[\paldcgame(\calA,\lambda,\alpha,\kappa,\delta,1) = 1]$ is negligible for $\delta = \theta(\p \delta)$ and any probabilistic polynomial time adversary $\calA$.
Define the event $\Bad = \bigcup_{j \in [k / ca]} \Bad_j$, where $\Bad_j$ is the event that $\be'_j$ has more than a $\p \delta$ fraction of errors. 
We show that the probability $\Pr[\Bad]$ is negligible.

Since the $t$-sized sub-blocks of the codeword are permuted and a random mask is applied, any errors added by a probabilistic polynomial time adversary $\calA$ to the corrupted codeword $\tilde{\by}$ are added uniformly at random over $t$-bit intervals.
This follows from generalizing the argument given by Lipton \cite{STACS:Lipton94}, where we interpret each sub-block as a $t$-bit symbol and the observation that the probability that event $\Bad$ occurs given that $\calA$ {\em does not} apply errors in a $t$-bit interval is at most the probability that event $\Bad$ occurs given that $\calA$ {\em does} apply errors in a $t$-bit interval.

Then, the number of errors in any given block $\be'_j$ follow a Hypergeometric$(m / t, \delta m / t, cA / t)$, which by the CDF bound of \cite{HemOstStraWoo11,hush_concentration_2005}, we have 
    \[\Pr[\mathtt{BAD}_{j}] < \exp\left(\frac{-2(((\p \delta - \delta)(cA /t))^2 - 1)}{(cA/t) + 1}\right).\]
Thus, for $\p \delta > \delta$ and $a/t \in \omega(\log n),$ this probability is negligible.
By a union bound, $\Pr[\Bad]$ is also negligible, so $\eps \leq \Pr[\Bad] < \negl(\lambda).$

The proof of the query parameters $\alpha = 2 / R$ and $\kappa = a$, is the same as in Theorem $16$ in the work of Zhang and Blocki \cite{bloZha25}. 
We repeat it here for completeness. 

For any $L,R \in [k]$ such that $R - L + 1 \geq a,$ suppose $\bx[L,R]$ lies in blocks $\be_{s + 1},\dots,\be_{s + \ell}$ for $\ell \leq \lfloor \frac{R - L  +1}{a}\rfloor + 1.$
To recover each of these $\be_{j}$ blocks, the decoder accesses the corresponding encoded blocks $\be_j'$ from $\tilde{y}.$
Thus, the total query complexity is $ cA\ell$  and so,
\begin{align*}
    \alpha &\leq \frac{cA\ell}{R - L  + 1} \\
           &\leq \frac{cA\left(\lfloor \frac{R - L + 1}{ca}\rfloor + 1\right)}{R - L + 1} \\
           &\leq \frac{1}{R} + \frac{cA}{ca} = \frac{2}{R} \tag{$ca \leq R - L + 1$}.
\end{align*}
\end{proof}
We note that the poly-round \paLDC \ construction of Blocki and Zhang \cite{bbgkz20} can also be made $t$-consecutive interval querying by the same technique of applying a higher-order permutation. 
We omit the construction and proof from this work since it will follow an almost-identical modification.

\subsection{The Ideal Insdel \paLDC \ Construction}\label{subsect:privateIdealInsdel}
We compile the $t$-consecutive interval querying, ideal Hamming \paLDC \ in the prior subsection into an ideal Insdel \paLDC \ using the Hamming-to-Insdel compiler \EncCompile \ in Section \ref{sect:bbgkz-modified}.
\begin{construction}\label{constr:finalINSDELpaldc}
Suppose $(m,k)$-code $\p \calC = (\p \Gen, \p \Enc, \p \Dec)$ is a $t$-consecutive interval querying Hamming $(\p\alpha,\p\kappa,\p\delta,\p\eps,q)$-\paLDC. 
Then define $\calC = (\Gen,\Enc,\Dec)$ as 
\begin{itemize}
    \item[] $\Gen(1^\lambda) := \p \Gen(1^\lambda)$,
    \item[] $\Enc_{\sk}(\bx) := \EncCompile({\Enc}_{\psk}(\bx)).$ 
    \item[] $\Dec_{\sk}^{\Tilde{\bY}}(L,R) := \Dec^{\Tilde{\bY}}_{\psk}(L,R)$
\end{itemize}
\end{construction}
We will need to show that the Hamming-to-Insdel compiler $\EncCompile$, when used within the private-key setting, retains correctness. 
We will argue a reduction from the correctness of the compiled Insdel \paLDC \ to the correctness of the underlying Hamming \paLDC. 
We first restate Theorem \ref{thm:compiler} in a convenient way for our reduction argument.
\begin{restatable}{corollary}{compilation}\label{corr:compilation}
    Let $(m,k)$-code $(\h \Enc, \h \Dec)$ be a $t$-consecutive interval querying Hamming $(\h \alpha, \h \kappa,\h \delta, \h \epsilon)$-\aLDC, and let $(n,k)$-code $(\Enc,\Dec)$ and $\EncCompile$ be defined as in Construction \ref{constr:modifiedBBGKZ}. 
    For any $\bx \in \{0,1\}^k$, $\tilde{\bY} \in \{0,1\}^{\tilde{n}}$ such that for $\bY = \Enc(\bx) \in \{0,1\}^n$, $\bY$ and $\tilde{\bY}$ are $\delta$-edit-close, and block size $\tau \in \Omega(\log n)$ such that $\tau \mid n$, let
    \[\simu {\tilde{\by}} \leftarrow \RecoverBlocks^{\tilde{\bY}}(1,\tilde{n} + 1,1,\lceil t / \tau \rceil) \circ \dots \circ \RecoverBlocks^{\tilde{\bY}}(1,\tilde{n} + 1,n / \tau - \lceil t / \tau \rceil + 1,n/\tau).\]
    Then, 
    \begin{enumerate}
        \item $\Pr[\Dec^{\tilde{\bY}}(L,R) \neq \bx[L,R]] \leq \Pr[\h \Dec^{\simu {\tilde{\by}}}(L,R) \neq \bx[L,R]] - \negl(n)$, and
        \item $\h \Enc(\bx)$ and $\simu {\tilde{\by}}$ are $\h \delta$-Hamming close, except with negligible probability.
        \item $\Dec^{\tilde{\bY}}$ on input interval $[L,R] \subseteq [k]$ of size $R - L + 1 \geq \h \kappa,$ makes $O\left( \frac{\h \alpha \tau \log^3 n}{t} \times (R - L)\right)$ queries to $\tilde{\bY}$. 
    \end{enumerate}
\end{restatable}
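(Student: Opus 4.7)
The plan is to cast Corollary \ref{corr:compilation} as a restatement of Theorem \ref{thm:compiler} in the language of an explicit ``virtual'' corrupted Hamming codeword $\simu{\tilde{\by}}$. The whole proof rides on the observation that the compiled Insdel decoder, when executed on $\tilde{\bY}$, can be viewed (via a coupling of randomness) as running $\h\Dec$ on the virtual codeword $\simu{\tilde{\by}}$ obtained by applying $\RecoverBlocks$ blockwise over the uniform partition $\calP = \{[(j-1)t'+1, jt'] : j \in [n/(\tau t')]\}$ of $[B]$ (where $t' = \lceil t / \tau \rceil$). With that coupling in place, items 1--3 reduce to three easily separable subclaims.

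For item 2, I would invoke Theorem \ref{thm:FindBlocksSimulates} directly on the partition $\calP$: under the parameter choice of Construction \ref{constr:modifiedBBGKZ} (in particular $\delta = \Omega(\insdel{\delta})$ and $\tau \in \Omega(\log n)$), the theorem guarantees that the number of $t'$-blocks whose $\RecoverBlocks$ output disagrees with the corresponding $\bw_j$ is at most $\h{\delta} B$, except with $\negl(n)$ probability. Each mismatched Hamming block contributes at most $\tau$ bit errors, so $\Ham(\simu{\tilde{\by}}, \h\Enc(\bx)) \leq \h\delta \cdot B\tau = \h\delta m$ with overwhelming probability, which is exactly item 2.

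For item 1, the key step is the coupling: pre-sample independent randomness $r_j$ for each block interval in $\calP$, use it both to define $\simu{\tilde{\by}}$ and to drive the \FindBlocks calls inside $\Dec^{\tilde{\bY}}(L,R)$. Because $\h\Dec$ is $t$-consecutive interval querying, each of its query intervals $[u_r, v_r]$ lies inside the union of two adjacent $t'$-block intervals $[(j-1)t'+1, jt']$ and $[jt'+1,(j+1)t']$; the Insdel decoder computes \FindBlocks on precisely these two intervals. Under the coupling, the bits it feeds back to $\h\Dec$ coincide with $\simu{\tilde{\by}}[u_r,v_r]$. Hence the entire view of $\h\Dec$ (inputs, oracle answers, internal coins) is identical in the executions of $\Dec^{\tilde{\bY}}(L,R)$ and $\h\Dec^{\simu{\tilde{\by}}}(L,R)$, so the failure probabilities match (up to the $\negl(n)$ slack inherited from item 2). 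The main technical care is that \FindBlocks is randomized and the Insdel decoder only instantiates it on a subset of $\calP$, so one must set up the coupling to pre-sample randomness for every interval in $\calP$ and ignore the unused samples; this is where I expect the main bookkeeping burden, but it is purely a bookkeeping issue with no new ideas.

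For item 3, the query count follows by combining two estimates. Since $\h\Dec$ makes at most $\h\alpha(R-L+1)$ queries structured into $\h\alpha(R-L+1)/t$ consecutive intervals of length $t$, and each induces at most two \FindBlocks calls on $t'$-sized block intervals, the total number of \FindBlocks invocations is $O\bigl(\h\alpha(R-L+1)/t\bigr)$. Lemma \ref{lem:findBlocksQueryComplexity} bounds each such call by $O(t' \tau + \tau \log^3 n) = O(t + \tau \log^3 n)$ queries. Multiplying and absorbing the dominant $\tau \log^3 n / t$ factor (which is the regime of interest since otherwise the bound is trivially slack) yields the stated $O\bigl(\h\alpha\tau\log^3 n \cdot (R-L)/t\bigr)$ query complexity, completing item 3.
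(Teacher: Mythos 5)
Your proposal is correct and follows essentially the same route as the paper. The paper explicitly presents Corollary \ref{corr:compilation} as a restatement of Theorem \ref{thm:compiler} in a form convenient for the subsequent reduction, and the proof of Theorem \ref{thm:compiler} establishes all three items exactly as you outline: item 2 by applying Theorem \ref{thm:FindBlocksSimulates} to the uniform $t'$-block partition $\calP$, item 1 by observing that $\Dec^{\tilde{\bY}}$'s answers to $\h\Dec$ are precisely the substrings of $\simu{\tilde{\by}}$ determined by those same $\FindBlocks$ outputs (your explicit randomness-coupling is a slightly more careful articulation of what the paper leaves implicit in the phrase ``from the view of $\h\Dec$, it is interacting with $\tilde{\by}$ over partition $\calP$''), and item 3 by counting $O(\h\alpha(R-L+1)/t)$ invocations of $\FindBlocks$ each costing $O(t + \tau\log^3 n)$ via Lemma \ref{lem:findBlocksQueryComplexity}.
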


\begin{theorem}\label{thm:finalINSDELpaldc}
    Let $\p \calC = (\p \Gen, \p \Enc, \p \Dec)$ be a $(m,k)$-code that is a $t$-consecutive interval querying Hamming $(\p \alpha, \p \kappa, \p \delta, \p \eps)$-\paLDC. Then, $\calC = (\Gen,\Enc,\Dec)$ in Construction \ref{constr:finalINSDELpaldc} with is a $(n,k)$-code that is an Insdel $(\alpha,\kappa,\delta, \eps)$-\paLDC with $\alpha = O \left(\frac{\p \alpha \tau \log^3 n}{t}\right), \kappa = \p \kappa, \delta = \Omega(\p \delta),$ and $\eps = \p \eps + \negl(n).$
\end{theorem}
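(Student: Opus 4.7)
The plan is to mirror the proof of Theorem~\ref{thm:compiler} but embed it inside a reduction to the security of the underlying Hamming \paLDC. The rate, amortized locality, error tolerance and minimum interval parameters will be inherited directly from Corollary~\ref{corr:compilation}: since $\Enc_{\sk}(\bx) = \EncCompile(\p\Enc_{\psk}(\bx))$, the codeword length $n$ is a constant factor blow-up over $m$, and by item~(3) of Corollary~\ref{corr:compilation} the decoder on any interval $[L,R]$ of size at least $\p\kappa$ makes $O(\h\alpha\tau \log^3 n / t)(R-L)$ queries, giving $\alpha = O(\p\alpha \tau \log^3 n / t)$ and $\kappa = \p\kappa$. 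The error tolerance $\delta = \Omega(\p\delta)$ likewise follows from the setting $\delta = c_1 \p\delta$ in Theorem~\ref{thm:constraints} / Theorem~\ref{thm:FindBlocksSimulates}, lifted unchanged to the private-key game.

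The main content of the proof is the correctness/soundness bound $\eps \le \p\eps + \negl(n)$. Here I would argue by contrapositive: assume a PPT adversary $\calA$ that wins $\paldcgame(\calA,\lambda,\alpha,\kappa,\delta,1)$ against $\calC$ with non-negligible probability; I construct a PPT $\calB$ that wins $\paldcgame(\calB,\lambda,\p\alpha,\p\kappa,\p\delta,1)$ against $\p\calC$ with (essentially the same) non-negligible probability. The simulator $\calB$ proceeds as follows: it receives a message request from $\calA$, forwards $\bx$ to the Hamming challenger, receives back $\by = \p\Enc_{\psk}(\bx)$, computes $\bY = \EncCompile(\by)$ using only public parameters, and hands $\bY$ to $\calA$. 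When $\calA$ outputs a $\delta$-edit-close corruption $\tilde{\bY}$, $\calB$ computes the simulated Hamming corruption
\[
\simu{\tilde{\by}} \leftarrow \RecoverBlocks^{\tilde{\bY}}(1,\tilde{n}+1,1,\lceil t/\tau\rceil) \circ \dots \circ \RecoverBlocks^{\tilde{\bY}}(1,\tilde{n}+1, n/\tau - \lceil t/\tau\rceil + 1, n/\tau)
\]
exactly as in Corollary~\ref{corr:compilation} and outputs $\simu{\tilde{\by}}$ as its Hamming corruption. By item~(2) of Corollary~\ref{corr:compilation}, $\simu{\tilde{\by}}$ is $\p\delta$-Hamming close to $\by$ except with negligible probability, so $\calB$ is a valid adversary in the Hamming \paLDC game whenever this good event occurs. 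By item~(1), for any $[L,R]$ with $R-L+1 \ge \p\kappa = \kappa$, the decoding error of $\Dec^{\tilde{\bY}}_\sk(L,R)$ is at most $\negl(n)$ larger than that of $\p\Dec^{\simu{\tilde{\by}}}_{\psk}(L,R)$. Hence, if $\calA$ produces some $(L,R)$ making $\Dec$ err with probability $> \eps$, the same $(L,R)$ makes $\p\Dec$ err with probability $> \eps - \negl(n)$, contradicting the \paLDC security of $\p\calC$ as long as $\eps = \p\eps + \negl(n)$.

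The main subtlety (and the step I expect to require the most care) is ensuring that $\calB$ is genuinely probabilistic polynomial time and that the simulation is faithful to the Insdel \paldcgame{} distribution. Three points matter. First, $\EncCompile$ and \RecoverBlocks{} are polynomial-time procedures in $n$ (noisy binary search has $O(\log n)$ depth with $O(\log^2 n)$ samples per level, each running \BlockDec{} on $O(\tau)$ symbols), so $\calB$ runs in poly-time whenever $\calA$ does. Second, the view of $\calA$ inside $\calB$ is distributed identically to its view in the real Insdel game, since $\calB$ samples $\sk$ exactly as in the real game via $\p\Gen$ and computes $\bY$ deterministically from $\by$ using public parameters. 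Third, the $\negl(n)$ slack from item~(1) of Corollary~\ref{corr:compilation} is taken over the internal randomness of the \RecoverBlocks{} calls inside both $\calB$ and $\Dec$, which is independent of $\calA$'s coins, so a union bound over the (single) round $q = 1$ and over the two sources of negligible error (good-block simulation failure and Hamming decoder failure on $\simu{\tilde{\by}}$) yields the claimed $\eps = \p\eps + \negl(n)$.

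Finally, the amortized locality bound requires one sanity check: the decoder $\Dec$ in Construction~\ref{constr:finalINSDELpaldc} inherits the $t$-consecutive interval structure of $\p\Dec$ by construction, and since $\p\calC$ is a \paLDC{} with at most $\p\alpha(R-L+1)$ queries per call, the query bound in item~(3) of Corollary~\ref{corr:compilation} directly gives $\alpha = O(\p\alpha \tau \log^3 n / t)$. Combining all three bounds gives the stated parameters and completes the proof.
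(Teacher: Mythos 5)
Your proposal is correct and takes essentially the same route as the paper: a reduction from an Insdel \paldcgame{} adversary $\calA$ against $\calC$ to a Hamming \paldcgame{} adversary $\calB$ against $\p\calC$, where $\calB$ computes $\bY \leftarrow \EncCompile(\by)$, passes it to $\calA$, then extracts $\simu{\tilde{\by}}$ via the \RecoverBlocks{} calls and submits it, with parameters read off from the three items of Corollary~\ref{corr:compilation}. Your additional remarks on the faithfulness of $\calA$'s simulated view, the polynomial running time of $\calB$, and the union bound over the two sources of negligible slack are correct and merely make explicit what the paper's proof leaves implicit.
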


\begin{proof}
    $\alpha$ and $ \kappa$ parameters settings follow directly from Corollary \ref{corr:compilation}.
    
    Suppose for the sake of contradiction that there exists a probabilistic polynomial time adversary $\calA$ and a non-negligible function $\nu$ such that ${\Pr[\paldcgame_{\calC}(\calA,\lambda,\alpha,\kappa,\delta,q) = 1] > \nu(\lambda)},$ where $\paldcgame_{\calC}$ is the \paLDC \ decoding game over Insdel code $\calC$.
    We show that there exists an adversary $\calB$ for Hamming code $\p \calC$ i.e. there is a non-negligible function $\nu'$ such that 
    $\Pr[\paldcgame_{\p \calC}(\calB,\lambda,\p \alpha,\p \kappa,\p \delta,q) = 1] > \nu'(\lambda)$ where $\paldcgame_{\p \calC}$ is is the \paLDC \ decoding game over Hamming code $\p \calC$.
    The reduction is as follows:
    \begin{weirdFrame}{Reduction Argument}
    The challenger for $\paldcgame_{\p \calC}$ and adversary $\calB$ interact as follows:
        \begin{enumerate}
        \item The challenger generates secret key $\sk \leftarrow \p \Gen(1^\lambda)$. 
        \item $\calA$ outputs message $\bx \in \{0,1\}^k.$ $\calB$ sends $\bx$ to the challenger.
        \item The challenger sends back $\by \leftarrow \Enc_{\msf{p},\sk}(\bx)$ to adversary $\calB$. 
        \item Adversary $\calB$ computes $\bY \leftarrow \EncCompile(\by)$ and sends $\bY \in \{0,1\}^n$ to $\calA$.
        \item $\calA$ outputs $\tilde{\bY} \in \{0,1\}^{\tilde{n}}$ such that $\bY$ and $\tilde{\bY}$ are $\delta$-edit-close.
        By the assumption, there exists $[L,R] \subseteq [k]$ and non-negligible function $\nu'$ such that $R - L + 1 \geq \kappa$ and 
        \[\Pr\left[\Dec_{\sk}^{\tilde{\bY}}(L,R) \neq \bx[L,R]\right] > \nu'(\lambda).\]
        \item $\calB$ computes
       \[\simu {\tilde{\by}} \leftarrow \resizebox{.85\textwidth}{!}{$\RecoverBlocks^{\tilde{\bY}}(1,\tilde{n} + 1,1,\lceil t / \tau \rceil) \circ \dots \circ \RecoverBlocks^{\tilde{\bY}}(1,\tilde{n} + 1,n / \tau - \lceil t / \tau \rceil + 1,n/\tau)$}.\]
        \item $\calB$ outputs $\simu {\tilde{\by}}$.
        \end{enumerate}
    \end{weirdFrame}
    Let $\zeta_{\calA}$ be the event that $\paldcgame_{\calC}(\calA,\lambda,\alpha,\kappa,\delta,q) = 1$. 
    Likewise, $\zeta_{\calB}$ be the event that $\paldcgame_{\p \calC}(\calB,\lambda,\p \alpha,\p \kappa,\p \delta,q) = 1$.
    Then,  
    \begin{align*}
        \Pr\left[\zeta_{\calB} = 1 \right] &\geq \Pr[\zeta_{\calA} = 1] \Pr\left[\zeta_{\calB} = 1 \ \middle | \ \zeta_{\calA} = 1\right] \\ 
        &\geq \nu(\lambda) \times \Pr\left[\zeta_{ \calB} = 1\middle | \zeta_{\calA} = 1\right]
     \end{align*}
     It suffices to show that $\Pr\left[\zeta_{\calB} = 1\middle | \zeta_{\calA} = 1\right]$ is non-negligible.
      Set $\chi$ as the probability 
      \[\chi = \Pr\left[\Pr[\Dec_{\msf{p},\sk}^{\simuu {\tilde{\by}}}(L,R) \neq \bx[L,R] \ | \ \zeta_{\calA} = 1 ] \leq \negl(\lambda)\right].\]
      Then, by Corollary $\ref{corr:compilation}$ point 2, since $\tilde{\bY}$ are $\delta$-edit-close,
    \begin{align*}
         \Pr\left[\zeta_{\calB} = 1\middle | \zeta_{\calA} = 1\right] &\geq 1 - \left(\Pr\left[\Ham(\tilde{\by},\by) > \p \delta n \ \middle| \ \zeta_{\calA} = 1\right] + (1 - \chi)\right) \\
         &\geq \negl(\lambda) + (1 - \chi)
     \end{align*}
     Lastly, applying Corollary $\ref{corr:compilation}$ point 1 we can lowerbound $(1 - \chi)$,
     \begin{align*}
         (1 - \chi) \times 1&\geq \Pr[\Dec^{\simu{\tilde{\by}}}_{\msf{p},\sk} (L,R) \neq \bx[L,R]  ] \\
         &\geq \Pr[\Dec^{\tilde{\bY}}(L,R) \neq \bx[L,R]] + \negl(\lambda) \\
         &\geq \nu'(\lambda) + \negl(\lambda).
     \end{align*}
     Thus, $\Pr\left[\zeta_{\calB} = 1 \right] \geq \nu(\lambda) \times (1 - \chi)$ is non-negligible and so we reach a contradiction.
\end{proof}
By instantiating our private-key Insdel compiler with our $t$-querying Hamming \paLDC, we construct an ideal Insdel \paLDC.
\begin{corollary}\label{corr:finalINSDELpaldc}
    If $\p \calC$ of Construction \ref{constr:paLDCv1BlockDecoding}
    instantiated with a constant rate, constant error tolerant, and constant size alphabet code (e.g. a Justesen code), $t = \theta(\tau\log^3 n)$ and $\tau = \theta(\log n)$,
    then code $\calC$ in Construction \ref{constr:finalINSDELpaldc} is an (Ideal) Insdel $(O(1),O(\log^5 n), O(1), \negl(n))$-\paLDC \ with constant rate.
\end{corollary}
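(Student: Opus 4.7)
The plan is to derive this corollary by chaining Theorem \ref{thm:b-block-dec-OTpaldc} (the $t$-consecutive interval querying Hamming \paLDC) with Theorem \ref{thm:finalINSDELpaldc} (the Hamming-to-Insdel compiler), and then verifying that the prescribed parameter regime $\tau = \Theta(\log n)$, $t = \Theta(\tau \log^3 n) = \Theta(\log^4 n)$ produces exactly the advertised parameters $(O(1), O(\log^5 n), O(1), \negl(n))$ together with constant rate.

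First I would instantiate Theorem \ref{thm:b-block-dec-OTpaldc} using the Justesen code as the inner $(A,a)$-code, which gives constant rate $R$, constant error tolerance $\p\delta$, and constant alphabet size $|\Sigma|$. Choosing the Justesen block length so that $a = \Theta(\log^5 n)$ (which forces $\p\kappa = O(a) = O(\log^5 n)$), we must check the precondition $a = \omega(t \log \lambda)$. Since $\lambda = \poly(n)$ gives $\log \lambda = \Theta(\log n)$ and $t = \Theta(\log^4 n)$, we have $t \log \lambda = \Theta(\log^5 n)$, so taking $a$ to be a sufficiently large multiple of $\log^5 n$ (for instance $a = \log^5 n \cdot \log \log n$) satisfies the constraint. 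This yields a $t$-consecutive interval querying Hamming \paLDC \ with parameters $(\p\alpha, \p\kappa, \p\delta, \p\eps) = (2/R, O(\log^5 n), \Theta(\p\delta), \negl(\lambda))$, i.e. $(O(1), O(\log^5 n), \Omega(1), \negl(n))$.

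Next I would apply Theorem \ref{thm:finalINSDELpaldc} to the above Hamming \paLDC. The compiler outputs an Insdel \paLDC \ with $\alpha = O(\p\alpha \tau \log^3 n / t)$, $\kappa = \p\kappa$, $\delta = \Omega(\p\delta)$, and $\eps = \p\eps + \negl(n)$. Substituting $\p\alpha = O(1)$, $t = \Theta(\tau \log^3 n)$ immediately gives $\alpha = O(1)$; similarly $\kappa = O(\log^5 n)$, $\delta = \Omega(1)$, and $\eps = \negl(n)$, matching the stated bounds. For the rate, the Hamming codeword produced by $\p\Enc$ has length $m = \Theta(k)$ since the underlying Justesen code has constant rate, and the compiler $\EncCompile$ turns each $\tau$-bit Hamming block into a Schulman-Zuckerman encoding of $\tau + \log B$ bits (constant rate blowup, since $\tau = \Theta(\log n) \geq \log B$) sandwiched between two $\phi\tau$-bit buffers, so $n = \Theta(m) = \Theta(k)$ and the overall rate is $\Omega(1)$.

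The main subtlety (and the only step that requires care rather than substitution) is verifying that the parameter constraints across the two theorems are simultaneously satisfiable: namely that $\tau \mid n$, $\tau \mid cA$, $t \mid cA$, and $a = \omega(t\log \lambda)$ all hold for a consistent choice of the Justesen code block length and the compiler block size. This is routine because we have freedom to choose $A$ (the Justesen codeword length, proportional to $a$) as any sufficiently large multiple of $t/c$, and $\tau$ divides $t$ by hypothesis; with $a$ chosen on the order of $\log^5 n$ the $\omega(t\log\lambda)$ condition holds by a constant factor. Once these consistency conditions are verified, the corollary follows by directly reading off the parameters from the two theorems.
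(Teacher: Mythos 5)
Your proposal is correct and follows the same route the paper intends: instantiate Theorem \ref{thm:b-block-dec-OTpaldc} with a Justesen code to get a $t$-consecutive interval querying Hamming \paLDC\ with ideal parameters, then feed it into the compiler of Theorem \ref{thm:finalINSDELpaldc}, and read off the resulting parameters under the setting $\tau = \Theta(\log n)$, $t = \Theta(\tau\log^3 n)$. The paper states this corollary without a written-out proof, but your chaining of the two theorems, the rate accounting for the SZ block encoding with $\log B \leq \tau$ plus the $2\phi\tau$ buffer, and the verification of the divisibility/feasibility constraints is precisely the intended argument. You also correctly flagged the one place that deserves care: the requirement $a = \omega(t\log\lambda) = \omega(\log^5 n)$ technically makes $\p\kappa = O(a)$ slightly larger than $\log^5 n$ (say by a $\log\log n$ factor), so the paper's stated $\kappa = O(\log^5 n)$ is mildly loose; this is an imprecision in the corollary's statement rather than in your derivation.
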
 
\section{Ideal Insdel \aLDC s for Resource-bounded Channels }\label{sect:resource-bounded}
In this section, we present an ideal insdel \aLDC \ in settings where the channel is bounded for some resource, such as parallel time or circuit depth.
As in the construction of an ideal Insdel \paLDC, our construction for resource-bounded channels will rely on a construction of a consecutive interval querying Hamming \aLDC \ that will allow the compiler to amortize over its queries. 
Then, we formally prove the Compiler we constructed in Section \ref{sect:bbgkz-modified} is secure in resource-bounded settings, which results in our ideal Insdel construction for resource-bounded channels.

We start by recalling the definition of an \aLDC\ for resource-bounded channels (\raldc).
\begin{definition}[\raldc \ \cite{SCN:AmeBloBlo22}]
    A $(n,k)$-code $\calC = (\Enc,\Dec)$ is a $(\alpha, \kappa, \delta, \epsilon,\R)$-resource-bounded amortizeable LDC (\raldc)  for hamming errors (resp. insDel errors) if for any algorithm $\calA$ in algorithm class $\R$, there is a negligible function $\mu$ such that
        \[\Pr[\raldcgame(\calA,\lambda,\delta,\kappa) = 1] \leq \mu(\lambda),\]
        where the probability is taken over the randomness of $\calA, \Gen,$ and $\paldcgame$. 
        The experiment \raldcgame\ is defined as follows:
     \begin{weirdFrame}
        {\raldcgame$(\calA,\lambda,\delta,\kappa)$}
        \begin{enumerate}
            \item The adversary $\calA$ chooses a message $\bx \in \{0,1\}^k$ and sends it to the encoder.
            \item The encoder sends $\by \leftarrow \Enc_{\sk}(\bx)$ to the adversary.
            \item The adversary outputs $\Tilde{\by} \in \{0,1\}^*$ that is $\delta$-hamming-close (resp. $\delta$-edit-close) distance to $\by^{(h)}$.
            \item If there exists $L^{(h)}, R^{(h)} \in [k]$ such that $R^{(h)} - L^{(h)} + 1 \geq \kappa$ and
            \[\hspace{-\leftmargin}\Pr\left[\Dec^{\Tilde{\by}}(L,R) \neq \bx[L,R]\right] > \eps(\lambda)\] such that $\Dec^{\Tilde{\by}}(.)$ makes at most $(R - L + 1) \alpha$ queries to $\tilde \by$, then this experiment outputs $1$.
        \end{enumerate}
        Otherwise, output $0.$
    \end{weirdFrame}
\end{definition}
\subsection{A consecutive interval querying, Ideal \raldc}\label{subsect:resourceIdealHamming}
To construct a $t$-consecutive interval querying, ideal \raldc , we employ the Hamming \paLDC \ to \raldc \ compiler of Blocki and Zhang \cite{bloZha25}, which was a modification of the original compiler by Ameri et al. \cite{SCN:AmeBloBlo22}.
The construction makes use of two other building blocks: the first is cryptographic puzzles, consisting of two algorithms $\PuzzGen$ and $\PuzzSolve.$
On input seed $\bc$, $\PuzzGen$ outputs a puzzle $\bZ$ with solution is $\bc$ i.e. $\PuzzSolve(\bZ) = \bc.$ 
The security requirement states that adversary $\calA$ in a defined algorithm class $\R,$ cannot solve the puzzle $\bZ$ and cannot even distinguish between tuples $(\bZ_0,\bc_0,\bc_1)$ and $(\bZ_1,\bc_0,\bc_1)$ for random $\bc_i$ and $\bZ_i = \PuzzGen(\bc_i).$
Ameri et al. are able to construct {\em memory-hard} puzzles, i.e. cryptographic puzzles unsolvable by algorithm class $\R_{\msf{cmc}}$ of algorithms with bounded cumulative memory complexity, under standard cryptographic assumptions. 
We generalize their definition for any class of algorithms $\R$, where if a cryptographic puzzle is unsolvable by any algorithm in $\R$, we say the puzzle is $\R$-hard.
\begin{definition}[\Puzz \ \cite{SCN:AmeBloBlo22}]\label{def:puz}
    A puzzle $\Puzz = (\PuzzGen,\Sol)$ is a $\R$-hard puzzle for algorithm class $\R$ if there exists a polynomial $t'$ such that for all polynomials $t > t'$ and every algorithm $\calA \in \R,$ there exists $\lambda_0$ such that for al $\lambda > \lambda_0$ and every $s_0,s_1 \in \{0,1\}^\lambda$, we have 
    \[\left| \Pr \calA(Z_{b},Z_{1 - b},s_0,s_1) = b] - \frac{1}{2}\right| \leq \negl(\lambda),\]
    where the probability is taken over $b \unif \{0,1\}$ and $Z_i \leftarrow \Gen(1^\lambda,t(\lambda),s_i)$ for $i \in \{0,1\}.$
\end{definition}
The second building block is a variant of a locally decodable code, referred to as $\LDC^*$, which recovers the entire (short) encoded message $\bs$ with locality only scaling linearly with the message length. 
\begin{definition}[\LdcStar \cite{ITC:BloKulZho20}]\label{def:ldcStar}
    A $(n,k)$-code $\calC = (\Enc,\Dec)$ is an $(\ell,\delta,\eps)$-\LdcStar\ if for all $\bc \in \Sigma^k$, $\Dec^{\tilde{\by_*}}$ with query access to word $\tilde\by_*$ $\delta$-close to $\Enc(\bc)$, 
    \[\Pr[\Dec^{\tilde{\by_*}} = \bc] \geq (1 - \eps),\]
    where $\Dec^{\tilde{\by_*}}$ makes at most $\ell$ queries to $\tilde{\by_*}$.
\end{definition}
The original $\LDC^*$ construction given by Blocki et al. \cite{ITC:BloKulZho20} is a repetition code of a constant rate, constant error tolerance code (e.g. the Justesen code), where $\Dec^*$ queries a constant number of repeated codewords and performs a majority vote decoding.
On message length $k$, their \LdcStar decoder makes $\theta(k)$ queries, where the codeword length $n \gg k$ can be arbitrarily large.

Given a $\paLDC$ code $\p \calC = (\p \Gen, \p \Enc, \p \Dec)$ and a $\LDC^*$ code $\calC_* = (\Enc_*,\Dec_*)$, the constructed Hamming $\raldc$ encoder $\Enc$, on message $\bx$ will (1) generate the secret key $\sk \leftarrow \p \Gen(1^\lambda; \bc)$ with some random coins $\bc$ (2) Compute the \paLDC \ encoding of the message $\p \by \leftarrow \Enc_{\psk}(\bx)$ (3) Compute the $\LDC^*$ encoding of the puzzle $\by_* \leftarrow \Enc_*(\bZ)$, where the $\LDC^*$ encoding has codeword length $\theta(|\p \by |)$ (4) and finally, output $\by = \p \by \circ \by_*$.
Intuitively, a resource-bounded channel will not be able to solve the puzzle to retrieve the secret key from $\by_*$.
Hence, the message encoding $\p \by$ looks random to the channel, reducing the resource-bounded channel's view of the codeword to the view a computationally-bound channel in the private-key setting.
On the other hand, the $\raldc$ decoder $\Dec$ will have sufficient resources to solve the puzzle and locally decode the codeword on input interval $[L,R]$ i.e. $\Dec$ computes $\tilde{\bc} \leftarrow \PuzzSolve(\Dec_*(\tilde{\by_*}))$, $\tilde{\sk} \leftarrow \p \Gen(1^\lambda; \tilde{\bc}),$ and outputs $\Dec_{\msf{p} \tilde{\sk}}^{\tilde{\p \by}}(L,R)$.
By choosing an ideal \paLDC \ code $\p \calC$ and an appropriate $\LDC^*$ code $\calC_*$, the resulting compiled code $(\Enc,\Dec)$ is an ideal $\raldc$.

We observe that the compiled \raldc \ decoder \Dec\ satisfies $t$-consecutive interval query when instantiated with a $t$-consecutive interval querying \paLDC\ and the original $\LDC^*$ construction by Blocki et al. \cite{ITC:BloKulZho20}.  
First, observe that the decoder $\Dec$ queries the message encoding $\p \by$ and the secret-key randomness encoding $\by_*$ disjointly using $\p \Dec$ and $\Dec_*$ respectively. 
Second, $\Dec_*$ is also $t$-consecutive interval querying as long as the encoding used in the repetition encoding has codeword length $t.$ 
Thus, the overall \raldc\ decoder \Dec is $t$-consecutive interval querying.
We summarize this result formally below.
\begin{theorem}\label{thm:paLdctoResourceTblock}
Let $\lambda \in \N$. Let $\p \calC = (\p \Gen, \p \Enc, \p \Dec)$ be a $(\p n,\p k)$-code that is a $t$-consecutive interval querying Hamming $(\p \alpha, \p \kappa, \p \delta, \p \eps)$-\paLDC. For any algorithm class $\R$ such that there exists $\R$-hard cryptographic puzzles, there exists a $(\res n, \res k)$-code $\res \calC = (\res \Enc,\res \Dec)$ that is a $t$-consecutive interval querying Hamming $(\res \alpha,\res \kappa,\res \delta,\res \eps,\R)$-\raldc\ for $\res n = \theta(\p n),\res k = \p k = \poly(\lambda),\res \alpha = \theta(\p \alpha), \res \kappa = \p \kappa, \res \delta = \theta( \p \delta),$ and $\res \eps = \p \eps + \negl(n)$.
\end{theorem}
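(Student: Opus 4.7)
The plan is to follow the \paLDC-to-\raldc\ compilation strategy of Blocki and Zhang \cite{bloZha25} (itself based on Ameri et al. \cite{SCN:AmeBloBlo22}), while tracking the $t$-consecutive interval querying property. Concretely, I define $\res \Gen(1^\lambda)$ to sample random coins $\bc$ and puzzle $\bZ \leftarrow \PuzzGen(1^\lambda,\bc)$; the encoder outputs $\res \by \leftarrow \p \Enc_{\sk}(\bx) \circ \Enc_*(\bZ)$, where $\sk \leftarrow \p \Gen(1^\lambda;\bc)$ and $\calC_* = (\Enc_*,\Dec_*)$ is the repetition-based $\LDC^*$ of \cite{ITC:BloKulZho20} instantiated so that the inner code has codeword length exactly $t$. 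The decoder $\res \Dec^{\tilde{\res \by}}(L,R)$ first runs $\Dec_*$ on the suffix to recover $\bZ$, computes $\tilde \bc \leftarrow \PuzzSolve(\bZ)$ and $\tilde\sk \leftarrow \p \Gen(1^\lambda;\tilde\bc)$, and then returns $\p \Dec_{\tilde\sk}^{\tilde{\p \by}}(L,R)$.

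I then verify that $\res \Dec$ is $t$-consecutive interval querying. Its queries partition into those to the prefix (run by $\p \Dec$) and those to the suffix (run by $\Dec_*$). The prefix queries are $t$-consecutive intervals by hypothesis on $\p \Dec$, and the suffix queries made by $\Dec_*$ consist of reading a constant number of full copies of the $t$-length inner codeword, each of which is itself a $t$-consecutive interval. Since the two regions are disjoint, the overall query set is still a disjoint union of $t$-consecutive intervals.

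For security, I argue by reduction via a two-step hybrid. In hybrid $H_0$ the adversary $\calA \in \R$ plays the real $\raldcgame$ against $\res \calC$; in $H_1$ the challenger replaces $\by_* = \Enc_*(\PuzzGen(\bc))$ by $\Enc_*(\PuzzGen(\bc'))$ for fresh $\bc'$ independent of $\sk$. By $\R$-hardness of the puzzle (Definition \ref{def:puz}), $\calA$'s success probability changes by at most $\negl(\lambda)$. In $H_1$, the suffix is independent of $\sk$, so any adversary $\calA$ succeeding in $H_1$ yields an adversary $\calB$ against the \paLDC\ security of $\p \calC$: $\calB$ forwards the adversary's chosen message to its challenger, receives $\p \by$, samples $\bc'$ on its own, computes $\by_* = \Enc_*(\PuzzGen(\bc'))$, and hands $\p \by \circ \by_*$ to $\calA$, then forwards the prefix of $\calA$'s corruption to its challenger. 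Finally, the parameter bounds follow from the construction: $\res n = \p n + \theta(\p n) = \theta(\p n)$ since $|\by_*| = \theta(\p n)$; $\res\kappa = \p\kappa$; $\res\delta = \theta(\p\delta)$ by splitting the $\delta$-budget between prefix and suffix; and $\res\alpha = \theta(\p\alpha)$ once the fixed $O(\lambda)$ cost of decoding the puzzle is amortized over an interval of size $\res\kappa \ge \p\kappa$ (with $\p\kappa = \omega(\lambda)$, which holds in our instantiations).

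The main obstacle is a subtlety in the reduction class: $\calB$ runs $\calA \in \R$ and additionally executes $\PuzzGen$ and $\Enc_*$, so a priori $\calB$ is not necessarily PPT, whereas \paLDC\ security is only stated against PPT. The cleanest resolution is to observe that the $\p \calC$ we will plug in (Construction \ref{constr:paLDCv1BlockDecoding} with a Justesen inner code) is information-theoretically secure in a single round via the Lipton-style argument used in Theorem \ref{thm:b-block-dec-OTpaldc}, so it is in fact secure against \emph{all} adversaries, including $\calB$. Alternatively, one restricts the statement to classes $\R$ for which $\calB \in \overline{\R} \supseteq \mathsf{PPT}$ and the \paLDC\ is assumed secure against $\overline{\R}$; this is the formulation foreshadowed at the end of the technical overview and is how I will phrase the theorem. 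Everything else is a mechanical check of parameters.
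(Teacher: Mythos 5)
Your proposal tracks the paper's approach: invoke the Blocki--Zhang / Ameri et al.\ \paLDC-to-\raldc\ compiler and observe that the compiled decoder's queries split disjointly between the prefix $\p\by$ (handled by the $t$-consecutive interval querying \paLDC\ decoder) and the suffix $\by_*$ (handled by the repetition-based $\LdcStar$ decoder, which reads whole inner codewords), so that choosing the inner codeword length compatibly with $t$ preserves $t$-consecutive interval querying. The paper states Theorem \ref{thm:paLdctoResourceTblock} after precisely these observations, so your construction and locality analysis coincide with the paper's; your addition is to spell out the two-step puzzle-indistinguishability hybrid that the cited compilers use for security. Two minor notes. First, the inner codeword length need only be a multiple of $t$, not equal to $t$: reading an inner codeword of length $ct$ decomposes into $c$ disjoint $t$-intervals, which still satisfies the definition, whereas insisting on length exactly $t$ would artificially cap the puzzle size at $\Theta(t)$. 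Second, the class-containment worry you flag is genuine but is resolved by the standard convention that every $\calA \in \R$ is PPT (the class $\R$ refines PPT with an additional resource bound), so the reduction $\calB$, which runs $\calA$ plus the PPT overhead of $\PuzzGen$ and $\Enc_*$, is itself PPT and the \paLDC\ security game applies directly; the closure-class $\overline{\R}$ machinery is needed only later (Theorem \ref{thm:finalINSDELraldc}) where the target notion is $\R$-bounded rather than PPT. Your fallback appeal to the information-theoretic security of Construction \ref{constr:paLDCv1BlockDecoding} does not cover the theorem as stated, which quantifies over an arbitrary $t$-consecutive interval querying \paLDC.
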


\subsection{The Ideal Insdel \raldc\ Construction}\label{subsect:resourceIdealInsdel}
We present our final construction of an ideal Insdel \raldc . 
We will proceed similarly to our construction of an ideal Insdel \paLDC \ in Section \ref{sect:comp-bounded}, where we use our Hamming-to-Insdel compiler \EncCompile\ with the ideal, $t$-consecutive interval querying Hamming \raldc\ constructed in the prior subsection.
\begin{construction}\label{constr:finalINSDELraldc}
Suppose $(m,k)$-code $\res \calC = (\res \Enc, \res \Dec)$ is a $t$-consecutive interval querying Hamming {$(\res \alpha,\res \kappa,\res \delta,\res \eps,\R)$}-\raldc. 
Then define $\calC = (\Enc,\Dec)$ as 
\begin{itemize}
    \item[] $\Enc(\bx) := \EncCompile(\res \Enc(\bx)).$ 
    \item[] $\Dec^{\Tilde{\bY}}(L,R) := \res \Dec^{\tilde{\bY}}(L,R)$
\end{itemize}
\end{construction}
Just as before, we will need to show that security holds when we use the Hamming-to-Insdel compiler in a resource-bounded setting by giving a reduction argument. 
For convenience, we restate Corollary \ref{corr:compilation}.
\compilation*
One additional nuance we will need to handle is to allow any adversary for the Hamming code sufficient resources to perform the reduction.
Recall that in order to reach a contradiction, the adversary $\calB$ performed two non-resource-trivial computations: (1) adversary $\calB$ ran $\EncCompile(\by)$ and (2) simulated the Hamming codeword by computing
\[\simu {\tilde{\by}} \leftarrow \RecoverBlocks^{\tilde{\bY}}(1,\tilde{n} + 1,1,t) \circ \dots \circ \RecoverBlocks^{\tilde{\bY}}(1,\tilde{n} + 1,B - t + 1,B).\]
Let $T(n)$ (resp. $S(n)$) be the running time (resp. space usage) of the computation of  $\EncCompile(\by)$ and $\simu{\tilde{\by}}$ .
Let $\calB = \reduce_n(\calA)$ be a reduction from algorithm $\calA$ to $\calB$ in in time $T(n)$ time with space usage $S(n)$.
Then, for any class of algorithms $\R(n),$ denote its closure class $\overline{\R}(n)$ with respect to $\reduce_n$ defined as the minimum class of algorithms such that $\reduce_n(\calA) \in \overline{\R}(n)$ for all $\calA \in \R(n).$
Note that we can define a more precise closure by considering the reduction with respect to the {\em parallel} running time since each $\RecoverBlocks$ call to construct $\simu \by$ can be ran in parallel.
Additionally note that for most classes $\R,$ the closure class is the same as the original class i.e. $\overline{\R} = \R$, with an example being the class of cumulative memory complexity bound algorithms.

\begin{theorem}\label{thm:finalINSDELraldc}
    Let $\res \calC = (\res \Enc, \res \Dec)$ be a $(m,k)$-code that is a $t$-consecutive interval querying Hamming $(\res \alpha,\res \kappa,\res \delta,\res \eps,\R(n))$-\raldc. Then, $\calC = (\Enc,\Dec)$ in Construction \ref{constr:finalINSDELraldc} is a $(n,k)$-code that is an Insdel $(\alpha,\kappa,\delta, \eps,\overline{\R}(n))$-\raldc \ with $\alpha = O \left(\frac{\res \alpha \tau \log^3 n}{t}\right), \kappa = \res \kappa, \delta = \Omega(\res \delta),$ and $\eps = \res \eps + \negl(n).$
\end{theorem}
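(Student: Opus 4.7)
The plan is to mirror the reduction of Theorem \ref{thm:finalINSDELpaldc} almost verbatim, with two adjustments: (i) there is no secret key to share between the encoder/decoder/reduction, so the experiment is single-shot and non-interactive with respect to key generation, and (ii) the resources consumed by the reduction itself must be charged to the adversary class, which is exactly what the closure $\overline{\R}(n)$ is designed for. The parameter bounds $\alpha = O\!\left(\tfrac{\res\alpha\,\tau\log^3 n}{t}\right)$, $\kappa = \res\kappa$, and the admissible edit-distance fraction $\delta = \Omega(\res\delta)$ are inherited directly from Corollary \ref{corr:compilation} (part 3 for the query complexity, part 2 for the distance slack), so it suffices to establish correctness.

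Suppose for contradiction that there exists $\calA \in \overline{\R}(n)$ and a non-negligible function $\nu$ such that $\Pr[\raldcgame_{\calC}(\calA,\lambda,\delta,\kappa)=1] > \nu(n)$. I would construct a Hamming adversary $\calB$ against $\res\calC$ as follows: $\calB$ takes the message $\bx$ output by $\calA$ and forwards it to the Hamming encoder; on receiving $\by \leftarrow \res\Enc(\bx)$, $\calB$ computes $\bY \leftarrow \EncCompile(\by)$ and hands $\bY$ to $\calA$; $\calA$ returns an edit-close $\tilde\bY$; finally $\calB$ assembles
\[
\simu{\tilde\by} \leftarrow \RecoverBlocks^{\tilde\bY}(1,\tilde n+1,1,\lceil t/\tau\rceil) \circ \cdots \circ \RecoverBlocks^{\tilde\bY}(1,\tilde n+1,n/\tau - \lceil t/\tau\rceil + 1,n/\tau)
\]
by $O(n/t)$ independent \RecoverBlocks\ invocations and outputs $\simu{\tilde\by}$ as its Hamming corruption. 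Applying Corollary \ref{corr:compilation}(2), $\simu{\tilde\by}$ is $\res\delta$-Hamming close to $\by$ except with negligible probability; applying Corollary \ref{corr:compilation}(1), for every interval $[L,R]$ with $R-L+1 \geq \res\kappa$,
\[
\Pr\bigl[\res\Dec^{\simu{\tilde\by}}(L,R) \neq \bx[L,R]\bigr] \;\geq\; \Pr\bigl[\Dec^{\tilde\bY}(L,R) \neq \bx[L,R]\bigr] - \negl(n) \;>\; \res\eps
\]
whenever the event $\{\raldcgame_{\calC}(\calA,\dots)=1\}$ occurs. Hence $\calB$ wins $\raldcgame_{\res\calC}$ with non-negligible probability $\nu(n) - \negl(n)$.

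The crucial new step is the resource analysis that certifies $\calB \in \R(n)$, contradicting the assumed $\R(n)$-security of $\res\calC$. The extra work performed by $\calB$ on top of running $\calA$ consists solely of one call to $\EncCompile$ (which is a linear-time block encoding under $\insdel\Enc$ and zero-padding) and $O(n/t)$ invocations of $\RecoverBlocks$ which can be executed in parallel; this is exactly the overhead $T(n),S(n)$ captured by $\reduce_n$ in the paragraph preceding the theorem. By the defining property of $\overline{\R}(n)$, the composition $\reduce_n(\calA)$ lies in $\R(n)$ whenever $\calA \in \overline{\R}(n)$, so $\calB \in \R(n)$ as required.

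The main obstacle I anticipate is this resource bookkeeping. The encoding overhead and the \RecoverBlocks\ calls are easy to bound sequentially (polynomial time and space), but to preserve finer-grained classes such as bounded cumulative memory or bounded circuit-depth, one must observe that the $O(n/t)$ \RecoverBlocks\ calls are embarrassingly parallel (they access disjoint block indices and do not depend on one another), and that each call itself invokes only $O(\log n)$ sequential iterations of the noisy binary search with parallelizable sampling. Folding this parallel structure into $\reduce_n$ is what makes the closure $\overline{\R}(n)$ collapse to $\R(n)$ for the natural resource classes (memory-hard, depth-bounded, etc.) highlighted in the paragraph preceding the theorem, which is what makes the theorem useful in concrete instantiations.
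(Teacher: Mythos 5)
Your reduction structure and your use of Corollary \ref{corr:compilation} (parts 1, 2, and 3 for correctness, distance slack, and query complexity) match the paper's approach, but the resource bookkeeping runs in the wrong direction relative to the paper's definition of the closure class. The paper defines $\overline{\R}(n)$ as the \emph{minimum} class satisfying $\reduce_n(\calA) \in \overline{\R}(n)$ for all $\calA \in \R(n)$; that is, $\overline{\R}(n)$ is the \emph{larger} class obtained by augmenting $\R(n)$ with the time $T(n)$ and space $S(n)$ overhead of running $\EncCompile$ and the $\RecoverBlocks$ calls. Your pivotal step, ``the composition $\reduce_n(\calA)$ lies in $\R(n)$ whenever $\calA \in \overline{\R}(n)$,'' asserts the reverse containment, treating $\overline{\R}(n)$ as a \emph{pre-image} of $\R(n)$ under $\reduce_n$. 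Under the definition actually given, starting from $\calA \in \overline{\R}(n)$ only lands $\calB = \reduce_n(\calA)$ in the even larger closure of $\overline{\R}(n)$, not back in $\R(n)$, so the Hamming code's assumed security against $\R(n)$ does not produce the contradiction you need.

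The paper's own proof starts from $\calA \in \R(n)$ (not $\overline{\R}(n)$) and derives $\calB \in \overline{\R}(n)$, which is the direction the definition licenses; this only gives a contradiction if the Hamming security hypothesis is read against the wider class $\overline{\R}(n)$, and indeed the technical overview says explicitly that ``the underlying Hamming \aLDC\ must be secure against a wider class of adversaries $\overline{\R}$.'' So the theorem statement itself appears to have the roles of $\R$ and $\overline{\R}$ reversed relative to the proof, and your attempt inherits the confusion by inventing a closure property that the definition does not supply. Your observation that the $O(n/t)$ $\RecoverBlocks$ calls are embarrassingly parallel, so that $\overline{\R} = \R$ for classes like cumulative-memory or depth-bounded adversaries, correctly tracks the paper's remark and is the reason the result is useful in practice; but that equality is a feature of particular instantiations, not a consequence of the generic definition, and your argument needs it (or the role swap) to go through.
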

\begin{proof}
    Suppose for the sake of contradiction that there exists an adversary $\calA \in \R(n)$ and a non-negligible function $\nu$ such that \[{\Pr[\raldcgame_{\calC}(\calA,\lambda,\alpha,\kappa,\delta,q) = 1] > \nu(\lambda)},\] where $\raldcgame_{\calC}$ is the \raldc \ decoding game over Insdel code $\calC$.
    We show that there exists an adversary $\calB \in \overline{R}$ for Hamming code $\res \calC$ i.e. there is a non-negligible function $\nu'$ such that 
    $\Pr[\raldcgame_{\res \calC}(\calB,\lambda,\res \alpha,\res \kappa,\res \delta,q) = 1] > \nu'(\lambda)$ where $\raldcgame_{\res \calC}$ is is the \raldc \ decoding game over Hamming code $\res \calC$.
    Adversary $\calB$ is defined the same as in the proof of Theorem \ref{thm:finalINSDELpaldc}, and since by construction, the adversary $\calB \in \overline{\R}(n),$ we reach a contradiction.
\end{proof}
\begin{corollary}\label{corr:finalINSDELraldc}
    Let $t = \theta(\tau\log^3 n)$, $\tau = \theta(\log n)$, and define any algorithm class $\R$ such that there exists $\R$-hard cryptographic puzzles.
    Then by Theorem \ref{thm:paLdctoResourceTblock}, code $\calC$ in Construction \ref{constr:finalINSDELraldc} is an Insdel $(O(1),O(\log^5 n), O(1), \negl(n),\R)$-\raldc\ with constant rate.
\end{corollary}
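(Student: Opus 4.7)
The plan is to chain together the three main building blocks assembled in the paper: the consecutive interval querying Hamming \paLDC\ of Construction \ref{constr:paLDCv1BlockDecoding}, the Hamming \paLDC-to-\raldc\ compiler of Theorem \ref{thm:paLdctoResourceTblock}, and the Hamming-to-Insdel \raldc\ compiler of Theorem \ref{thm:finalINSDELraldc}. The key task is to verify that the parameters $\tau, t$, and the underlying Hamming block size $a$ can be set simultaneously so that each compiler preserves ideal parameters.

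First I would instantiate Construction \ref{constr:paLDCv1BlockDecoding} with a constant-rate, constant-error-tolerant, constant-alphabet code $\calC$ (e.g., the Justesen code), together with sub-block size $t = \theta(\tau \log^3 n)$ and inner Hamming block message size $a$ chosen large enough so that $a = \omega(t \log \lambda) = \omega(\log^5 n)$; concretely, setting $a = \theta(\log^5 n)$ works since $k \in \poly(\lambda)$ implies $\log n = \theta(\log \lambda)$. By Theorem \ref{thm:b-block-dec-OTpaldc}, this yields a $t$-consecutive interval querying Hamming $(O(1), O(\log^5 n), \Omega(1), \negl(\lambda), 1)$-\paLDC\ of constant rate.

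Next, I would apply Theorem \ref{thm:paLdctoResourceTblock} to convert this Hamming \paLDC\ into a $t$-consecutive interval querying Hamming $\raldc$, using the assumed class of $\R$-hard cryptographic puzzles. The theorem asserts that all relevant parameters (rate, amortized locality, error tolerance, block granularity) are preserved up to constant factors, and that the resulting Hamming \raldc\ remains $t$-consecutive interval querying since the $\LDC^*$ portion of the compiled codeword can be queried in $t$-sized intervals while the \paLDC\ portion inherits the consecutive interval querying property. The resulting code is therefore an ideal $t$-consecutive interval querying Hamming $\raldc$ with $\kappa = O(\log^5 n)$.

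Finally, I would apply Theorem \ref{thm:finalINSDELraldc} (invoked via Construction \ref{constr:finalINSDELraldc}) with $\tau = \theta(\log n)$, which is the block size used inside $\EncCompile$. The theorem gives an Insdel \raldc\ whose amortized locality is $\alpha = O(\res\alpha \cdot \tau \log^3 n / t) = O(1)$ by our choice $t = \theta(\tau \log^3 n)$, whose block granularity is $\kappa = \res\kappa = O(\log^5 n)$, whose error tolerance is $\Omega(\res\delta) = \Omega(1)$, and whose failure probability is $\res\eps + \negl(n) = \negl(n)$. The rate remains constant because $\EncCompile$ introduces only a constant-factor blowup ($\insdel\beta \log B + 2\phi$ per block of size $\tau = \theta(\log n)$, so the per-block blowup is $O(1)$), and security against $\overline{\R}(n)$ reduces to security against $\R(n)$ via the reduction described in the proof of Theorem \ref{thm:finalINSDELraldc}. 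The main (mild) obstacle is simply bookkeeping the parameter relations, specifically verifying that with $t = \theta(\tau \log^3 n) = \theta(\log^4 n)$ the lower bound $a = \omega(t \log \lambda)$ required by Theorem \ref{thm:b-block-dec-OTpaldc} is compatible with $\kappa = O(\log^5 n)$, which is exactly what the above parameter choice guarantees.
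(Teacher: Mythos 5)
Your proposal is correct and follows the same route the paper intends: chain Theorem \ref{thm:b-block-dec-OTpaldc}, Theorem \ref{thm:paLdctoResourceTblock}, and Theorem \ref{thm:finalINSDELraldc}, choosing $t = \theta(\tau\log^3 n)$ to cancel the $\tau\log^3 n / t$ locality blow-up and $a$ large enough that the Hamming \paLDC\ has negligible error. A minor nit: Theorem \ref{thm:b-block-dec-OTpaldc} requires the strict condition $a = \omega(t\log\lambda)$, so $a = \theta(\log^5 n)$ does not literally satisfy it (one needs $a$ slightly larger, e.g.\ $\log^5 n \cdot \omega(1)$), but this looseness matches the paper's own $\kappa = O(\log^5 n)$ in Corollaries \ref{corr:finalINSDELpaldc} and \ref{corr:finalINSDELraldc}.
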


\bibliographystyle{IEEEtran}
\bibliography{bib/other,bib/abbrev0,bib/crypto}

\appendix 

\section{Expanded Discussion of Related Work} 
\label{app:relatedWork}
\paragraph*{\LDC \ Lowerbounds for Worst-case Errors}
Locally decodable codes for worst-case errors have been extensively studied. 
In particular, ideal (constant rate, constant error-tolerance, and constant locality) \LDC s for worst-case errors do not exist.
Any \LDC \ with constant locality $\ell \geq 2$ and constant error tolerance $\delta > 0$ must have sub-constant rate \cite{STOC:KatTre00}, where the best known constructions have super-polynomial codeword length (e.g., matching vector codes) \cite{LDCSURVEY}.
In the special case of locality $\ell = 2$, any construction necessarily has exponential rate \cite{STOC:KerdWo03,CCC:GKST06,FOCS:BenRegdWo08}. 
This lower bound is matched by the Hadamard code.

Further, Katz and Trevisan show that there do not exist \LDC\ for locality $\ell < 2$ even when the rate is allowed to be exponential\cite{STOC:KatTre00} or when the error-rate is relaxed to be sub-constant, i.e., $\delta = o(1)$.
In other words, a local decoder must read {\em at least twice} as much information as requested in the setting of worst-case errors.

% Can talk about lower bounds for LDCs including exponential lower bound for 2-query. 

% Best known constructions with specific localities achieved. Can talk about relaxed LDCs and mention specific upper/lower bounds.

% The trade-offs between rate, locality, and error tolerance within \LDC s for (worst-case) classical errors have been extensively studied yet achievable parameters remain sub-optimal and undesirable.
% Ideally, we would like an \LDC\ which achieves constant rate, constant locality, and constant error-rate tolerance simultaneously.
% However, any \LDC\ with constant locality $\ell \geq 2$ or constant error tolerance $\delta > 0$ must have sub-constant rate \cite{STOC:KatTre00}, where the best known constructions (e.g Hadamard and matching vector codes) have super-polynomial codeword length \cite{LDCSURVEY}.
% In particular, for locality $\ell = 2$, we have constructions, any construction must have exponential rate \cite{LDCSURVEY, STOC:KerdWo03,CCC:GKST06,FOCS:BenRegdWo08}.
% Katz and Trevisan show further that there do not exist \LDC\ for $\ell < 2$ even when the rate is allowed to be exponential\cite{STOC:KatTre00}.
% It is easy to verify that their result further extends to settings where the error-rate is $\delta = o(1)$ e.g $\delta = O(1/n^{.99}).$
% In other words, a local decoder must read {\em at least twice} as much information as requested in the setting of worst-case errors.

\paragraph*{Relaxed Locally Decodable Codes}
To deal with the undesirable rate, locality, and error-tolerance trade-offs for worst-case error \LDC s, Ben-Sasson et al. introduced the notion of a relaxed \LDC s, allowing the decoder to reject (output $\bot$) instead of outputting a codeword symbol whenever it detects error \cite{STOC:BGHSV04}.
This was further expanded by Gur et al. to study locally {\em correctable} codes (\LCC s) where the decoder returns symbols of the codeword instead of the message \cite{ITCS:GurRamRot18}.
However, ideal relaxed \LDC/\LCC s are still provably unachievable. 
Gur and Lachish prove that any prove that any relaxed \LDC/\LCC \ with constant rate and constant error-tolerance must have $\Tilde{\Omega}(\sqrt{\log k})$ locality \cite{SODA:GurLac20}\footnote{The result of Gur and Lachish also implies that any relaxed \LDC\ with constant locality and error-tolerance must have codeword length $n = \Omega\left(k^{1 + c}\right)$, where $c = 1 / O\left(\ell^2\right)$}, ruling out any possibility of having constant rate, constant locality, and constant error-tolerance.
Prior to this lowerbound, Gur et al. construct relaxed \LCC s with constant rate and constant error-tolerance but with sub-optimal, super-polylogarithmic locality $\ell = (\log k)^{O(\log\log k)}$ \cite{SODA:ChiGurShi20, STOC:BGHSV04}, raising the question of whether the lowerbound of Gur and Lachish was achievable. 
Recently, Kumar and Mon made progress towards reaching this lowerbound by constructing the first relaxed \LCC  s with constant rate, constant error-tolerance, and $\polylog (k)$ locality \cite{STOC:KumMon24}. 
Kumar and Mon built off a recent breakthrough in the construction of asymptotically optimal locally testable codes \cite{STOC:DELLM22}. 
Their code construction essentially concatenated smaller locally testable codewords in a block-like fashion such that the overall codeword retained (relaxed) local correctability. 
At a high level, the local decoder will apply the local tester recursively to detect if the block corresponding to the requested index has sufficiently high error: if so, the local decoder outputs $\bot$, and if not, recovers the entire block.  
Kumar and Mon's construction achieves constant rate, constant error tolerance, and $O((\log k)^{69})$ locality.
Their construction was further improved to locality $\ell = (\log k)^{2 + o(1)}$ by Cohen and Yankovitz \cite{CCC:CohYan24} by swapping the locally testable code used in the construction with a expander code.
Nonetheless, the existence of a constant rate, constant error-tolerance, and $\Tilde{\Omega}(\sqrt{\log k})$ locality \LCC \ remains an open question.
Additionally, it is unknown whether one can construct an ideal {\em amortized} relaxed \LDC.

\paragraph*{\LDC s for Computationally/Resource-bounded Channels}
Another popular \LDC \ relaxation allow a sender and receiver to share randomness that is unknown to the error channel. 
This was first introduced by Ostrovsky et al. under the notion of private locally decodable codes \cite{EPRINT:OstPanSah07}. 
Ostrovsky et al. first give a construction with constant rate, constant error-tolerant, and polylog locality for arbitrary error channels when the shared randomness (the secret key) is only used once. 
When the secret key is used a polynomial number of times, Ostrovsky et al. provide a construction with an additional log factor in the locality under the additional assumption that the error channel can only make polynomial time computations. 

Private locally decodable codes were extended to {\em public-key} locally decodable codes, where the sender, receiver, and channel share public randomness, by Hemenway and Ostrovsky \cite{C:HemOst08}.
They show that a public-key locally decodable code can be constructed with a Private Information Retrieval (PIR) scheme and a IND-CPA public-key encryption. 
Hemenway et al. improved on this construction by lowering the key-length to be linear to the security parameter and improving the locality by a square-root factor \cite{HemOstStraWoo11}.

While these codes depend on shared randomness, this assumption can be removed if the error channel is assumed to be resource-bounded, e.g., the channel cannot evaluate circuits beyond a particular size/depth. 
Blocki et al. provide the first framework for converting any private \LDC \ into an \LDC \ for resource-bound channels under the assumption of a function computable by the sender and receiver but not the channel \cite{ITC:BloKulZho20}.
Alternatively, Ameri et al. provide a similar converting framework under the assumption of hard puzzles \cite{SCN:AmeBloBlo22}.

% However, even with these relaxations, the achievable trade-offs are still sub-optimal.

% While these parameters {\em are} possible in the shared-cryptographic-key \cite{EPRINT:OstPanSah07,C:HemOst08,public_key_ldc_w_short_keys} and resource-bounded-channel settings \cite{SCN:AmeBloBlo22,ITC:BloKulZho20}, there are no constant rate, constant error-rate tolerance, constant locality constructions.
% {Notably, constructions for relaxed \LDC s with constant error-rate tolerance, constant rate, and polylogarithmic locality $\ell = O(\polylog(k))$ exist in the shared-cryptographic-key \cite{EPRINT:OstPanSah07,C:HemOst08,public_key_ldc_w_short_keys}, resource-bounded-channel settings \cite{SCN:AmeBloBlo22,ITC:BloKulZho20}, and more recently, in the information theoretic setting with no additional assumptions \cite{STOC:KumMon24,CCC:CohYan24}.
% However, there are no constant rate, constant error-rate tolerance, constant locality relaxed \LDC \ constructions \cite{SODA:GurLac20}. }

\paragraph*{Batch Codes and PIR}
{Information-theoretic Private Information Retrieval (PIR) schemes with multiple servers can be constructed from (smooth) locally decodable codes. 
Ishai et al. \cite{STOC:IKOS04} introduce the notion of a batch code which can be used to minimize the amortized workload of any individual PIR server. 
More formally, a $(k,N,\kappa,m)$ batch code encodes a message $\bx \in \Sigma^k$ into a $m$-tuple $\by_1,\dots,\by_m \in \Sigma^*$ of total length $N$ such that for any set of indices $i_1,\dots,i_\kappa \in [k],$ the symbols $x_{i_1},\ldots, x_{i_\kappa}$ can be decoded by reading {\em at most one symbol} from each bucket.}

{Ishai et al. show how to convert a smooth \LDC \ into a batch code. 
Informally, a \LDC\ is smooth if its decoder queries any symbol of the (possibly corrupted) codeword with equal probability. 
A smooth \LDC \ can be generically transformed into a batch code, where the encoding scheme is unchanged and the decoding scheme iteratively makes oracle calls to the \LDC\ smooth decoder.}

{However, it is unclear whether a batch code can be used to construct amortized Hamming \LDC s. 
Since the definitions differ substantially e.g., batch codes are not required to be error-tolerant, we do not expect a transformation from a batch code to a \aLDC\ without significant rate or locality blowup.
Conversely, it is also unclear whether amortized \LDC s can be used to construct batch codes or whether amortized \LDC s can be used to obtain more efficient PIR schemes.
}

\end{document}